\DeclareMathOperator*{\argmin}{arg\,min}
\DeclareMathOperator*{\argmax}{arg\,max}
\theoremstyle{thmstyleone}%
\newtheorem{theorem}{Theorem}%  meant for continuous numbers
\theoremstyle{thmstyletwo}%
\newtheorem{remark}{Remark}%
\theoremstyle{thmstylethree}%
\begin{document}

\title[Compressed Sensing: A Discrete Optimization Approach]{Compressed Sensing: A Discrete Optimization Approach}

%%=============================================================%%
%% Prefix	-> \pfx{Dr}
%% GivenName	-> \fnm{Joergen W.}
%% Particle	-> \spfx{van der} -> surname prefix
%% FamilyName	-> \sur{Ploeg}
%% Suffix	-> \sfx{IV}
%% NatureName	-> \tanm{Poet Laureate} -> Title after name
%% Degrees	-> \dgr{MSc, PhD}
%% \author*[1,2]{\pfx{Dr} \fnm{Joergen W.} \spfx{van der} \sur{Ploeg} \sfx{IV} \tanm{Poet Laureate} 
%%                 \dgr{MSc, PhD}}\email{iauthor@gmail.com}
%%=============================================================%%

\author[1]{\fnm{Dimitris} \sur{Bertsimas}}\email{dbertsim@mit.edu}
\equalcont{These authors contributed equally to this work.}

\author*[2]{\fnm{Nicholas} \sur{Johnson}}\email{nagj@mit.edu}
\equalcont{These authors contributed equally to this work.}

\affil[1]{\orgdiv{Sloan School of Management}, \orgname{Massachusetts Institute of Technology}, \orgaddress{\street{100 Main Street}, \city{Cambridge}, \postcode{02142}, \state{MA}, \country{USA}}}

\affil[2]{\orgdiv{Operations Research Center}, \orgname{Massachusetts Institute of Technology}, \orgaddress{\street{1 Amherst Street}, \city{Cambridge}, \postcode{02142}, \state{MA}, \country{USA}}}

%%==================================%%
%% sample for unstructured abstract %%
%%==================================%%

\abstract{We study the Compressed Sensing (CS) problem, which is the problem of finding the most sparse vector that satisfies a set of linear measurements up to some numerical tolerance. CS is a central problem in Statistics, Operations Research and Machine Learning which arises in applications such as signal processing, data compression{\color{black}, image reconstruction, and multi-label learning}. We introduce an $\ell_2$ regularized formulation of CS which we reformulate as a mixed integer second order cone program. We derive a second order cone relaxation of this problem and show that under mild conditions on the regularization parameter, the resulting relaxation is equivalent to the well studied basis pursuit denoising problem. We present a semidefinite relaxation that strengthens the second order cone relaxation and develop a custom branch-and-bound algorithm that leverages our second order cone relaxation to solve {\color{black}small-scale }instances of CS to certifiable optimality. %Our numerical results show that our approach produces solutions that are on average $6.22\%$ more sparse than solutions returned by state of the art benchmark methods on synthetic data in minutes.
{\color{black} When compared against solutions produced by three state of the art benchmark methods on synthetic data, our numerical results show that our approach produces solutions that are on average $6.22\%$ {\color{black}more sparse.} When compared only against the experiment-wise best performing benchmark method on synthetic data, our approach produces solutions that are on average $3.10\%$ more sparse.} On real world ECG data, for a given $\ell_2$ reconstruction error our approach produces solutions that are on average $9.95\%$ more sparse than benchmark methods ($3.88\%$ more sparse if only compared against the best performing benchmark), while for a given sparsity level our approach produces solutions that have on average $10.77\%$ {\color{black}lower reconstruction error than benchmark methods} ($1.42\%$ lower error if only compared against the best performing benchmark). When used as a component of a multi-label classification algorithm, our approach achieves greater classification accuracy than benchmark compressed sensing methods. {\color{black}This improved accuracy comes at the cost of an increase in computation time by several orders of magnitude.} Thus, for applications where runtime is not of critical importance, leveraging integer optimization can yield sparser and lower error solutions to CS than existing benchmarks.}

\keywords{Sparsity; Sparse Approximation, Compressed Sensing; Convex Relaxation; Branch-and-bound}

%%\pacs[JEL Classification]{D8, H51}

%%\pacs[MSC Classification]{35A01, 65L10, 65L12, 65L20, 65L70}

\maketitle

\section{Introduction}\label{sec1}

The \textit{Compressed Sensing} (CS) problem seeks to find a most sparse vector $\bm{x} \in \mathbb{R}^n$ that is consistent with a set of $m$ linear equalities. 
CS is a fundamental problem in Statistics, Operations Research and Machine Learning which arises in numerous applications such as multi-label learning \citep{hsu2009multi}, medical resonance imaging \citep{lustig2007sparse}, holography \citep{brady2009compressive}, climate monitoring \citep{hashemi2016efficient}, natural resource mining \citep{wang2018design} and electrocardiogram signal acquisition \citep{chen2019compressed} among many others. Formally, given a matrix $\bm{A} \in \mathbb{R}^{m \times n}$ and a vector $\bm{b} \in \mathbb{R}^m$, CS is given by \citep{donoho2006compressed}:
\begin{equation}
\begin{aligned}
    \min_{\bm{x} \in \mathbb{R}^n} \Vert\bm{x}\Vert_0\
    \text{s.t.} \ \bm{A}\bm{x} = \bm{b}.
\end{aligned} \label{opt:classic_problem}
\end{equation} In the presence of noisy measurements, it is necessary to relax the equality constraint in \eqref{opt:classic_problem}, leading to the following formulation for $\epsilon > 0$:
\begin{equation}
\begin{aligned}
    \min_{\bm{x} \in \mathbb{R}^n} \Vert\bm{x}\Vert_0\
    \text{s.t.} \ \Vert\bm{A}\bm{x} - \bm{b}\Vert_2^2 \leq \epsilon.
\end{aligned} \label{opt:main_problem}
\end{equation} This problem is sometimes referred to as sparse approximation in the literature \citep{tropp2010computational} and trivially reduces to \eqref{opt:classic_problem} for $\epsilon = 0$. CS allows signals to be reconstructed surprisingly well after sampling at a rate far below the Nyquist sampling rate by leveraging the inherent sparsity of most signals, either in the signal's latent space or in an appropriately defined transform space. For example, natural images tend to have a sparse representation in the wavelet domain, speech can be represented using a small number of coefficients in the Fourier transform domain and medical images can be represented sparsely in the Radon transform domain \citep{rani2018systematic}.

In Section \ref{sec:lit_review}, we will see that the vast majority of existing approaches to CS either rely on $\ell_1$ based convex approximations to \eqref{opt:main_problem} or are greedy heuristics whereas the use of integer optimization techniques has gone relatively unexplored. In this work, we formulate CS as:
\begin{equation}
\begin{aligned}
    \min_{\bm{x} \in \mathbb{R}^n} \Vert\bm{x}\Vert_0 + \frac{1}{\gamma} \Vert\bm{x}\Vert_2^2\
    \text{s.t.} \ \Vert\bm{A}\bm{x} - \bm{b}\Vert_2^2 \leq \epsilon,
\end{aligned} \label{opt:our_formulation}
\end{equation} where $\gamma > 0$ is a regularization parameter that in practice can either take a default value (e.g. $\sqrt{n}$) or be cross-validated by minimizing a validation metric \citep[see, e.g.,][]{validation} to obtain strong out-of-sample performance \citep{bousquet2002stability}. A defining characteristic of the approach we present in this work is that we leverage techniques from integer optimization to exploit the inherent discreteness of formulation \eqref{opt:our_formulation} rather than relying on more commonly studied approximate methods. Note that Problem \eqref{opt:our_formulation} is a special case of the formulation given by:
\begin{equation}
\begin{aligned}
    \min_{\bm{x} \in \mathbb{R}^n}  \Vert\bm{x}\Vert_0 + \frac{1}{\gamma} \Vert\bm{W}\bm{x}\Vert_2^2\
    \text{s.t.} \ \Vert\bm{A}\bm{x} - \bm{b}\Vert_2^2 \leq \epsilon.
\end{aligned} \label{opt:our_formulation_W}
\end{equation} where $\bm{W} \in \mathbb{R}^{n \times n}$ is a diagonal matrix with nonnegative diagonal entries that should be interpreted as coordinate weights on the vector $\bm{x}$. Indeed, \eqref{opt:our_formulation_W} reduces to \eqref{opt:our_formulation} when we take $\bm{W} = \bm{I}$.

%In this work, we develop strong convex relaxations to \eqref{opt:our_formulation} and leverage our relaxations to develop a custom branch-and-bound algorithm that can solve \eqref{opt:our_formulation} to certifiable optimality. We show that compared to state of the art benchmark methods, our branch-and-bound algorithm produces solutions that are on average $6.22\%$ more sparse on synthetic data and on average $9.95\%$ more sparse on real world ECG data at the expense of increased computation time. Thus, for applications where runtime is not of critical importance, leveraging integer optimization can yield sparser solutions to CS than existing benchmarks.

\subsection{Contributions and Structure}
In this paper, we approach CS using mixed integer second order cone optimization. We derive a second order cone relaxation of this problem and show that under mild conditions on the regularization parameter, the resulting relaxation is equivalent to the well studied basis pursuit denoising problem. We present a semidefinite relaxation that strengthens the second order cone relaxation and develop a custom branch-and-bound algorithm that leverages our second order cone relaxation to solve instances of CS to certifiable optimality. Our numerical results show that our approach produces solutions that are on average $6.22\%$ more sparse than solutions returned by three state of the art benchmark methods on synthetic data in minutes. If we restrict the comparison {\color{black}to the best performing benchmark method on each problem instance,} our approach produces solutions that are on average $3.10\%$ more sparse. On real world ECG data, for a given $\ell_2$ reconstruction error our approach produces solutions that are on average $9.95\%$ more sparse than benchmark methods {\color{black}($3.88\%$ more sparse if only compared against the best performing benchmark)}, while for a given sparsity level our approach produces solutions that have on average $10.77\%$ lower reconstruction error than benchmark methods in minutes {\color{black}($1.42\%$ lower error if only compared against the best performing benchmark). On a real world multi-label classification task, our approach outperforms existing approaches in terms of accuracy, precision and recall. {\color{black}This increase in accuracy, precision and recall comes at the expense of a significant increase in running time of several orders of magnitude.} Thus, for applications where runtime is not of critical importance, leveraging integer optimization can yield sparser and lower error solutions to CS than existing benchmarks.}

The rest of the paper is structured as follows. In Section \ref{sec:lit_review}, we review existing formulations and solution methods of the CS problem. In Section \ref{sec:formulation_properties}, we study how our regularized formulation of CS \eqref{opt:our_formulation} connects to the commonly used formulation \eqref{opt:main_problem}. We reformulate \eqref{opt:our_formulation} exactly as a mixed integer second order cone problem in Section \ref{sec:MISOC} and present a second order cone relaxation in Section \ref{ssec:SOC} and a stronger but more computationally expensive semidefinite cone relaxation in Section \ref{ssec:SDP}. We show that our second order cone relaxation is equivalent to the Basis Pursuit Denoising problem under mild conditions offering a new interpretation of this well studied method as a convex relaxation of our mixed integer second order cone reformulation of \eqref{opt:our_formulation}. We leverage our second order cone relaxation to develop a custom branch-and-bound algorithm in Section \ref{sec:BnB} that can solve instances of \eqref{opt:our_formulation} to certifiable optimality. In Section \ref{sec:experiments}, we investigate the performance of our branch-and-bound algorithm against state of the art benchmark methods on synthetic {\color{black}data, real world ECG signal acquisition and real world multi-label classification}. 
%We find that across all experiments using synthetic data, our approach produces solutions that are on average $6.22\%$ more sparse than the solutions returned by benchmark methods at the expense of increased runtime. On real world ECG data, we find that for a given level of $\ell_2$ reconstruction error, our approach produces solutions that are on average $9.95\%$ more sparse than the solutions returned by benchmark methods. Furthermore, for a given sparsity level, our approach produces solutions that have on average $10.77\%$ lower $\ell_2$ reconstruction error than benchmark methods on the ECG data we experiment with. Finally, we find that our semidefinite relaxation produces a bound that is on average $8.92\%$ greater than the bound produced by our second order cone relaxation.

\paragraph{Notation:} We let nonbold face characters such as $b$ denote scalars, lowercase bold faced characters such as $\bm{x}$ denote vectors, uppercase bold faced characters such as $\bm{X}$ denote matrices, and calligraphic uppercase characters such as $\mathcal{Z}$ denote sets. We let $[n]$ denote the set of running indices $\{1, \ldots,  n\}$. We let $\bm{1}_n$ denote an {\color{black}$n$-dimensional} vector of all $1$'s, $\bm{0}_n$ denote an {\color{black}$n$-dimensional} vector of all $0$'s, and $\bm{I}$ denote the identity matrix. We let $\mathcal{S}^n$ denote the {\color{black}subspace} of $n \times n$ symmetric matrices and {\color{black}$\mathcal{S}^n_+ \in \mathcal{S}^n$} denote the cone of {\color{black}symmetric} $n \times n$ positive semidefinite matrices.

\section{Literature Review} \label{sec:lit_review}

In this section, we review several key approaches from the literature that have been employed to solve the CS problem. As an exhaustive literature review is outside of the scope of this paper, we focus our review on a handful of well studied approaches which will be used as benchmarks in this work. For a more detailed CS literature review, we refer the reader to \cite{tropp2010computational}. 

The majority of existing approaches to the CS problem are heuristic in nature and generally can be classified as either convex approximations or greedy methods as we will see in this section. For these methods, associated performance guarantees require making strong statistical assumptions on the underlying problem data. Integer optimization has been given little attention in the CS literature despite its powerful modelling capabilities. \cite{karahanouglu2013mixed} and \cite{bourguignon2015exact} explore formulating Problem \eqref{opt:main_problem} as a mixed integer linear program for the case when $\epsilon=0$. However this approach relies on using the big-$M$ method which requires estimating reasonable values for $M$ and cannot immediately generalize to the setting where $\epsilon > 0$.

\subsection{Basis Pursuit Denoising}

A common class of CS methods rely on solving convex approximations of \eqref{opt:main_problem} rather than attempting to solve \eqref{opt:main_problem} directly. A popular approach is to use the $\ell_1$ norm as a convex surrogate for the $\ell_0$ norm \citep{chen1994basis, chen2001atomic,donoho2006compressed,candes2006near,gill2011crowd}. This approximation is typically motivated by the observation that the unit $\ell_1$ ball given by $\mathcal{B}_{\ell_1} = \{\bm{x} \in \mathbb{R}^n: \Vert \bm{x} \Vert_1 \leq 1\}$ is the convex hull of the nonconvex set $\mathcal{X} = \{\bm{x} \in \mathbb{R}^n: \Vert \bm{x} \Vert_0 \leq 1, \Vert \bm{x} \Vert_\infty \leq 1\}$. Replacing the $\ell_0$ norm by the $\ell_1$ norm in \eqref{opt:main_problem}, we obtain:
\begin{equation}
\begin{aligned}
    \min_{\bm{x} \in \mathbb{R}^n} \Vert\bm{x}\Vert_1\
    \text{s.t.} \ \Vert\bm{A}\bm{x} - \bm{b}\Vert_2^2 \leq \epsilon.
\end{aligned} \label{opt:BPD}
\end{equation} Problem \eqref{opt:BPD} is referred to as Basis Pursuit Denoising and is a quadratically constrained convex optimization problem which can be solved efficiently using one of several off the shelf optimization packages. Basis Pursuit Denoising produces an approximate solution to Problem \eqref{opt:main_problem} by either directly returning the solution of \eqref{opt:BPD} or by post-processing the solution of \eqref{opt:BPD} to further sparsify the result. One such post-processing technique is a greedy rounding mechanism where columns of the matrix $\bm{A}$ are iteratively selected in the order corresponding to decreasing magnitude of the entries of the optimal solution of \eqref{opt:BPD} until the selected column set of $\bm{A}$ is sufficiently large to produce a feasible solution to \eqref{opt:main_problem}. Basis Pursuit Denoising is very closely related to the Lasso problem which is given by:
\begin{equation}
\begin{aligned}
    &\min_{\bm{x} \in \mathbb{R}^n} & & \Vert\bm{A}\bm{x} - \bm{b}\Vert_2^2 + \lambda \Vert\bm{x}\Vert_1,
\end{aligned} \label{opt:Lasso}
\end{equation} where $\lambda > 0$ is a tunable hyperparameter. Lasso is a statistical estimator commonly used for sparse regression as empirically, the optimal solution of Problem \eqref{opt:Lasso} tends to be sparse \citep{tibshirani1996regression}. More recently, strong connections between Lasso and robust optimization have been established \citep{bertsimas2018characterization}. Basis Pursuit Denoising and Lasso are equivalent in that Lasso is obtained by relaxing the hard constraint in \eqref{opt:BPD} and instead introducing a penalty term in the objective function. It is straightforward to show that for given input data $\bm{A}, \bm{b}$ and $\epsilon$ in \eqref{opt:BPD}, there exists a value $\lambda^\star > 0$ such that there exists a solution $\bm{x}^\star$ that is both optimal for \eqref{opt:BPD} and \eqref{opt:Lasso} when the tunable parameter takes value $\lambda = \lambda^\star$.

Note that by taking $\epsilon = 0$, Problem \eqref{opt:BPD} reduces to the well studied Basis Pursuit problem where the equality constraint $\bm{A}\bm{x}=\bm{b}$ is enforced. A large body of work studies conditions under which the optimal solution of the Basis Pursuit problem is also an optimal solution of \eqref{opt:classic_problem}. For example, see \cite{elad2002generalized}, \cite{donoho2003optimally}, \cite{gribonval2003sparse}, and \cite{tropp2004greed}. One of the most well studied conditions under which this equivalence holds is when the input matrix $\bm{A}$ satisfies the Restricted Isometry Property (RIP). Formally, a matrix $\bm{A} \in \mathbb{R}^{m \times n}$ is said to satisfy RIP of order $s$ and parameter $\delta_s \in (0, 1)$ if for every vector $\bm{x} \in \mathbb{R}^n$ such that $\Vert \bm{x} \Vert_0 \leq s$, we have
\[(1-\delta_s) \Vert \bm{x} \Vert_2^2 \leq \Vert \bm{A}\bm{x} \Vert_2^2 \leq (1+\delta_s) \Vert \bm{x} \Vert_2^2.\]
It has been established that if $\bm{A}$ satisfies RIP {\color{black}of} order $2s$ and parameter $\delta_{2s} < 1/3$, then the optimal solution of the Basis Pursuit problem is also an optimal solution of \eqref{opt:classic_problem} where $s$ denotes the cardinality of this optimal solution \citep{candes2005decoding}. While it has been shown that certain random matrices satisfy this desired RIP property with high probability \citep{baraniuk2008simple,guedon2014restricted}, RIP in general is not tractable to verify on arbitrary real world data.

\subsection{Iterative Reweighted L1} \label{sec:IRWL1}

Iterative Reweighted $\ell_1$ minimization is an iterative method that can generate an approximate solution to \eqref{opt:main_problem} by solving a sequence of convex optimization problems that are very closely related to the Basis Pursuit Denoising problem given by \eqref{opt:BPD} \citep{candes2008enhancing, needell2009noisy, asif2013fast}. This approach falls in the class of convex approximation based methods for solving CS. The approach considers the weighted $\ell_1$ minimization problem given by:
\begin{equation}
\begin{aligned}
    &\min_{\bm{x} \in \mathbb{R}^n} & & \Vert \bm{W}\bm{x}\Vert_1\\
    &\text{s.t.} & & \Vert\bm{A}\bm{x} - \bm{b}\Vert_2^2 \leq \epsilon,
\end{aligned} \label{opt:BPD_W}
\end{equation} where $\bm{W} \in \mathbb{R}^{n \times n}$ is a diagonal matrix with nonnegative diagonal entries. Each diagonal entry $W_{ii} = w_i$ of $\bm{W}$ can be interpreted as a weighting of the $i^{th}$ coordinate of the vector $\bm{x}$. Interpreting the $\ell_1$ norm as a convex surrogate for the $\ell_0$ norm, Problem \eqref{opt:BPD_W} can be viewed as a relaxation of the nonconvex problem given by 
\begin{equation}
\begin{aligned}
    &\min_{\bm{x} \in \mathbb{R}^n} & & \Vert \bm{W}\bm{x}\Vert_0\\
    &\text{s.t.} & & \Vert\bm{A}\bm{x} - \bm{b}\Vert_2^2 \leq \epsilon.
\end{aligned} \label{opt:main_problem_W}
\end{equation} It is trivial to verify that when $\bm{W} = \alpha \bm{I}$, where $\alpha > 0$ and $\bm{I}$ is the $n$-by-$n$ identity matrix, \eqref{opt:main_problem_W} and \eqref{opt:BPD_W} reduce exactly to \eqref{opt:main_problem} and \eqref{opt:BPD} respectively. Assuming the weights never vanish, the nonconvex Problems \eqref{opt:main_problem} and \eqref{opt:main_problem_W} have the same optimal solution, yet their convex relaxations \eqref{opt:BPD} and \eqref{opt:BPD_W} will generally have very different solutions. In this regard, the weights can be regarded as parameters that if chosen correctly can produce a better solution than \eqref{opt:BPD}. Iterative Reweighted $\ell_1$ minimization proceeds as follows \citep{candes2008enhancing}:
\begin{enumerate}
    \item Initialize the iteration count $t \xleftarrow[]{} 0$ and the weights $w_{i}^{(0)} \xleftarrow[]{} 1$.
    \item Solve \eqref{opt:BPD_W} with $\bm{W} = \bm{W}^{(t)}$. Let $\bm{x}^{(t)}$ denote the optimal solution.
    \item Update the weights as $w_i^{(t+1)} \xleftarrow[]{} \frac{1}{\vert x_i^{(t)} \vert + \delta}$ where $\delta > 0$ is a fixed parameter for numerical stability. 
    \item Terminate if $t$ reaches a maximum number of iterations or if the iterates $\bm{x}^{(t)}$ have converged. Otherwise, increment $t$ and return to Step $2$.
\end{enumerate} {\color{black}It has been shown} empirically that in many settings the solution returned by Iterated Reweighted $\ell_1$ minimization outperforms the solution returned by Basis Pursuit Denoising by recovering the true underlying signal while requiring fewer measurements to be taken {\color{black}\cite{candes2008enhancing}}. We note that this approach is an instance of a broader class of sparsifying iterative reweighted methods \citep{chen2010convergence, wang2021nonconvex, wang2022extrapolated}.

\subsection{Orthogonal Matching Pursuit} \label{sec:OMP}

Orthogonal Matching Pursuit (OMP) is a canonical greedy algorithm for obtaining heuristic solutions to \eqref{opt:main_problem} \citep{pati1993orthogonal, mallat1993matching}. Solving Problem \eqref{opt:main_problem} can be interpreted as determining the minimum number of columns from the input matrix $\bm{A}$ that must be selected such that the residual of the projection of the input vector $\bm{b}$ onto the span of the selected columns has $\ell_2$ norm equal to at most $\sqrt{\epsilon}$. The OMP algorithm proceeds by first selecting the column of $\bm{A}$ that is most collinear with $\bm{b}$ and subsequently iteratively adding the column of $\bm{A}$ that is most collinear with the residual of the projection of $\bm{b}$ onto the subspace spanned by the selected columns until the norm of this residual is at most $\sqrt{\epsilon}$. Concretely, OMP proceeds as follows where for an arbitrary collection of indices $\mathcal{I}_t \subseteq [n]$, we let $\bm{A}(\mathcal{I}_t) \in \mathbb{R}^{m \times \vert  \mathcal{I}_t \vert}$ denote the matrix obtained by stacking the $\vert \mathcal{I}_t \vert$ columns of $\bm{A}$ corresponding to the indices in the set $\mathcal{I}_t$:
\begin{enumerate}
    \item Initialize the iteration count $t \xleftarrow[]{} 0$, the residual $\bm{r}_0 \xleftarrow[]{} \bm{b}$ and the index set $\mathcal{I}_0 \xleftarrow[]{} \emptyset$.
    \item Select the column that is most collinear with the residual $i_t \xleftarrow[]{} \argmax_{i \in [n]\setminus \mathcal{I}_t} \vert \bm{a}_i^T\bm{r}_t \vert$ and update the index set $\mathcal{I}_{t+1} \xleftarrow[]{} \mathcal{I}_t \cup i_t$.
    \item Compute the projection of $\bm{b}$ onto the current set of columns \[\bm{x}_{t+1} \xleftarrow[]{} \big{[}\bm{A}(\mathcal{I}_{t+1})^T\bm{A}(\mathcal{I}_{t+1})\big{]}^\dagger\bm{A}(\mathcal{I}_{t+1})^T\bm{b},\] and update the residual $\bm{r}_{t+1} \xleftarrow[]{} \bm{b} - \bm{A}(\mathcal{I}_{t+1})\bm{x}_{t+1}$.
    \item Terminate if $\Vert \bm{r}_{t+1} \Vert_2^2 \leq \epsilon$, otherwise increment $t$ and return to Step 2.
\end{enumerate} Conditions under which the solution returned by OMP is the optimal solution of \eqref{opt:main_problem} (either with high probability or with certainty) have been studied extensively \citep{tropp2004greed, cai2011orthogonal, wang2015support}. Unfortunately, these conditions suffer from the same limitation as RIP in that in general they are not tractable to verify on real world data. A closely related method to OMP is Subspace Pursuit (SP) which is another greedy algorithm for obtaining a heuristic solution to \eqref{opt:main_problem} in the $\epsilon=0$ setting but has the additional requirement that a target sparsity value $K$ must be specified in advance \citep{dai2009subspace}. SP is initialized by selecting the $K$ columns of $\bm{A}$ that are most collinear with the vector $\bm{b}$. At each iteration, SP first computes the residual of the projection of $\bm{b}$ onto the current column set and then greedily updates up to $K$ elements of the column set, repeating this process until doing so no longer decreases the norm of the residual.

\section{Formulation Properties} \label{sec:formulation_properties}

In this section, we rigorously investigate connections between formulations \eqref{opt:our_formulation} and \eqref{opt:main_problem} for the CS problem in the noisy setting. The only difference between formulations \eqref{opt:main_problem} and \eqref{opt:our_formulation} is the inclusion of a $\ell_2$ regularization term in the objective function in \eqref{opt:our_formulation}. We will see in Section \ref{sec:MISOC} that the presence of this regularization term facilitates useful reformulations. Moreover, in the case of regression, \cite{bertsimas2018characterization} show that augmenting the ordinary least squares objective function with a $\ell_2$ regularization penalty produces regression vectors that are robust against data perturbations which suggests the presence of such a regularization term may result in a similar benefit in \eqref{opt:our_formulation}. A natural question to ask is: under what conditions do problems \eqref{opt:main_problem} and \eqref{opt:our_formulation} have the same solution? We answer this question in Theorem \ref{thm:large_gamma}.

\begin{theorem} \label{thm:large_gamma}

There exists a finite value $\gamma_0 < \infty$ such that for all $\Bar{\gamma} \geq \gamma_0$, there exists a vector $\bm{x}^\star$ such that $\bm{x}^\star$ is an optimal solution of \eqref{opt:main_problem} and also an optimal solution of \eqref{opt:our_formulation} where we set $\gamma = \Bar{\gamma}$. Letting $\Tilde{\bm{x}}$ denote a minimum norm solution to \eqref{opt:main_problem}, we can take $\gamma_0 = \Vert \Tilde{\bm{x}} \Vert_2^2$ and $\bm{x}^\star = \Tilde{\bm{x}}$. \end{theorem}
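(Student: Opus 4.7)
The plan is to show that the minimum-$\ell_2$-norm optimal solution $\tilde{\bm{x}}$ of \eqref{opt:main_problem} remains optimal when we add the $\frac{1}{\gamma}\|\bm{x}\|_2^2$ penalty, provided $\gamma$ is large enough to dominate the possible norm drop from switching to a denser competitor. Setting $k^\star = \|\tilde{\bm{x}}\|_0$ (the optimal value of \eqref{opt:main_problem}) and noting that the two problems share the same feasible set, I would compare the objective value of $\tilde{\bm{x}}$ in \eqref{opt:our_formulation} against that of an arbitrary feasible candidate $\bm{x}'$ and split into two cases according to how $\|\bm{x}'\|_0$ compares with $k^\star$.

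First I would handle the case $\|\bm{x}'\|_0 = k^\star$. Here $\bm{x}'$ is itself optimal for \eqref{opt:main_problem}, so by the defining property of $\tilde{\bm{x}}$ as a minimum-norm such solution we have $\|\bm{x}'\|_2 \geq \|\tilde{\bm{x}}\|_2$, and the regularized objective at $\bm{x}'$ is at least that at $\tilde{\bm{x}}$ regardless of $\gamma$. Next I would handle $\|\bm{x}'\|_0 > k^\star$, in which case the $\ell_0$ term jumps by at least $1$ while the penalty term can decrease by at most $\frac{1}{\gamma}\|\tilde{\bm{x}}\|_2^2$ (since $\|\bm{x}'\|_2^2 \geq 0$). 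Choosing $\gamma \geq \|\tilde{\bm{x}}\|_2^2$ bounds this decrease by $1$, so the objective again cannot strictly improve on $\tilde{\bm{x}}$. The case $\|\bm{x}'\|_0 < k^\star$ is excluded because it would contradict optimality of $\tilde{\bm{x}}$ in \eqref{opt:main_problem}. This establishes both claims: $\gamma_0 = \|\tilde{\bm{x}}\|_2^2$ is a valid finite threshold, and $\bm{x}^\star = \tilde{\bm{x}}$ is a common optimizer.

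There is essentially no technical obstacle here; the argument is a two-line case split once the right quantity $\|\tilde{\bm{x}}\|_2^2$ has been identified. The only mild point worth addressing is that a minimum-$\ell_2$-norm optimal solution $\tilde{\bm{x}}$ actually exists and is finite: over each of the finitely many supports of size $k^\star$ consistent with the feasible set, the restriction of the problem is a convex quadratic feasibility problem whose minimum-norm element is well defined, and we take the best among these, which is finite whenever \eqref{opt:main_problem} is feasible. With this small existence remark in place the above case analysis is self-contained and yields the stated theorem.
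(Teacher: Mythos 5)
Your argument is correct and is essentially identical to the paper's own proof: the same case split on whether the competitor is itself optimal for \eqref{opt:main_problem} (norm minimality handles that case) or strictly denser (the unit jump in $\Vert\cdot\Vert_0$ absorbs the at-most-$\frac{1}{\gamma}\Vert\Tilde{\bm{x}}\Vert_2^2\leq 1$ penalty drop). The only difference is your added remark on the existence of a minimum-norm optimal solution, which the paper leaves implicit.
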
 Phrased simply, Theorem \ref{thm:large_gamma} establishes that there exists a finite value $\gamma_0$ such that if the regularization parameter $\gamma$ in problem \eqref{opt:our_formulation} is at least as large as $\gamma_0$, then there is a vector $\bm{x}^\star$ that is optimal to both problems \eqref{opt:main_problem} and \eqref{opt:our_formulation}. We note that this finite value $\gamma_0$ depends on the input data $\bm{A}, \bm{b}$ and $\epsilon$.

\begin{proof}
Consider any matrix $\bm{A} \in \mathbb{R}^{m \times n}$, vector $\bm{b} \in \mathbb{R}^m$ and scalar $\epsilon > 0$. Let $\Omega$ denote the set of optimal solutions to \eqref{opt:main_problem} and let $\mathcal{X}$ denote the feasible set of \eqref{opt:main_problem} and \eqref{opt:our_formulation}. We have $\mathcal{X} = \{\bm{x} : \Vert \bm{A}\bm{x}-\bm{b}\Vert_2^2 \leq \epsilon \}$ and $\Omega \subseteq \mathcal{X}$. Let 
$\Tilde{\bm{x}} \in \argmin_{\bm{x} \in \Omega} \Vert \bm{x} \Vert_2^2$ and let $\gamma_0 = \Vert \Tilde{\bm{x}} \Vert_2^2$. Since $\Tilde{\bm{x}} \in \Omega$, $\Tilde{\bm{x}}$ is an optimal solution to \eqref{opt:main_problem}. It remains to show that $\Tilde{\bm{x}}$ is optimal to \eqref{opt:our_formulation} for all $\gamma \geq \gamma_0$.

Fix any $\gamma \geq \gamma_0$. To show that $\Tilde{\bm{x}}$ is an optimal solution of \eqref{opt:our_formulation}, we will show that for all $\Bar{\bm{x}} \in \mathcal{X}$, we have
\[\Vert \Tilde{\bm{x}} \Vert_0 + \frac{1}{\gamma} \Vert \Tilde{\bm{x}} \Vert_2^2 \leq \Vert \Bar{\bm{x}} \Vert_0 + \frac{1}{\gamma} \Vert \Bar{\bm{x}} \Vert_2^2.\] Fix an arbitrary $\Bar{\bm{x}} \in \mathcal{X}$. Either $\Bar{\bm{x}} \in \mathcal{X} \setminus \Omega$ or $\Bar{\bm{x}} \in \Omega$. Suppose $\Bar{\bm{x}} \in \mathcal{X} \setminus \Omega$. The definition of $\Omega$ and the fact that $\Tilde{\bm{x}} \in \Omega$ implies 
\[\Vert \Tilde{\bm{x}} \Vert_0 < \Vert \Bar{\bm{x}} \Vert_0 \implies \Vert \Tilde{\bm{x}} \Vert_0 + 1 \leq \Vert \Bar{\bm{x}} \Vert_0.\] Next, note that since $\gamma \geq \gamma_0 = \Vert \Tilde{\bm{x}} \Vert_2^2$, we have 
\[\Vert \Tilde{\bm{x}} \Vert_0 + \frac{1}{\gamma} \Vert \Tilde{\bm{x}} \Vert_2^2 \leq \Vert \Tilde{\bm{x}} \Vert_0 + 1 \leq \Vert \Bar{\bm{x}} \Vert_0 \leq \Vert \Bar{\bm{x}} \Vert_0 + \frac{1}{\gamma} \Vert \Bar{\bm{x}} \Vert_2^2.\] Suppose instead that $\Bar{\bm{x}} \in \Omega$. The definition of $\Omega$ and $\Tilde{\bm{x}}$ imply $\Vert \Tilde{\bm{x}} \Vert_0 = \Vert \Bar{\bm{x}} \Vert_0$ and $\Vert \Tilde{\bm{x}} \Vert_2^2 \leq \Vert \Bar{\bm{x}} \Vert_2^2$. It then follows immediately that $\Vert \Tilde{\bm{x}} \Vert_0 + \frac{1}{\gamma} \Vert \Tilde{\bm{x}} \Vert_2^2 \leq \Vert \Bar{\bm{x}} \Vert_0 + \frac{1}{\gamma} \Vert \Bar{\bm{x}} \Vert_2^2$. Thus, $\Tilde{\bm{x}}$ is optimal to \eqref{opt:our_formulation}. This completes the proof. \end{proof}
Though Theorem \ref{thm:large_gamma} is useful in establishing conditions for the equivalence of problems \eqref{opt:main_problem} and \eqref{opt:our_formulation}, it is important to note that computing the value of $\gamma_0$ specified in the Theorem requires solving \eqref{opt:main_problem} which is difficult in general. Suppose we are solving problem \eqref{opt:our_formulation} with some regularization parameter $\gamma$ in the regime where $0 < \gamma < \gamma_0$. A natural question to ask is: how well does the solution of \eqref{opt:our_formulation} approximate the solution of \eqref{opt:main_problem}. We answer this question in Theorem \ref{thm:small_gamma}.  

\begin{theorem} \label{thm:small_gamma}

Let $\Tilde{\bm{x}}$ and $\gamma_0$ be as defined in Theorem \ref{thm:large_gamma}, and let $\mathcal{X}$ denote the feasible set of \eqref{opt:main_problem} and \eqref{opt:our_formulation}. Specifically, $\Tilde{\bm{x}}$ denotes a minimum norm solution to \eqref{opt:main_problem}, $\gamma_0 = \Vert \Tilde{\bm{x}} \Vert_2^2$ and $\mathcal{X} = \{\bm{x} : \Vert \bm{A}\bm{x}-\bm{b}\Vert_2^2 \leq \epsilon \}$. Let $\lambda_\epsilon > 0$ be a value such that \[\argmin_{\bm{x} \in \mathcal{X}} \Vert \bm{x} \Vert_2^2 = \argmin_{\bm{x}} \Vert \bm{A}\bm{x}-\bm{b}\Vert_2^2 + \lambda_\epsilon \Vert \bm{x} \Vert_2^2.\] Fix any value $\gamma$ with $0 < \gamma < \gamma_0$. Suppose $\Bar{\bm{x}}$ is an optimal solution to \eqref{opt:our_formulation}. Then we have
\[ \Vert \Tilde{\bm{x}} \Vert_0 \leq \Vert \Bar{\bm{x}} \Vert_0 \leq \Vert \Tilde{\bm{x}} \Vert_0 + \frac{1}{\gamma}\bigg{(} \Vert \Tilde{\bm{x}} \Vert_2^2 - \Big{\Vert} \Big{(}\frac{1}{\lambda_\epsilon}\bm{I}+\bm{A}^T \bm{A}\Big{)}^{-1}\bm{A}^T\bm{b} \Big{\Vert}_2^2 \bigg{)}.\] \end{theorem}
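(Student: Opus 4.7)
The plan is to establish the two halves of the displayed chain of inequalities separately, relying only on the feasibility of $\Bar{\bm{x}}$ for \eqref{opt:our_formulation} together with two elementary observations: (i) by Theorem \ref{thm:large_gamma}, $\Tilde{\bm{x}}$ attains the minimum $\ell_0$ norm over the feasible set $\mathcal{X}$ (since it is an optimal solution of \eqref{opt:main_problem}); and (ii) the minimum $\ell_2$ norm element of $\mathcal{X}$ admits the closed-form regularized least squares expression associated with the multiplier $\lambda_\epsilon$ in the hypothesis.

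For the lower bound $\Vert \Tilde{\bm{x}} \Vert_0 \leq \Vert \Bar{\bm{x}} \Vert_0$, I would note that $\Bar{\bm{x}}$ is feasible for \eqref{opt:our_formulation} and therefore lies in $\mathcal{X}$. Because $\Tilde{\bm{x}}$ is an optimal solution of \eqref{opt:main_problem}, no point of $\mathcal{X}$ can have strictly smaller $\ell_0$ norm than $\Tilde{\bm{x}}$, which immediately yields the lower bound.

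For the upper bound, I would exploit the fact that $\Bar{\bm{x}}$ is optimal for \eqref{opt:our_formulation} while $\Tilde{\bm{x}}$ is merely feasible for it, producing
\[\Vert \Bar{\bm{x}} \Vert_0 + \frac{1}{\gamma} \Vert \Bar{\bm{x}} \Vert_2^2 \;\leq\; \Vert \Tilde{\bm{x}} \Vert_0 + \frac{1}{\gamma} \Vert \Tilde{\bm{x}} \Vert_2^2,\]
which I would rearrange into
\[\Vert \Bar{\bm{x}} \Vert_0 \;\leq\; \Vert \Tilde{\bm{x}} \Vert_0 + \frac{1}{\gamma}\bigl(\Vert \Tilde{\bm{x}} \Vert_2^2 - \Vert \Bar{\bm{x}} \Vert_2^2\bigr).\]
To complete the bound I need a data-dependent lower bound on $\Vert \Bar{\bm{x}} \Vert_2^2$. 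Since $\Bar{\bm{x}} \in \mathcal{X}$, one has $\Vert \Bar{\bm{x}} \Vert_2^2 \geq \min_{\bm{x} \in \mathcal{X}} \Vert \bm{x} \Vert_2^2$, and by the Lagrangian equivalence that defines $\lambda_\epsilon$ this minimum is attained at the unique stationary point of the strongly convex unconstrained regularized least squares objective, namely $\left(\tfrac{1}{\lambda_\epsilon}\bm{I}+\bm{A}^T \bm{A}\right)^{-1}\bm{A}^T\bm{b}$. Substituting this closed form back into the rearranged inequality yields the stated upper bound.

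The only delicate step is verifying that the closed-form minimizer of $\Vert \bm{x} \Vert_2^2$ over $\mathcal{X}$ coincides with the specific matrix expression appearing in the theorem; once the normalization of $\lambda_\epsilon$ is interpreted consistently with the first-order optimality condition on the unconstrained penalized objective, the remainder of the argument is a pure scalar manipulation. The proof does not require any additional structure on $\bm{A}$ beyond what is already implicit in the well-posedness of the Ridge-type inverse and the nontriviality $\lambda_\epsilon > 0$.
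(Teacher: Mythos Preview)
Your proposal is correct and follows essentially the same approach as the paper's proof: both obtain the lower bound from the optimality of $\Tilde{\bm{x}}$ for \eqref{opt:main_problem}, obtain the upper bound by comparing $\Bar{\bm{x}}$ with the feasible point $\Tilde{\bm{x}}$ in \eqref{opt:our_formulation}, and then lower-bound $\Vert \Bar{\bm{x}} \Vert_2^2$ by $\min_{\bm{x}\in\mathcal{X}}\Vert \bm{x}\Vert_2^2$ expressed via the ridge-regression closed form. The only difference is that the paper writes out the KKT conditions explicitly to justify the constrained/ridge equivalence, whereas you invoke it directly from the hypothesis defining $\lambda_\epsilon$.
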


\begin{proof}
Fix any value $\gamma$ with $0 < \gamma < \gamma_0$ and consider any optimal solution $\Bar{\bm{x}}$ to \eqref{opt:our_formulation}. The inequality $\Vert \Tilde{\bm{x}} \Vert_0 \leq \Vert \Bar{\bm{x}} \Vert_0$ follows immediately from the optimality of $\Tilde{\bm{x}}$ in \eqref{opt:main_problem}. By the optimality of $\Bar{\bm{x}}$, we must have \[\Vert \Bar{\bm{x}} \Vert_0 + \frac{1}{\gamma} \Vert \Bar{\bm{x}} \Vert_2^2 \leq \Vert \Tilde{\bm{x}} \Vert_0 + \frac{1}{\gamma} \Vert \Tilde{\bm{x}} \Vert_2^2 \implies \Vert \Bar{\bm{x}} \Vert_0 \leq \Vert \Tilde{\bm{x}} \Vert_0 +\frac{1}{\gamma} (\Vert \Tilde{\bm{x}} \Vert_2^2 - \Vert \Bar{\bm{x}} \Vert_2^2).\] Thus, to establish the result we need only derive an upper bound for the term $(\Vert \Tilde{\bm{x}} \Vert_2^2 - \Vert \Bar{\bm{x}} \Vert_2^2)$, or equivalently to derive a lower bound for the term $\Vert \Bar{\bm{x}} \Vert_2^2$. Since $\Bar{\bm{x}} \in \mathcal{X}$, such a lower bound can be obtained by solving the optimization problem given by 
\begin{equation}
\begin{aligned}
    &\min_{\bm{x} \in \mathbb{R}^n} & & \Vert \bm{x} \Vert_2^2\\
    &\text{s.t.} & & \bm{x} \in \mathcal{X} = \{\bm{x}: \Vert\bm{A}\bm{x} - \bm{b}\Vert_2^2 \leq \epsilon\}.
\end{aligned} \label{opt:thm2_lb}
\end{equation}
This optimization problem has the same optimal solution as the ridge regression problem given by 
\begin{equation}
\begin{aligned}
    &\min_{\bm{x} \in \mathbb{R}^n} \Vert \bm{A}\bm{x}-\bm{b}\Vert_2^2 + \lambda_\epsilon \Vert \bm{x} \Vert_2^2.
\end{aligned} \label{opt:ridge}
\end{equation}
for some value $\lambda_\epsilon > 0$. To see this, we form the Lagrangian for \eqref{opt:thm2_lb} $L(\bm{x}, \mu) = \Vert \bm{x} \Vert_2^2 + \mu (\Vert \bm{A}\bm{x}-\bm{b}\Vert_2^2 - \epsilon)$ and observe that the KKT conditions for $(\bm{x}, \mu) \in \mathbb{R}^n \times \mathbb{R}$ are given by
\begin{enumerate}
    \item $\Vert \bm{A}\bm{x}-\bm{b}\Vert_2^2 \leq \epsilon$;
    \item $\mu \geq 0$;
    \item $\mu (\Vert \bm{A}\bm{x}-\bm{b}\Vert_2^2 - \epsilon) = 0 \implies \mu = 0$ or $\Vert \bm{A}\bm{x}-\bm{b}\Vert_2^2 = \epsilon$;
    \item $\nabla_{\bm{x}}L(\bm{x}, \mu) = \bm{0} \implies \bm{x} = (\frac{1}{\mu}\bm{I}+\bm{A}^T\bm{A})^{-1}\bm{A}^T\bm{b}$ if $\mu \neq 0$ and $\bm{x} = 0$ if $\mu = 0$.
\end{enumerate} We note that if $\bm{0} \in \mathcal{X}$, then $\bm{0}$ is trivially an optimal solution to \eqref{opt:thm2_lb} with optimal value given by $0$. This corresponds to the degenerate case. In the nondegenerate case, we have $\bm{0} \notin \mathcal{X}$. This condition, coupled with the first and fourth KKT conditions implies that at optimality, we have $\mu \neq 0$ and $\bm{x} = (\frac{1}{\mu}\bm{I}+\bm{A}^T\bm{A})^{-1}\bm{A}^T\bm{b}$. Next, we note that the unconstrained quadratic optimization problem given by \eqref{opt:ridge} has an optimal solution $\bm{x}^\star$ given by $\bm{x}^\star = (\lambda_\epsilon\bm{I}+\bm{A}^T\bm{A})^{-1}\bm{A}^T\bm{b}$. Finally, we observe that the two preceding expressions are the same when $\lambda_\epsilon = \frac{1}{\mu} > 0$. Thus, we have \[\Vert \Bar{\bm{x}} \Vert_2^2 \geq \min_{\bm{x} \in \mathcal{X}} \Vert \bm{x} \Vert_2^2 = \Big{\Vert} \Big{(}\frac{1}{\lambda_\epsilon}\bm{I}+\bm{A}^T \bm{A}\Big{)}^{-1}\bm{A}^T\bm{b} \Big{\Vert}_2^2,\] which implies that \[ \Vert \Bar{\bm{x}} \Vert_0 \leq \Vert \Tilde{\bm{x}} \Vert_0 + \frac{1}{\gamma}\bigg{(} \Vert \Tilde{\bm{x}} \Vert_2^2 - \Big{\Vert} \Big{(}\frac{1}{\lambda_\epsilon}\bm{I}+\bm{A}^T \bm{A}\Big{)}^{-1}\bm{A}^T\bm{b} \Big{\Vert}_2^2 \bigg{)}.\] This completes the proof. \end{proof}
\begin{remark}
Though the statement of Theorem \ref{thm:small_gamma} is made for any fixed $\gamma$ satisfying $0 < \gamma < \gamma_0$ with $\gamma_0$ given by Theorem \ref{thm:large_gamma}, we note that the proof of Theorem \ref{thm:small_gamma} in fact generalizes to any $\gamma > 0$. This implies that the result of Theorem \ref{thm:large_gamma} holds for any $\gamma_0'$ satisfying $\gamma_0' > \bigg{(} \Vert \Tilde{\bm{x}} \Vert_2^2 - \Big{\Vert} \Big{(}\frac{1}{\lambda_\epsilon}\bm{I}+\bm{A}^T \bm{A}\Big{)}^{-1}\bm{A}^T\bm{b} \Big{\Vert}_2^2 \bigg{)}$. This is a stronger condition than the one established by Theorem \ref{thm:large_gamma} but has the drawback of depending on the value $\lambda_\epsilon$ which in general cannot be computed easily. 
\end{remark}
\noindent
Theorem \ref{thm:small_gamma} provides a worst case guarantee on the sparsity of the solution of \eqref{opt:our_formulation} when the regularization parameter $\gamma$ satisfies $0 < \gamma < \gamma_0$.

\section{An Exact Reformulation and Convex Relaxations} \label{sec:MISOC}

In this section, we reformulate \eqref{opt:our_formulation_W} as a mixed integer second order cone optimization problem. We then employ the perspective relaxation \citep{gunluk2012perspective} to construct a second order cone relaxation for \eqref{opt:our_formulation_W} and demonstrate that under certain conditions on the regularization parameter $\gamma$, the resulting relaxation is equivalent to the Weighted Basis Pursuit Denoising problem given by \eqref{opt:BPD_W}. As a special case, we obtain a convex relaxation for \eqref{opt:our_formulation} and demonstrate that it is equivalent to \eqref{opt:BPD} under the same conditions on $\gamma$. Finally, we present a family of semidefinite relaxations to \eqref{opt:our_formulation_W} using techniques from polynomial optimization.

To model the sparsity of the vector $\bm{x}$ in \eqref{opt:our_formulation_W}, we introduce binary variables $\bm{z} \in \{0, 1\}^{n}$ and require that $x_i = z_ix_i$. This gives the following reformulation of \eqref{opt:our_formulation_W}:

\begin{equation}
\begin{aligned}
    &\min_{\bm{z}, \bm{x} \in \mathbb{R}^n} & & \sum_{i=1}^n z_i + \frac{1}{\gamma} \sum_{i=1}^nw_i^2x_i^2\\
    &\text{s.t.} & & \Vert\bm{A}\bm{x} - \bm{b}\Vert_2^2 \leq \epsilon, \ x_i = z_ix_i \ \forall \ i,\ z_i \in \{0, 1\} \ \forall \ i.
\end{aligned} \label{opt:nonlinear}
\end{equation} The constraints $x_i = z_ix_i$ in \eqref{opt:nonlinear} are  nonconvex in the decision variables $(\bm{x}, \bm{z})$. To deal with these constraints, we make use of the perspective reformulation \citep{gunluk2012perspective}. Specifically, we introduce non-negative variables $\bm{\theta} \in \mathbb{R}_+^{n}$ where $\theta_i$ models $x_i^2$ and introduce the constraints $\theta_iz_i \geq x_i^2$, which are second order cone representable. Thus, if $z_i=0$, we will have $x_i = 0$. This results in the following reformulation of \eqref{opt:nonlinear}:

\begin{equation}
\begin{aligned}
    &\min_{\bm{z}, \bm{x}, \bm{\theta} \in \mathbb{R}^n} & & \sum_{i=1}^n z_i + \frac{1}{\gamma} \sum_{i=1}^nw_i^2\theta_i\\
    &\text{s.t.} & & \Vert\bm{A}\bm{x} - \bm{b}\Vert_2^2 \leq \epsilon, \ x_i^2 \leq z_i\theta_i \ \forall \ i,\\
    & & & z_i \in \{0, 1\} \ \forall \ i, \ \theta_i \geq 0 \ \forall \ i.
\end{aligned} \label{opt:MISOC_W}
\end{equation}

\begin{theorem}
The mixed integer second order cone problem given by \eqref{opt:MISOC_W} is an exact reformulation of \eqref{opt:our_formulation_W}.
\end{theorem}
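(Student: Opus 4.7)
The plan is to prove the two problems share the same optimal objective value and that their optimal solutions are in bijection via a simple map. I would argue this by exhibiting two opposing inequalities between the optimal values, using the construction $z_i = \mathbb{1}\{x_i \neq 0\}$, $\theta_i = x_i^2$ in one direction and simply reading off the $x$ component in the other.

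First I would show that the optimal value of \eqref{opt:MISOC_W} is no larger than that of \eqref{opt:our_formulation_W}. Given any feasible $\bm{x}$ for \eqref{opt:our_formulation_W}, set $z_i = 1$ if $x_i \neq 0$ and $z_i = 0$ otherwise, and set $\theta_i = x_i^2$. Then $\theta_i \geq 0$, $z_i \in \{0,1\}$, and the constraint $x_i^2 \leq z_i\theta_i$ holds in both cases (trivially when $x_i = 0$, and with equality when $x_i \neq 0$). The feasibility constraint $\|\bm{A}\bm{x}-\bm{b}\|_2^2 \leq \epsilon$ is shared by both problems. Under this construction $\sum_i z_i = \|\bm{x}\|_0$ and $\sum_i w_i^2 \theta_i = \sum_i w_i^2 x_i^2 = \|\bm{W}\bm{x}\|_2^2$, so the induced objective in \eqref{opt:MISOC_W} exactly equals that of \eqref{opt:our_formulation_W} at $\bm{x}$.

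Next I would establish the reverse inequality. Take any feasible triple $(\bm{z},\bm{x},\bm{\theta})$ for \eqref{opt:MISOC_W}. The constraint $x_i^2 \leq z_i\theta_i$ combined with $z_i \in \{0,1\}$ forces $x_i = 0$ whenever $z_i = 0$, hence $\|\bm{x}\|_0 \leq \sum_i z_i$. Moreover, the same constraint yields $\theta_i \geq x_i^2$ (when $z_i = 1$ directly, and when $z_i = 0$ because both sides collapse to values bounded below by $0$ and $0$ respectively). Consequently
\begin{equation*}
\sum_{i=1}^n z_i + \frac{1}{\gamma}\sum_{i=1}^n w_i^2\theta_i \;\geq\; \|\bm{x}\|_0 + \frac{1}{\gamma}\|\bm{W}\bm{x}\|_2^2,
\end{equation*}
while $\bm{x}$ itself is feasible for \eqref{opt:our_formulation_W}. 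Combining the two directions gives equality of optimal values, and the explicit construction above shows how to recover an optimal solution to one problem from an optimal solution to the other.

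There is no real obstacle here; the argument is essentially a routine verification of the perspective reformulation. The only subtlety worth flagging is the case $z_i = 0$: one must notice that the conic inequality $x_i^2 \leq z_i\theta_i$ together with integrality collapses to $x_i = 0$, which simultaneously anchors both the cardinality count and the quadratic surrogate bound. Everything else follows from matching coordinates and using that $\bm{W}$ is diagonal with entries $w_i$, so $\|\bm{W}\bm{x}\|_2^2 = \sum_i w_i^2 x_i^2$.
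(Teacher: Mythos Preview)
Your proposal is correct and follows essentially the same approach as the paper: both directions use the construction $z_i = \mathbbm{1}\{x_i \neq 0\}$, $\theta_i = x_i^2$ to map into \eqref{opt:MISOC_W}, and in the reverse direction both observe that $z_i \in \{0,1\}$ together with $x_i^2 \leq z_i\theta_i$ forces $z_i \geq \mathbbm{1}\{x_i \neq 0\}$ and $\theta_i \geq x_i^2$, yielding the objective inequality. The only cosmetic difference is that the paper phrases the argument as mapping feasible points to feasible points, whereas you phrase it as comparing optimal values; the underlying verification is identical.
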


\begin{proof}
We show that given a feasible solution to \eqref{opt:our_formulation_W}, we can construct a feasible solution to \eqref{opt:MISOC_W} that achieves the same objective value and vice versa.

Consider an arbitrary solution $\Bar{\bm{x}}$ to \eqref{opt:our_formulation_W}. Let $\Bar{\bm{z}} \in \mathbb{R}^n$ be the binary vector obtained by setting $\Bar{z}_i = \mathbbm{1}\{\Bar{x}_i \neq 0\}$ and let $\Bar{\bm{\theta}} \in \mathbb{R}^n$ be the vector obtained by setting $\Bar{\theta}_i = \Bar{x}_i^2$. We have $\Vert\bm{A}\Bar{\bm{x}} - \bm{b}\Vert_2^2 \leq \epsilon$, $\Bar{z}_i\Bar{\theta}_i = \mathbbm{1}\{\Bar{x}_i \neq 0\} \cdot \Bar{x}_i^2 = \Bar{x}_i^2$, $\Bar{\bm{z}} \in \{0, 1\}^n$ and $\Bar{\theta}_i \geq 0$ so the solution $(\Bar{\bm{x}}, \Bar{\bm{z}}, \Bar{\bm{\theta}})$ is feasible to \eqref{opt:MISOC_W}. Lastly, notice that we have \[\sum_{i=1}^n \Bar{z}_i + \frac{1}{\gamma} \sum_{i=1}^nw_i^2\Bar{\theta}_i = \sum_{i=1}^n \mathbbm{1}\{\Bar{x}_i \neq 0\} + \frac{1}{\gamma} \sum_{i=1}^nw_i^2\Bar{x}_i^2 = \Vert \Bar{\bm{x}} \Vert_0 + \frac{1}{\gamma} \Vert \bm{W} \Bar{\bm{x}} \Vert_2^2{\color{black},}\] {\color{black}where $\bm{W} = \text{diag}(w_1, \ldots, w_n)$.} Thus, the solution $(\Bar{\bm{x}}, \Bar{\bm{z}}, \Bar{\bm{\theta}})$ is a feasible solution to \eqref{opt:MISOC_W} that achieves the same objective value as $\Bar{\bm{x}}$ does in \eqref{opt:our_formulation_W}.

Consider now an arbitrary solution $(\Bar{\bm{x}}, \Bar{\bm{z}}, \Bar{\bm{\theta}})$ to \eqref{opt:MISOC_W}. Since we have $\Vert\bm{A}\Bar{\bm{x}} - \bm{b}\Vert_2^2 \leq \epsilon$, $\Bar{\bm{x}}$ is feasible to \eqref{opt:our_formulation_W}. Next, we note that the constraints $x_i^2 \leq z_i\theta_i$ and $z_i \in \{0, 1\}$ imply that $\Bar{z}_i \geq \mathbbm{1}\{\Bar{x}_i \neq 0\}$ and $\Bar{\theta}_i \geq \Bar{x}_i^2$. Finally, we observe that \[\Vert \Bar{\bm{x}} \Vert_0 + \frac{1}{\gamma} \Vert \bm{W} \Bar{\bm{x}} \Vert_2^2 = \sum_{i=1}^n \mathbbm{1}\{\Bar{x}_i \neq 0\} + \frac{1}{\gamma} \sum_{i=1}^nw_i^2\Bar{x}_i^2 \leq \sum_{i=1}^n \Bar{z}_i + \frac{1}{\gamma} \sum_{i=1}^nw_i^2\Bar{\theta}_i.\] Thus, the solution $\Bar{\bm{x}}$ is a feasible solution to \eqref{opt:our_formulation_W} that achieves an objective value equal to or less than the objective value that $(\Bar{\bm{x}}, \Bar{\bm{z}}, \Bar{\bm{\theta}})$ achieves in \eqref{opt:MISOC_W}. This completes the proof. \end{proof}

\subsection{A Second Order Cone Relaxation} \label{ssec:SOC}

Problem \eqref{opt:MISOC_W} is a reformulation of Problem \eqref{opt:our_formulation_W} where the problem's nonconvexity is entirely captured by the binary variables $\bm{z}$. We now obtain a convex relaxation of \eqref{opt:our_formulation_W} by solving \eqref{opt:MISOC_W} with $\bm{z} \in \text{conv}(\{0, 1\}^n) = [0, 1]^n$.  This gives the following convex optimization problem:

\begin{equation}
\begin{aligned}
    &\min_{\bm{z}, \bm{x}, \bm{\theta} \in \mathbb{R}^n} & & \sum_{i=1}^n z_i + \frac{1}{\gamma} \sum_{i=1}^nw_i^2\theta_i\\
    &\text{s.t.} & & \Vert\bm{A}\bm{x} - \bm{b}\Vert_2^2 \leq \epsilon, \ x_i^2 \leq z_i\theta_i \ \forall \ i,\\
    & & & 0 \leq z_i \leq 1 \ \forall \ i,\ \theta_i \geq 0 \ \forall \ i.
\end{aligned} \label{opt:MISOC_W_relax}
\end{equation} A natural question to ask is how problem \eqref{opt:MISOC_W_relax} compares to the Weighted Basis Pursuit Denoising problem given by \eqref{opt:BPD_W}, a common convex approximation for CS in the noisy setting. Surprisingly, under mild conditions on the regularization parameter $\gamma$, it can be shown that solving \eqref{opt:MISOC_W_relax} is exactly equivalent to solving \eqref{opt:BPD_W}. This implies that though Basis Pursuit Denoising is typically motivated as a convex approximation to CS in the presence of noise, it can alternatively be understood as the natural convex relaxation of the mixed integer second order cone problem given by \eqref{opt:MISOC_W} for appropriately chosen values of $\gamma$. We formalize this statement in Theorem \ref{thm:BPD_equiv}.

\begin{theorem} \label{thm:BPD_equiv}

There exists a finite value $\gamma_0 < \infty$ such that for all $\Bar{\gamma} \geq \gamma_0$, any vector $\bm{x}^\star$ that is an optimal solution of \eqref{opt:BPD_W} is also an optimal solution of \eqref{opt:MISOC_W_relax}. Let $\mathcal{X} = \{\bm{x}: \Vert\bm{A}\bm{x} - \bm{b}\Vert_2^2 \leq \epsilon\}$, the feasible set of $\eqref{opt:BPD_W}$. We can take $\gamma_0 = \max_{x \in \mathcal{X}} \Vert \bm{W} \bm{x} \Vert_\infty^2${\color{black}, where $\bm{W} = \text{diag}(w_1, \ldots, w_n)$}.

\end{theorem}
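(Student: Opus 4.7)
The plan is to perform the inner minimization over $(\bm{z},\bm{\theta})$ in \eqref{opt:MISOC_W_relax} with $\bm{x} \in \mathcal{X}$ held fixed. Since both the objective and the constraints of \eqref{opt:MISOC_W_relax} are separable across the index $i$ once $\bm{x}$ is frozen, the inner problem decouples into $n$ independent univariate subproblems
\begin{equation*}
\varphi_i(x_i) \ = \ \min_{z_i,\theta_i} \ z_i + \frac{w_i^2}{\gamma}\theta_i \ \ \text{s.t.}\ \ x_i^2 \leq z_i\theta_i,\ 0 \leq z_i \leq 1,\ \theta_i \geq 0,
\end{equation*}
so \eqref{opt:MISOC_W_relax} reduces to $\min_{\bm{x} \in \mathcal{X}} \sum_{i=1}^n \varphi_i(x_i)$. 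This perspective decomposition is the key conceptual step, and everything else is a computation of $\varphi_i$ followed by comparison with \eqref{opt:BPD_W}.

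Next I would compute $\varphi_i$ in closed form. The case $x_i = 0$ is trivial with $(z_i,\theta_i) = (0,0)$ attaining $\varphi_i(0) = 0$. For $x_i \neq 0$, the constraint $x_i^2 \leq z_i\theta_i$ forces $z_i > 0$ and is tight at the optimum, so I eliminate $\theta_i = x_i^2/z_i$ and minimize $g(z_i) = z_i + \frac{w_i^2 x_i^2}{\gamma z_i}$ over $z_i \in (0,1]$. Setting $g'(z_i) = 0$ (or invoking AM--GM) yields the unconstrained minimizer $z_i^\ast = w_i|x_i|/\sqrt{\gamma}$ with optimal value $2 w_i|x_i|/\sqrt{\gamma}$, valid provided $z_i^\ast \leq 1$, i.e.\ $w_i^2 x_i^2 \leq \gamma$. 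When instead $w_i^2 x_i^2 > \gamma$, $g$ is still decreasing at $z_i = 1$, so on $[0,1]$ the constrained optimum is $z_i = 1$, $\theta_i = x_i^2$, with value $1 + w_i^2 x_i^2/\gamma$.

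Now I invoke the hypothesis $\gamma \geq \gamma_0 = \max_{\bm{x}\in\mathcal{X}} \Vert \bm{W}\bm{x}\Vert_\infty^2$. For every $\bm{x} \in \mathcal{X}$ and every $i$, we have $w_i^2 x_i^2 \leq \Vert \bm{W}\bm{x}\Vert_\infty^2 \leq \gamma_0 \leq \gamma$, so the first regime applies uniformly and $\varphi_i(x_i) = 2 w_i|x_i|/\sqrt{\gamma}$ for all feasible $\bm{x}$ and all $i$. Hence \eqref{opt:MISOC_W_relax} simplifies to
\begin{equation*}
\min_{\bm{x} \in \mathcal{X}}\ \frac{2}{\sqrt{\gamma}}\sum_{i=1}^n w_i |x_i| \ = \ \frac{2}{\sqrt{\gamma}}\min_{\bm{x} \in \mathcal{X}} \Vert \bm{W}\bm{x}\Vert_1,
\end{equation*}
which has the same set of minimizers as \eqref{opt:BPD_W}. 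Given any $\bm{x}^\star$ optimal for \eqref{opt:BPD_W}, the triple $(\bm{x}^\star,\bm{z}^\star,\bm{\theta}^\star)$ with $z_i^\star = w_i|x_i^\star|/\sqrt{\gamma}$ and $\theta_i^\star = (x_i^\star)^2/z_i^\star$ (and $z_i^\star = \theta_i^\star = 0$ whenever $x_i^\star = 0$ or $w_i = 0$) is optimal for \eqref{opt:MISOC_W_relax}, yielding the claim.

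The main obstacle is the boundary bookkeeping in the coordinate subproblem: one must verify that the unconstrained minimizer $z_i^\ast$ actually lies in $[0,1]$ uniformly over all $\bm{x} \in \mathcal{X}$, which is precisely what the choice $\gamma_0 = \max_{\bm{x}\in\mathcal{X}}\Vert\bm{W}\bm{x}\Vert_\infty^2$ enforces, and that the degenerate cases $x_i = 0$ and (if present) $w_i = 0$ do not break the decomposition or leave the inner infimum unattained. Once these edge cases are dispatched, the equivalence with the weighted basis pursuit denoising problem reduces to separable univariate algebra.
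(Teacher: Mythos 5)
Your proposal is correct and follows essentially the same route as the paper's proof: both rewrite \eqref{opt:MISOC_W_relax} as a two-stage problem, exploit separability of the inner minimization over $(\bm{z},\bm{\theta})$ for fixed $\bm{x}\in\mathcal{X}$, show the per-coordinate optimum is $2w_i|x_i|/\sqrt{\gamma}$ once $\gamma \geq \max_{\bm{x}\in\mathcal{X}}\Vert\bm{W}\bm{x}\Vert_\infty^2$ guarantees the critical point lies in $(0,1]$, and conclude that the relaxation reduces to $\frac{2}{\sqrt{\gamma}}\min_{\bm{x}\in\mathcal{X}}\Vert\bm{W}\bm{x}\Vert_1$, which shares its minimizers with \eqref{opt:BPD_W}. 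Your treatment of the boundary regime $w_i^2x_i^2>\gamma$ and the $w_i=0$ edge case is slightly more careful than the paper's, but the argument is the same.
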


\begin{proof} Rewrite \eqref{opt:MISOC_W_relax} as the two stage optimization problem given by \eqref{opt:MISOC_W_relax_2}.

\begin{equation}
\begin{aligned}
    \min_{\bm{x} \in \mathcal{X}} \quad & \min_{\bm{z}, \bm{\theta} \in \mathbb{R}^n} & & \sum_{i=1}^n z_i + \frac{1}{\gamma} \sum_{i=1}^nw_i^2\theta_i\\
    &\text{s.t.} & & x_i^2 \leq z_i\theta_i \ \forall \ i, \ 0 \leq z_i \leq 1 \ \forall \ i,\ \theta_i \geq 0 \ \forall \ i.
\end{aligned} \label{opt:MISOC_W_relax_2}
\end{equation} Let $\gamma_0 = \max_{x \in \mathcal{X}} \Vert x \Vert_\infty^2$. To establish the result, we will show that for any $\bm{x} \in \mathcal{X}$ the optimal value of the inner minimization problem in \eqref{opt:MISOC_W_relax_2} is a scalar multiple of the $\ell_1$ norm of $\bm{W}\bm{x}$ provided that $\gamma \geq \gamma_0$.

Fix $\gamma \geq \gamma_0$ and consider any $\Bar{\bm{x}} \in \mathcal{X}$. We make three observations that allow us to reformulate the inner minimization problem in \eqref{opt:MISOC_W_relax_2}:

\begin{enumerate}
    \item The objective function of the inner minimization problem is separable.
    \item For any $i$ such that $\Bar{x}_i = 0$, it is optimal to set $z_i = \theta_i = 0$ which results in no contribution to the objective function.
    \item For any $i$ such that $\Bar{x}_i \neq 0$, we must have $z_i > 0$ and it is optimal to take $\theta_i = \frac{\Bar{x}_i^2}{z_i}$.
\end{enumerate} We can therefore equivalently express the inner minimization problem of \eqref{opt:MISOC_W_relax_2} as:

\begin{equation}
\begin{aligned}
    & \min_{\bm{z} \in \mathbb{R}^n} & & \sum_{i: x_i \neq 0} \bigg{[}z_i + \frac{w_i^2}{\gamma} \cdot \frac{\Bar{x}_i^2}{z_i}\bigg{]}\\
    &\text{s.t.} & & 0 < z_i \leq 1 \ \forall \ i.
\end{aligned} \label{opt:MISOC_W_relax_simplified}
\end{equation} Let $f_i(z) = z + \frac{w_i^2}{\gamma} \cdot \frac{\Bar{x}_i^2}{z}$. We want to minimize the function $f_i(z)$ over the interval $(0, 1]$ for all $i$ such that $\Bar{x}_i \neq 0$. Fix an arbitrary $i$ satisfying $\Bar{x}_i \neq 0$. We have $\frac{d}{dz}f_i(z) = 1 - \frac{w_i^2}{\gamma} \cdot \frac{\Bar{x}_i^2}{z_i^2}$ and $\frac{d}{dz}f_i(z^\star) = 0 \iff z^\star = \pm \frac{w_i}{\sqrt{\gamma}} \vert \Bar{x}_i \vert$. The condition $\gamma \geq \gamma_0 = \max_{x \in \mathcal{X}} \Vert \bm{W} \bm{x} \Vert_\infty^2$ and the fact that $\Bar{\bm{x}} \in \mathcal{X}$ implies that $1 \geq \frac{w_i^2\Bar{x}_i^2}{\gamma}$ for all $i$. Thus, we have $0 < \frac{w_i}{\sqrt{\gamma}}\vert \Bar{x}_i \vert \leq 1$. Let $\Bar{z} = \frac{w_i}{\sqrt{\gamma}}\vert \Bar{x}_i \vert$. Noting that $\lim_{z\xrightarrow{}0^+}f_i(z) = \infty$, the minimum of $f_i(z)$ over the interval $(0, 1]$ must occur either at $1$ or $\Bar{z}$. We have {\color{black}\[\bigg{(}\frac{w_i}{\sqrt{\gamma}}\vert \Bar{x}_i \vert-1\bigg{)}^2 \geq 0 \implies f_i(1) = \frac{w_i^2\Bar{x}_i^2}{\gamma}+1 \geq \frac{2w_i}{\sqrt{\gamma}}\vert \Bar{x}_i \vert = f_i(\Bar{z}).\]} Therefore, the minimum of $f_i(z)$ on $(0, 1]$ occurs at $\Bar{z} = \frac{w_i}{\sqrt{\gamma}}\vert \Bar{x}_i \vert$ and is equal to $f_i(\Bar{z})=\frac{2}{\sqrt{\gamma}}\vert \Bar{x}_i \vert$. This allows us to conclude that the optimal value of \eqref{opt:MISOC_W_relax_2} is given by: \[\sum_{i: x_i \neq 0} \frac{2w_i}{\sqrt{\gamma}}\vert \Bar{x}_i \vert = \sum_{i=1}^n \frac{2w_i}{\sqrt{\gamma}}\vert \Bar{x}_i \vert = \frac{2}{\sqrt{\gamma}} \Vert \bm{W}\Bar{\bm{x}} \Vert_1.\] We have shown that for fixed $\bm{x} \in \mathcal{X}$, the optimal value of the inner minimization problem of \eqref{opt:MISOC_W_relax_2} is a scalar multiple of the $\ell_1$ norm of $\bm{W}\bm{x}$. We can rewrite \eqref{opt:MISOC_W_relax_2} as 
\begin{equation}
\begin{aligned}
    \min_{\bm{x} \in \mathcal{X}} \frac{2}{\sqrt{\gamma}} \Vert \bm{W}\Bar{\bm{x}} \Vert_1,
\end{aligned}
\end{equation} which has the same set of optimal solutions as \eqref{opt:BPD_W} because this set is invariant under scaling of the objective function. This completes the proof. \end{proof}

\begin{remark}

Note that by taking $\bm{W} = \bm{I}$, it immediately follows from Theorem \ref{thm:BPD_equiv} that any vector $\bm{x}^\star$ that is an optimal solution of \eqref{opt:BPD} is also an optimal solution of \eqref{opt:MISOC_W_relax} when we set $\gamma \geq \gamma_0 = \max_{\bm{x} \in \mathcal{X}} \Vert \bm{x} \Vert_\infty^2$.
\end{remark}
Convex relaxations of nonconvex optimization problems are helpful for two reasons. Firstly, a convex relaxation provides a lower (upper) bound to a minimization (maximization) problem which given a feasible solution to the nonconvex optimization problem provides a certificate of worst case suboptimality. Secondly, convex relaxations can often be used as building blocks in the construction of global optimization algorithms or heuristics for nonconvex optimization problems. Strong convex relaxations are desirable because they produce tighter bounds on the optimal value of the problem of interest (stronger certificates of worst case suboptimality) and generally lead to more performant global optimization algorithms and heuristics. Let $\mathcal{X}_1 = \{(\bm{z}, \bm{x}, \bm{\theta}) \in \mathbb{R}^n \times \mathbb{R}^n \times \mathbb{R}^n: \Vert \bm{A}\bm{x}-\bm{b} \Vert_2^2 \leq \epsilon, x_i^2 \leq z_i\theta_i \,\, \forall \ i, \theta_i \geq 0 \,\, \forall \ i\}$ and $\mathcal{X}_1 = \{(\bm{z}, \bm{x}, \bm{\theta}) \in \mathbb{R}^n \times \mathbb{R}^n \times \mathbb{R}^n: \bm{z} \in \{0, 1\}^n\}$. We can equivalently write \eqref{opt:MISOC_W} as:
\[\min_{(\bm{z}, \bm{x}, \bm{\theta}) \in \mathcal{X}_1 \cap \mathcal{X}_2} \sum_{i=1}^n z_i + \frac{1}{\gamma} \sum_{i=1}^nw_i^2\theta_i.\] The strongest possible convex relaxation to \eqref{opt:MISOC_W} would be obtained by minimizing the objective function in \eqref{opt:MISOC_W} subject to the constraint that $(\bm{z}, \bm{x}, \bm{\theta}) \in \text{conv}(\mathcal{X}_1 \cap \mathcal{X}_2)$. Since the objective function is linear in the decision variables, solving over $\text{conv}(\mathcal{X}_1 \cap \mathcal{X}_2)$ would produce an optimal solution to \eqref{opt:MISOC_W} since the objective would be minimized at an extreme point of $\text{conv}(\mathcal{X}_1 \cap \mathcal{X}_2)$ which by definition must be an element of $\mathcal{X}_1 \cap \mathcal{X}_2$. Unfortunately, in general it is hard to represent $\text{conv}(\mathcal{X}_1 \cap \mathcal{X}_2)$ explicitly. The relaxation given by \eqref{opt:MISOC_W_relax} consists of minimizing the objective function of \eqref{opt:MISOC_W} subject to the constraint that $(\bm{z}, \bm{x}, \bm{\theta}) \in (\text{conv}(\mathcal{X}_1) \cap \text{conv}(\mathcal{X}_2)) = \mathcal{X}_1 \cap \text{conv}(\mathcal{X}_2) \supseteq \text{conv}(\mathcal{X}_1 \cap \mathcal{X}_2)$.

Stronger convex relaxations to \eqref{opt:MISOC_W} can be obtained by introducing additional valid inequalities to \eqref{opt:MISOC_W} and then relaxing the integrality constraint on $\bm{z}$. For example, suppose we know a value $M \geq \gamma_0 = \max_{\bm{x} \in \mathcal{X}}\Vert \bm{W}\bm{x}\Vert_\infty^2$. We can use this value to introduce Big-M constraints {\color{black}similar in} flavour to the formulation proposed by \cite{karahanouglu2013mixed}. Under this assumption, it follows immediately that any feasible solution to \eqref{opt:MISOC_W} satisfies $-Mz_i \leq w_ix_i \leq Mz_i \,\, \forall \ i$. Thus, we can obtain another convex relaxtion of \eqref{opt:MISOC_W} by minimizing its objective function subject to the constraint $(\bm{z}, \bm{x}, \bm{\theta}) \in \Bar{\mathcal{X}}_1 \cap \text{conv}(\mathcal{X}_2) \supseteq \text{conv}(\mathcal{X}_1 \cap \mathcal{X}_2)$ where we define $\Bar{\mathcal{X}}_1 = \mathcal{X}_1 \cap \{(\bm{z}, \bm{x}, \bm{\theta}) \in \mathbb{R}^n \times \mathbb{R}^n \times \mathbb{R}^n: -Mz_i \leq w_ix_i \leq Mz_i \,\, \forall \ i\}$. Explicitly, with knowledge of such a value $M$ we can solve

\begin{equation}
\begin{aligned}
    &\min_{\bm{z}, \bm{x}, \bm{\theta} \in \mathbb{R}^n} & & \sum_{i=1}^n z_i + \frac{1}{\gamma} \sum_{i=1}^nw_i^2\theta_i\\
    &\text{s.t.} & & \Vert\bm{A}\bm{x} - \bm{b}\Vert_2^2 \leq \epsilon, \ x_i^2 \leq z_i\theta_i \ \forall \ i,\\
    & & & -Mz_i \leq w_ix_i \leq Mz_i \ \forall \ i,\ 0 \leq z_i \leq 1 \ \forall \ i,\ \theta_i \geq 0 \ \forall \ i.
\end{aligned} \label{opt:MISOC_W_relax_M}
\end{equation}

\begin{remark}
Given any input data $\bm{A}, \bm{b}, \epsilon$, if $M$ satisfies $M \geq \gamma_0 = \max_{\bm{x}\in \mathcal{X}}\Vert\bm{W}\bm{x}\Vert_\infty^2$, then the optimal value of \eqref{opt:MISOC_W_relax_M} is no less than the optimal value of \eqref{opt:MISOC_W_relax}. This follows immediately by noting that under the condition on $M$, the feasible set of \eqref{opt:MISOC_W_relax_M} is contained in the feasible set of \eqref{opt:MISOC_W_relax}.
\end{remark}

The mixed integer second order cone reformulation and convex relaxation introduced in this section lead to two approaches for solving \eqref{opt:our_formulation_W} to certifiable optimality. On the one hand, solvers like \verb|Gurobi| contain direct support for solving mixed integer second order cone problems so problem \eqref{opt:our_formulation_W} can be solved directly. On the other hand, it is possible to develop a custom branch-and-bound routine that leverages a modification of \eqref{opt:BPD_W} to compute lower bounds. We illustrate this in Section \ref{sec:BnB}. This custom, problem specific approach outperforms \verb|Gurobi| because \eqref{opt:BPD} is a more tractable problem than \eqref{opt:MISOC_W_relax} due in part to the presence of fewer second order cone constraints which decreases the computational time spent computing lower bounds.

\subsection{A Positive Semidefinite Cone Relaxation} \label{ssec:SDP}

In this section, we formulate \eqref{opt:our_formulation_W} as a polynomial optimization problem and present a semidefinite relaxation using the sum of squares (SOS) hierarchy \citep{lasserre2001explicit}. We show that this semidefinite relaxation is tighter than the second order cone relaxation presented previously.

Let $f(\bm{z}, \bm{x})=\sum_{i=1}^n z_i + \frac{1}{\gamma} \sum_{i=1}^nw_i^2\theta_i$ denote the objective function of \eqref{opt:MISOC_W}. Notice that the constraint $\bm{z} \in \{0, 1\}^n$ in \eqref{opt:MISOC_W} is equivalent to the constraint $\bm{z} \circ \bm{z} = \bm{z}$ (where $\circ$ denotes the element wise product). With this observation, we can express the feasible set of \eqref{opt:MISOC_W} as the semialgebraic set given by: \[\Omega = \{(\bm{z}, \bm{x}) \in \mathbb{R}^n \times \mathbb{R}^n: \epsilon - \Vert\bm{A}\bm{x} - \bm{b}\Vert_2^2 \geq 0, x_iz_i-x_i=0 \,\, \forall \ i, z_i^2-z_i=0 \,\, \forall \ i\}.\] Thus, we can equivalently write \eqref{opt:MISOC_W} as $\min_{(\bm{z}, \bm{x}) \in \Omega} f(\bm{z}, \bm{x})$. It is not difficult to see that the preceding optimization problem has the same optimal value as the problem given by
\begin{equation}
\begin{aligned}
    \max_{\lambda \in \mathbb{R}} & & \lambda\ \text{s.t.} \ f(\bm{z}, \bm{x}) - \lambda \geq 0 \ \forall \ (\bm{z}, \bm{x}) \in \Omega.
\end{aligned} \label{opt:poly_reform}
\end{equation} Problem \eqref{opt:poly_reform} is a polynomial optimization problem that has the same optimal value as \eqref{opt:our_formulation_W}.

We can obtain tractable lower bounds for \eqref{opt:poly_reform} by leveraging techniques from sum of squares (SOS) optimization \citep{lasserre2001global, lasserre2009moments}. A polynomial $g \in \mathbb{R}[x]$ is said to be sum of squares (SOS) if for some $K \in \mathbb{N}$ there exists polynomials $\{g_k\}_{k=1}^K \subset \mathbb{R}[x]$ such that $g = \sum_{k=1}^K g_k^2$. We denote the set of all SOS polynomials as $\Sigma^2[x]$. Moreover, we denote the set of polynomials of degree at most $d$ as $\mathbb{R}_d[x] \subset \mathbb{R}[x]$ and we denote the set of SOS polynomials of degree at most $2d$ as $\Sigma^2_d[x] \subset \Sigma^2[x]$. It is trivial to see that any polynomial that is SOS is globally non-negative. More generally, SOS polynomials can be utilized to model polynomial non-negativity over arbitrary semialgebraic sets. The quadratic module associated with the semialgebraic set $\Omega$ is defined as:
\begin{equation}
\begin{aligned}
    & QM(\Omega) = & &\bigg{\{}s_0(\bm{z}, \bm{x}) + s_1(\bm{z}, \bm{x})(\epsilon - \Vert\bm{A}\bm{x} - \bm{b}\Vert_2^2) + \sum_{i=1}^n t_i(\bm{z}, \bm{x})(x_iz_i-x_i) \\
    & & & + \sum_{i=1}^n r_i(\bm{z}, \bm{x})(z_i^2-z_i): s_0, s_1 \in \Sigma^2[\bm{z}, \bm{x}], t_i, r_i \in \mathbb{R}[\bm{z}, \bm{x}] \,\, \forall \ i \bigg{\}}.
\end{aligned} \label{def:quad_module}
\end{equation} It is straightforward to see that if a function $h(\bm{z}, \bm{x})$ is an element of $QM(\Omega)$, then $h(\bm{z}, \bm{x})$ is non-negative on $\Omega$ (since for points in $\Omega$, $h(\bm{z}, \bm{x})$ takes the form of the sum of two SOS polynomials). Thus, membership in $QM(\Omega)$ is a sufficient condition for non-negativity on $\Omega$. We further define the restriction of $QM(\Omega)$ to polynomials of degree at most $2d$ as:
\begin{equation}
\begin{aligned}
    QM_d&(\Omega) = \bigg{\{}s_0(\bm{z}, \bm{x}) + s_1(\bm{z}, \bm{x})(\epsilon - \Vert\bm{A}\bm{x} - \bm{b}\Vert_2^2) + \sum_{i=1}^n t_i(\bm{z}, \bm{x})(x_iz_i-x_i) \\
    &+ \sum_{i=1}^n r_i(\bm{z}, \bm{x})(z_i^2-z_i): s_0 \in \Sigma^2_d[\bm{z}, \bm{x}], s_1 \in \Sigma^2_{d-1}[\bm{z}, \bm{x}], t_i, r_i \in \mathbb{R}_{2d-2}[\bm{z}, \bm{x}] \,\, \forall \ i \bigg{\}}.
\end{aligned} \label{def:quad_module_d}
\end{equation} It is immediate that $QM_d(\Omega) \subset QM(\Omega)$ and membership in $QM_d(\Omega)$ provides a certificate of non-negativity on $\Omega$. Importantly, given an arbitrary polynomial $h(\bm{z}, \bm{x})$ it is possible to verify membership in $QM_d(\Omega)$ by checking feasibility of a semidefinite program. Thus, for any $d \in \mathbb{N}$, we obtain a semidefinite relaxation of \eqref{opt:our_formulation_W} by solving:
\begin{equation}
\begin{aligned}
    \max_{\lambda \in \mathbb{R}} & & \lambda\ \text{s.t.} \ f(\bm{z}, \bm{x}) - \lambda \in QM_d(\Omega).
\end{aligned} \label{opt:poly_relax}
\end{equation} Since $QM_d(\Omega) \subset QM_{d+1}(\Omega)$, \eqref{opt:poly_relax} produces an increasingly strong lower bound with increasing values of $d$. A natural question to ask is how the relaxation given by \eqref{opt:poly_relax} compares to that given by \eqref{opt:MISOC_W_relax}. We answer this question in Theorem \ref{thm:SOS_better}.
\\
\begin{theorem} \label{thm:SOS_better}

For every $d \geq 1$, the optimal value of \eqref{opt:poly_relax} is no less than the optimal value of \eqref{opt:MISOC_W_relax}.

\end{theorem}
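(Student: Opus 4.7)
The plan is to bound \eqref{opt:MISOC_W_relax} from above via the moment (dual) SDP of \eqref{opt:poly_relax}. Since $QM_d(\Omega) \subseteq QM_{d+1}(\Omega)$, the optimal value of \eqref{opt:poly_relax} is non-decreasing in $d$, so it suffices to prove the $d=1$ case. The level-1 moment relaxation asks for a linear pseudo-expectation $\tilde{\mathbb{E}}$ on polynomials of degree at most $2$ in $(\bm{z},\bm{x})$ satisfying $\tilde{\mathbb{E}}[1]=1$, a PSD moment matrix indexed by $(1, z_1, \ldots, z_n, x_1, \ldots, x_n)$, the localization $\tilde{\mathbb{E}}[\epsilon - \|\bm{A}\bm{x}-\bm{b}\|_2^2] \geq 0$, and the localizing equalities $\tilde{\mathbb{E}}[z_i^2]=\tilde{\mathbb{E}}[z_i]$ and $\tilde{\mathbb{E}}[x_i z_i]=\tilde{\mathbb{E}}[x_i]$, minimizing $\tilde{\mathbb{E}}[f]$.

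Given any such $\tilde{\mathbb{E}}$, I would project to the SOC triple $(z_i^\star, x_i^\star, \theta_i^\star) := (\tilde{\mathbb{E}}[z_i], \tilde{\mathbb{E}}[x_i], \tilde{\mathbb{E}}[x_i^2])$ and verify feasibility in \eqref{opt:MISOC_W_relax}. The residual constraint follows from the Jensen-type bound $\|\bm{A}\bm{x}^\star - \bm{b}\|_2^2 \leq \tilde{\mathbb{E}}[\|\bm{A}\bm{x}-\bm{b}\|_2^2] \leq \epsilon$, itself a direct consequence of PSD-ness of the moment matrix applied to each affine $(\bm{A}\bm{x})_j - b_j$. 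The perspective inequality $(x_i^\star)^2 \leq z_i^\star \theta_i^\star$ is exactly the Cauchy--Schwarz minor $\tilde{\mathbb{E}}[z_i x_i]^2 \leq \tilde{\mathbb{E}}[z_i^2]\,\tilde{\mathbb{E}}[x_i^2]$ after substituting the two localizing equalities. The box constraint $z_i^\star \in [0,1]$ follows from the analogous minor applied to $(1, z_i)$ (giving $(z_i^\star)^2 \leq \tilde{\mathbb{E}}[z_i^2]=z_i^\star$), and $\theta_i^\star \geq 0$ is immediate from the moment-matrix diagonal. Since the SOC objective at $(z^\star,x^\star,\theta^\star)$ equals $\tilde{\mathbb{E}}[f]$, every feasible moment sequence yields a feasible SOC point of the same value, so the moment optimum is no less than the SOC optimum.

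What remains is to bridge from the moment dual value back to the primal value \eqref{opt:poly_relax}. I would invoke strong SDP duality in the Lasserre hierarchy, which holds at every finite level under the Archimedean property of $QM(\Omega)$. Compactness of $\Omega$ comes essentially for free: $z_i^2=z_i$ forces $z_i \in \{0,1\}$, $x_iz_i=x_i$ then forces $x_i=0$ whenever $z_i=0$, and $\|\bm{A}\bm{x}-\bm{b}\|_2^2 \leq \epsilon$ bounds the surviving coordinates of $\bm{x}$ (under a mild full-column-rank condition on active supports). The main obstacle I expect is precisely this duality step, since the localizing equalities can make standard Slater arguments delicate. A cleaner alternative, avoiding moment--SOS strong duality entirely, would be to read off the degree-$2$ multipliers directly from the strongly dual SOCP dual of \eqref{opt:MISOC_W_relax} and rearrange its Lagrangian identity into the required form $s_0 + s_1(\epsilon - \|\bm{A}\bm{x}-\bm{b}\|_2^2) + \sum_i t_i(x_i z_i - x_i) + \sum_i r_i(z_i^2 - z_i)$ with $s_0 \in \Sigma^2_1$, $s_1 \in \Sigma^2_0$, and constants $t_i, r_i$; combining either route with the monotonicity in $d$ closes the argument.
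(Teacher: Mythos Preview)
Your moment-relaxation argument is genuinely different from the paper's proof, and the projection step is clean: the verification that $(z_i^\star,x_i^\star,\theta_i^\star)=(\tilde{\mathbb E}[z_i],\tilde{\mathbb E}[x_i],\tilde{\mathbb E}[x_i^2])$ lands in the feasible set of \eqref{opt:MISOC_W_relax} with matching objective is correct and elegant. That establishes $\mathrm{val}(\text{moment}_1)\ge \mathrm{val}\eqref{opt:MISOC_W_relax}$, and by weak SDP duality $\mathrm{val}\eqref{opt:poly_relax}\le \mathrm{val}(\text{moment}_1)$; but these two inequalities do not combine to the desired conclusion without the equality $\mathrm{val}\eqref{opt:poly_relax}=\mathrm{val}(\text{moment}_1)$. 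You flag this correctly as the crux.

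The gap is real, and your proposed fix via the Archimedean property does not go through in the compressed-sensing regime. With $m<n$ the matrix $\bm A$ has nontrivial nullspace, so taking $\bm z=\bm 1$ and moving $\bm x$ along $\ker\bm A$ keeps $\|\bm A\bm x-\bm b\|_2^2\le\epsilon$ and all the polynomial equalities satisfied; hence $\Omega$ is unbounded and $QM(\Omega)$ is not Archimedean. Your parenthetical ``full-column-rank on active supports'' does not rescue this, because compactness of each support slice does not imply compactness of the union once the full support $S=[n]$ is included. So the Archimedean route to moment--SOS strong duality is closed, and one would need a Slater-type argument on the level-$1$ SDP directly, which you rightly call delicate given the localizing equalities.

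Your ``cleaner alternative'' is exactly what the paper does, and the paper carries it out rather than leaving it as a sketch. Specifically, the paper invokes Theorem~\ref{thm:BPD_equiv} to replace $\mathrm{val}\eqref{opt:MISOC_W_relax}$ by $\mathrm{val}\eqref{opt:BPD_W}$ (assuming $\bm W=\bm I$ and $\gamma\ge\gamma_0$), passes to the strongly dual problem \eqref{opt:BPD_dual}, and for each dual-feasible $\bar{\bm\nu}$ writes down explicit degree-$0$ multipliers $\bar r_i=-1$, $\bar t_i=\bm A_i^T\bar{\bm\nu}$, $\bar s_1=\tau\ge 0$ and a quadratic $\bar s_0=(1,\bm z,\bm x)\,\bar{\bm S}\,(1,\bm z,\bm x)^T$, then verifies $\bar{\bm S}\succeq 0$ via a Schur complement using the dual feasibility $|\bar{\bm\nu}^T\bm A_i|\le 2/\sqrt\gamma$. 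This produces, for every BPD-dual feasible point, a feasible point of \eqref{opt:poly_relax} with the same value, giving the inequality directly and bypassing moment--SOS duality entirely. The trade-off: the paper's construction is concrete and self-contained but, as written, leans on the $\gamma\ge\gamma_0$ regime (the general case is asserted to ``extend naturally''); your moment argument would treat arbitrary $\gamma$ and $\bm W$ uniformly if the duality gap could be closed, but as it stands that step is missing.
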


\begin{proof} Without loss of generality, we take $\bm{W} = \bm{I}$. We prove the result for $\gamma \geq \gamma_0 = \max_{\bm{x} \in \mathcal{X}} \Vert \bm{x} \Vert_\infty^2$ though the result extends naturally to the case of arbitrary $\gamma$. Fix any $\epsilon > 0$, $\bm{A} \in \mathbb{R}^{m \times n}$ and $\bm{b} \in \mathbb{R}^m$. By Theorem \ref{thm:BPD_equiv}, \eqref{opt:MISOC_W_relax} has the same optimal value as \eqref{opt:BPD_W}. Consider the dual of \eqref{opt:BPD_W} which for $\bm{W} = \bm{I}$ is given by
\begin{equation}
\begin{aligned}
    \max_{\bm{\nu} \in \mathbb{R}^m} & & \bm{b}^T\bm{\nu} - \sqrt{\epsilon} \Vert \bm{\nu} \Vert_2 \ \text{s.t.} \ \vert \bm{\nu}^TA_i \vert \leq \frac{2}{\sqrt{\gamma}} \quad \forall \ i{\color{black},}
\end{aligned} \label{opt:BPD_dual}
\end{equation} {\color{black}where $A_i$ denotes the $i^{th}$ column of $\bm{A}$.} Strong duality holds between \eqref{opt:BPD_dual} and \eqref{opt:BPD_W} since $\bm{\nu} = 0$ is always a strictly feasible point in \eqref{opt:BPD_dual} \citep{boyd2004convex}. Fix $d=1$. We will show that for any feasible solution to \eqref{opt:BPD_dual}, we can construct a feasible solution to \eqref{opt:poly_relax} that achieves the same objective value. Let $\bm{\Bar{\nu}} \in \mathbb{R}^m$ denote an arbitrary feasible solution to \eqref{opt:BPD_dual}. {\color{black}Define $\Bar{r}_i(\bm{z}, \bm{x}) = -1, \Bar{t}_i(\bm{z}, \bm{x}) = A_i^T\Bar{\bm{\nu}}$ for all $i$, $\Bar{s}_1(\bm{z}, \bm{x}) = \frac{\Vert \bm{\Bar{\nu}} \Vert_2}{2\sqrt{\epsilon}}$ and define $\Bar{s}_0(\bm{z}, \bm{x}) = \text{monomial}(\bm{z}, \bm{x}, 1)^T \Bar{\bm{S}}\text{monomial}(\bm{z}, \bm{x}, 1)$ where $\text{monomial}(\bm{x}, \bm{z}, 1) \in \mathbb{R}[\bm{z}, \bm{x}]^{2n+1}$ is the vector of monomials in $\mathbb{R}[\bm{z}, \bm{x}]$ of degree at most $1$ and $\Bar{\bm{S}} \in \mathbb{R}^{2n+1 \times 2n+1}$ is given by
\[
\Bar{\bm{S}} = \left[ 
\begin{array}{c | c | c} 
  \frac{1}{\gamma}\bm{I}_n+\frac{\Vert \bm{\Bar{\nu}} \Vert_2}{2\sqrt{\epsilon}}\bm{A}^T\bm{A} & \text{diag}\Big{(}\frac{-\bm{A}^T\Bar{\bm{\nu}}}{2}\Big{)} & \frac{1}{2}\bm{A}^T\Big{(}\Bar{\bm{\nu}}-\frac{\Vert \bm{\Bar{\nu}} \Vert_2}{\sqrt{\epsilon}}\bm{b}\Big{)} \\ 
  \hline 
  \text{diag}\Big{(}\frac{-\bm{A}^T\Bar{\bm{\nu}}}{2}\Big{)} & \bm{I}_n & \bm{0}_n \\
  \hline
  \frac{1}{2}\Big{(}\Bar{\bm{\nu}}^T-\frac{\Vert \bm{\Bar{\nu}} \Vert_2}{\sqrt{\epsilon}}\bm{b}^T\Big{)}\bm{A} & \bm{0}_n^T & \Big{(}\frac{\bm{b}^T\bm{b}}{2\sqrt{\epsilon}}+\frac{\sqrt{\epsilon}}{2}\Big{)}\Vert \bm{\Bar{\nu}}\Vert_2 - \Bar{\bm{\nu}}^T\bm{b}
 \end{array} 
\right].
\]
Clearly, we have $\Bar{t}_i, \Bar{r}_i \in \mathbb{R}_0[\bm{z}, \bm{x}]$ for all $i$ and $\Bar{s}_1 \in \Sigma^2_{0}[\bm{z}, \bm{x}]$ because $\frac{\Vert \bm{\Bar{\nu}} \Vert_2}{2\sqrt{\epsilon}} \geq 0$. We claim that $\Bar{s}_0 \in \Sigma^2_1[\bm{z}, \bm{x}]$. To see this, note that by the generalized Schur complement lemma (see Boyd et al. 1994, Equation 2.41), $\Bar{\bm{S}} \succeq 0$ if and only if $\begin{pmatrix}\bm{I}_n & \bm{0}_n\\ \bm{0}_n^T & \sigma\end{pmatrix} \succeq 0$ and $\frac{1}{\gamma}\bm{I}_n+\frac{\Vert \bm{\Bar{\nu}} \Vert_2}{2\sqrt{\epsilon}}\bm{A}^T\bm{A}-\text{diag}\Big{(}\frac{-\bm{A}^T\Bar{\bm{\nu}}}{2}\Big{)}^2-\frac{1}{4\sigma}\bm{A}^T\Big{(}\Bar{\bm{\nu}}-\frac{\Vert \bm{\Bar{\nu}} \Vert_2}{\sqrt{\epsilon}}\bm{b}\Big{)}\Big{(}\Bar{\bm{\nu}}-\frac{\Vert \bm{\Bar{\nu}} \Vert_2}{\sqrt{\epsilon}}\bm{b}\Big{)}^T\bm{A} \succeq 0$ where we let $\sigma = \Big{(}\frac{\bm{b}^T\bm{b}}{2\sqrt{\epsilon}}+\frac{\sqrt{\epsilon}}{2}\Big{)}\Vert \bm{\Bar{\nu}}\Vert_2 - \Bar{\bm{\nu}}^T\bm{b}$. The first condition is satisfied if $\sigma \geq 0$. To see that this is always the case, notice that we can equivalently express $\sigma$ as \[\sigma = \bigg{(}\frac{\bm{b}^T\bm{b} + \epsilon}{2\sqrt{\epsilon} \Vert \bm{b} \Vert_2}\bigg{)}\Vert \bm{b} \Vert_2 \Vert \bm{\Bar{\nu}}\Vert_2 - \Bar{\bm{\nu}}^T\bm{b}.\] By Cauchy-Schwarz, we have $\vert \Bar{\bm{\nu}}^T\bm{b} \vert \leq \Vert \bm{b} \Vert_2 \Vert \bm{\Bar{\nu}}\Vert_2$. Moreover, we have $0 \leq (\Vert \bm{b} \Vert_2 - \sqrt{\epsilon})^2 \implies \frac{\bm{b}^T\bm{b} + \epsilon}{2\sqrt{\epsilon} \Vert \bm{b} \Vert_2} \geq 1$. It follows immediately that $\sigma \geq 0$. 

To establish the second condition, we first rewrite the Schur complement of $\bm{\Bar{S}}$ as the sum of two matrices:
\[\Bigg{[}\frac{1}{\gamma}\bm{I}_n-\text{diag}\Big{(}\frac{-\bm{A}^T\Bar{\bm{\nu}}}{2}\Big{)}^2 \Bigg{]} + \Bigg{[} \frac{\Vert \bm{\Bar{\nu}} \Vert_2}{2\sqrt{\epsilon}}\bm{A}^T\bm{A} -\frac{1}{4\sigma}\bm{A}^T\Big{(}\Bar{\bm{\nu}}-\frac{\Vert \bm{\Bar{\nu}} \Vert_2}{\sqrt{\epsilon}}\bm{b}\Big{)}\Big{(}\Bar{\bm{\nu}}-\frac{\Vert \bm{\Bar{\nu}} \Vert_2}{\sqrt{\epsilon}}\bm{b}\Big{)}^T\bm{A} \Bigg{]},\] where we let $\bm{\Phi}$ denote the first matrix in the sum and we let $\bm{\Psi}$ denote the second matrix. It suffices to show that $\bm{\Phi}$ and $\bm{\Psi}$ are both positive semidefinite. The eigenvalues of $\bm{\Phi}$ are given by $\{\frac{1}{\gamma}- \frac{(\Bar{\bm{\nu}}^TA_i)^2}{4}\}_{i=1}^n$. Thus, $\bm{\Phi}$ is positive semidefinite as long as $\vert \Bar{\bm{\nu}}^TA_i \vert \leq \frac{2}{\sqrt{\gamma}}$ for all $i$ which is guaranteed by the feasibility of $\Bar{\bm{\nu}}$ in \eqref{opt:BPD_dual}. To see that $\bm{\Psi} \succeq 0$, note that we can write $\bm{\Psi}$ as \[\bm{\Psi} =  \bm{A}^T\Bigg{(} \frac{\Vert \bm{\Bar{\nu}} \Vert_2}{2\sqrt{\epsilon}}\bm{I}_m -\frac{1}{4\sigma}\Big{(}\Bar{\bm{\nu}}-\frac{\Vert \bm{\Bar{\nu}} \Vert_2}{\sqrt{\epsilon}}\bm{b}\Big{)}\Big{(}\Bar{\bm{\nu}}-\frac{\Vert \bm{\Bar{\nu}} \Vert_2}{\sqrt{\epsilon}}\bm{b}\Big{)}^T\Bigg{)}\bm{A}.\] Since any matrix of the form $\bm{X}^T\bm{Y}\bm{X}$ is positive semidefinite provided that $\bm{Y}$ is positive semidefinite, it suffices to show that $\frac{\Vert \bm{\Bar{\nu}} \Vert_2}{2\sqrt{\epsilon}}\bm{I}_m \succeq \frac{1}{4\sigma}\Big{(}\Bar{\bm{\nu}}-\frac{\Vert \bm{\Bar{\nu}} \Vert_2}{\sqrt{\epsilon}}\bm{b}\Big{)}\Big{(}\Bar{\bm{\nu}}-\frac{\Vert \bm{\Bar{\nu}} \Vert_2}{\sqrt{\epsilon}}\bm{b}\Big{)}^T$.
Notice that the left hand side term of the matrix inequality has $m$ eigenvalues of the form $\frac{\Vert \bm{\Bar{\nu}} \Vert_2}{2\sqrt{\epsilon}}$ while the right hand side term of the inequality is a rank $1$ matrix with $\frac{1}{4\sigma}\Big{(}\Bar{\bm{\nu}}-\frac{\Vert \bm{\Bar{\nu}} \Vert_2}{\sqrt{\epsilon}}\bm{b}\Big{)}^T\Big{(}\Bar{\bm{\nu}}-\frac{\Vert \bm{\Bar{\nu}} \Vert_2}{\sqrt{\epsilon}}\bm{b}\Big{)}$  as its only non-zero eigenvalue. Recalling the definition of $\sigma$, it can be easily verified that $\Big{(}\Bar{\bm{\nu}}-\frac{\Vert \bm{\Bar{\nu}} \Vert_2}{\sqrt{\epsilon}}\bm{b}\Big{)}^T\Big{(}\Bar{\bm{\nu}}-\frac{\Vert \bm{\Bar{\nu}} \Vert_2}{\sqrt{\epsilon}}\bm{b}\Big{)} = \frac{2\sigma\Vert \bm{\Bar{\nu}} \Vert_2}{\sqrt{\epsilon}}$ by simple algebraic manipulation. Accordingly, we have $\frac{\Vert \bm{\Bar{\nu}} \Vert_2}{2\sqrt{\epsilon}}\bm{I}_m \succeq \frac{1}{4\sigma}\Big{(}\Bar{\bm{\nu}}-\frac{\Vert \bm{\Bar{\nu}} \Vert_2}{\sqrt{\epsilon}}\bm{b}\Big{)}\Big{(}\Bar{\bm{\nu}}-\frac{\Vert \bm{\Bar{\nu}} \Vert_2}{\sqrt{\epsilon}}\bm{b}\Big{)}^T \implies \bm{\Psi} \succeq 0$. Thus, we have established that $\Bar{\bm{S}} \succeq 0 \implies \Bar{s}_0 \in \Sigma^2_1[\bm{z}, \bm{x}]$. Finally, we note that 
\[\Bar{s}_0 + \Bar{s}_1(\epsilon - \Vert\bm{A}\bm{x} - \bm{b}\Vert_2^2) + \sum_{i=1}^n \Bar{t}_i(x_iz_i-x_i) + \sum_{i=1}^n \Bar{r}_i(z_i^2-z_i) = f(\bm{z}, \bm{x}) - \bm{b}^T\Bar{\bm{\nu}}+\sqrt{\epsilon}\Vert \Bar{\bm{\nu}}\Vert_2.\]}
We have shown that given an arbitrary feasible solution to \eqref{opt:BPD_dual}, we can construct a solution that is feasible to \eqref{opt:poly_relax} that achieves the same objective value. Note that this construction holds for any $d \geq 1$. Thus, for any $d \in \mathbb{N}$ the optimal value of \eqref{opt:poly_relax} is at least as high as the optimal value of \eqref{opt:MISOC_W_relax}.
\end{proof}

We have shown that for any value of $d$, \eqref{opt:poly_relax} produces a lower bound on the optimal value of \eqref{opt:our_formulation_W} at least as strong as the bound given by \eqref{opt:MISOC_W_relax}. Unfortunately, \eqref{opt:poly_relax} suffers from scalability challenges as it requires solving a positive semidefinite program with PSD constraints on matrices with dimension ${2n+d \choose d} \times {2n+d \choose d}$. We further discuss the scalability of \eqref{opt:poly_relax} in Section \ref{sec:experiments}. Note that since \eqref{opt:poly_relax} is a maximization problem, any feasible solution (in particular a nearly optimal one) still consists of a valid lower bound on the optimal value of \eqref{opt:our_formulation_W}.

\section{Branch-and-Bound} \label{sec:BnB}

In this section, we propose a branch-and-bound algorithm in the sense of \citep{Land2010, little1966branch} that computes certifiably optimal solutions to Problem \eqref{opt:our_formulation} by solving the mixed integer second order cone reformulation given by \eqref{opt:MISOC_W}. We state explicitly our subproblem strategy in Section \ref{ssec:subprroblems}, before stating our overall algorithmic approach in  Section \ref{ssec:overallbnb}.

\subsection{Subproblems} \label{ssec:subprroblems}

Henceforth, for simplicity we will assume the weights $w_i$ take value $1$ for all $i$. What follows generalizes immediately to the setting where this assumption does not hold. Notice that \eqref{opt:MISOC_W} can be equivalently written as the two stage optimization problem given by $\min_{\bm{z} \in \{0,1\}^n}h(\bm{z})$ where we define $h(\bm{z})$ as:
\begin{equation}
\begin{aligned}
    h(\bm{z}) = &\min_{\bm{x}, \bm{\theta} \in \mathbb{R}^n} & & \sum_{i=1}^n z_i + \frac{1}{\gamma} \sum_{i=1}^n\theta_i\\
    &\text{s.t.} & & \Vert\bm{A}\bm{x} - \bm{b}\Vert_2^2 \leq \epsilon, \ x_i^2 \leq z_i\theta_i \ \forall \ i, \ \theta_i \geq 0 \ \forall i.
\end{aligned} \label{opt:h(z)}
\end{equation} Note that in general, there exist binary vectors $\Bar{\bm{z}} \in \{0,1\}^n$ such that the optimization problem in \eqref{opt:h(z)} is infeasible. For any such $\Bar{\bm{z}}$, we define $h(\Bar{\bm{z}}) = \infty$. We construct an enumeration tree that branches on the entries of the binary vector $\bm{z}$ which models the support of $\bm{x}$. A (partial or complete) sparsity pattern is associated with each node in the tree and is defined by disjoint collections $\mathcal{I}_0, \mathcal{I}_1 \subseteq [n]$. For indices $i \in \mathcal{I}_0$, we constrain $z_i = 0$ and for indices $j \in \mathcal{I}_1$, we constrain $z_j = 1$. We say that $\mathcal{I}_0$ and $\mathcal{I}_1$ define a complete sparsity pattern if $|\mathcal{I}_0| + |\mathcal{I}_1| = n$, otherwise we say that $\mathcal{I}_0$ and $\mathcal{I}_1$ define a partial sparsity pattern. A node in the tree is said to be terminal if its associated sparsity pattern is complete.

Each node in the enumeration tree has an associated subproblem, defined by the collections $\mathcal{I}_0$ and $\mathcal{I}_1$, which is given by:

\begin{equation}
\begin{aligned}
    \min_{\bm{z}\in \{0,1\}^n} \quad & h(\bm{z}), \quad \text{s.t.} \,\, & z_i = 0 \,\, \forall \, i \in \mathcal{I}_0, z_j = 1 \,\, \forall \, j \in \mathcal{I}_1.
\end{aligned} \label{opt:partial_pattern}
\end{equation} Note that if $\mathcal{I}_0 = \mathcal{I}_1 = \emptyset$, \eqref{opt:partial_pattern} is equivalent to \eqref{opt:MISOC_W} (under the assumetion that $w_i=1$ for all $i$).

\subsubsection{Subproblem Lower Bound}

Let $\mathcal{I} = \mathcal{I}_0 \cup \mathcal{I}_1$. We obtain a lower bound for \eqref{opt:partial_pattern} by relaxing the binary variables that are not fixed ($z_i$ such that $i \notin \mathcal{I}$) to take values within the interval $[0, 1]$. The resulting lower bound is given by

\begin{equation}
\begin{aligned}
    &\min_{\bm{z}, \bm{x}, \bm{\theta} \in \mathbb{R}^n} & & \sum_{i=1}^n z_i + \frac{1}{\gamma} \sum_{i=1}^n\theta_i\\
    &\text{s.t.} & & \Vert\bm{A}\bm{x} - \bm{b}\Vert_2^2 \leq \epsilon, \ x_i^2 \leq z_i\theta_i \ \forall \ i,\ 0 \leq z_i \leq 1 \ \forall \ i \notin \mathcal{I},\\
    & & & z_i = 0 \ \forall \ i \in \mathcal{I}_0, \ z_i = 1 \ \forall \ i \in \mathcal{I}_1, \ \theta_i \geq 0 \ \forall \ i.
\end{aligned} \label{opt:subproblem_relax}
\end{equation} Notice that for an arbitrary set $\Bar{\mathcal{I}}_0 \subseteq [n]$, problems \eqref{opt:partial_pattern} and \eqref{opt:subproblem_relax} are infeasible if and only if the set $\{\bm{x} : \Vert \bm{A}\bm{x}-\bm{b} \Vert_2^2 \leq \epsilon, x_i = 0 \,\, \forall \, i \in \Bar{\mathcal{I}}_0\}$ is empty. Moreover, observe it immediately follows that if \eqref{opt:partial_pattern} and \eqref{opt:subproblem_relax} are infeasible for $\Bar{\mathcal{I}}_0$, then they are also infeasible for any set $\hat{\mathcal{I}}_0 \subseteq [n]$ satisfying $\Bar{\mathcal{I}}_0 \subseteq \hat{\mathcal{I}}_0$. We use this observation in section \ref{ssec:overallbnb} to generate feasibility cuts whenever an infeasible subproblem is encountered in the branch-and-bound tree. Using a similar argument as in the proof of Theorem \ref{thm:BPD_equiv}, it can be shown that when $\gamma \geq \gamma_0 = \max_{x \in \mathcal{X}} \Vert \bm{x} \Vert_\infty^2$, \eqref{opt:subproblem_relax} is equivalent to the convex optimization problem given by \eqref{opt:subproblem_primal}:
\begin{equation}
\begin{aligned}
    &\min_{\bm{x} \in \mathbb{R}^n} & & \vert \mathcal{I}_1\vert + \frac{1}{\gamma} \sum_{i \in \mathcal{I}_1} x_i^2 + \frac{2}{\sqrt{\gamma}} \sum_{i \notin \mathcal{I}}\vert x_i \vert\\
    &\text{s.t.} & & \Vert\bm{A}\bm{x} - \bm{b}\Vert_2^2 \leq \epsilon, \ x_i = 0 \ \forall \ i \in \mathcal{I}_0,
\end{aligned} \label{opt:subproblem_primal}
\end{equation} where if $\bm{x}^\star$ is optimal to \eqref{opt:subproblem_primal}, then $(\bm{z}^\star, \bm{x}^\star, \bm{\theta}^\star)$ is optimal to \eqref{opt:subproblem_relax} taking $z_i^\star = \frac{\vert x_i \vert ^\star}{\sqrt{\gamma}}$ and $\theta_i^\star = x_i^{\star 2}$. Problem \eqref{opt:subproblem_primal} is a second order cone problem {\color{black}that} emits the following dual:
\begin{equation}
\begin{aligned}
    &\max_{\bm{\nu} \in \mathbb{R}^m} & & \vert \mathcal{I}_1\vert + \bm{b}^T\bm{\nu} - \sqrt{\epsilon} \Vert \bm{\nu} \Vert_2 - \frac{\gamma}{4} \bm{\nu}^T \sum_{i \in \mathcal{I}_1} (A_iA_i^T)\bm{\nu}\ \text{s.t.} \ \vert \bm{\nu}^TA_i \vert \leq \frac{2}{\sqrt{\gamma}} \ \forall \ i \notin \mathcal{I}.
\end{aligned} \label{opt:subproblem_dual}
\end{equation} Strong duality holds between \eqref{opt:subproblem_primal} and \eqref{opt:subproblem_dual} since $\bm{\nu} = 0$ is always a strictly feasible point in \eqref{opt:subproblem_dual} for any collections $\mathcal{I}_0, \mathcal{I}_1$ \citep{boyd2004convex}. In our branch-and-bound implementation described in \ref{ssec:overallbnb}, we compute lower bounds by solving \eqref{opt:subproblem_primal} using \verb|Gurobi|. We note that depending on the solver employed, it may be beneficial to compute lower bounds using \eqref{opt:subproblem_dual} in place of \eqref{opt:subproblem_primal}.

%We present two versions of our branch-and-bound algorithm in \ref{ssec:overallbnb}. In the first, we compute lower bounds by solving \eqref{opt:subproblem_primal} and refer to this version as the primal algorithm. The second version is the dual algorithm which obtains lower bounds by solving \eqref{opt:subproblem_dual}.

\subsubsection{Subproblem Upper Bound}

%We compute an upper bound for \eqref{opt:partial_pattern} using related but slightly different methods in the primal and dual versions of our branch-and-bound algorithm.
Recall that solving Problem \eqref{opt:main_problem} can be interpreted as determining the minimum number of columns from the input matrix $\bm{A}$ that must be selected such that the residual of the projection of the input vector $\bm{b}$ onto the span of the selected columns has $\ell_2$ norm equal to at most $\sqrt{\epsilon}$. The same interpretation holds for Problem \eqref{opt:our_formulation} under the assumption that the $\ell_2$ regularization term in the objective is negligible. 

Consider an arbitrary node in the branch-and-bound algorithm and let $\bm{x}^\star$ denote an optimal solution to \eqref{opt:subproblem_primal}. To obtain an {\color{black}upper bound} to \eqref{opt:partial_pattern}, we define an ordering on the columns of $\bm{A}$ and iteratively select columns from this ordering from largest to smallest until the $\ell_2$ norm of the residual of the projection of $\bm{b}$ onto the selected columns is less than $\sqrt{\epsilon}$. The ordering of the columns of $\bm{A}$ corresponds to sorting the entries of $\bm{x}^\star$ in decreasing absolute value. Specifically, we have $A_i \succeq A_j \iff \vert x^\star_i \vert \geq \vert x^\star_j \vert$. Algorithm \ref{alg:upper_bound} outlines this approach. For an arbitrary collection of indices $\mathcal{I}_t \subseteq [n]$, we let $\bm{A}(\mathcal{I}_t) \in \mathbb{R}^{m \times \vert  \mathcal{I}_t \vert}$ denote the matrix obtained by stacking the $\vert \mathcal{I}_t \vert$ columns of $\bm{A}$ corresponding to the indices in the set $\mathcal{I}_t$. Specifically, if $i_k$ denotes the $k^{th}$ entry of $\mathcal{I}_t$, then the $k^{th}$ column of $\bm{A}(\mathcal{I}_t)$ is $A_i$. Let $\bm{x}^{ub}$ denote the output of Algorithm \ref{alg:upper_bound}. The objective value achieved by $\bm{x}^{ub}$ in \eqref{opt:our_formulation} is the upper bound.

\begin{algorithm}
\caption{Branch-and-Bound Upper Bound}\label{alg:upper_bound}
\begin{algorithmic}[1]
\Require $\bm{A} \in \mathbb{R}^{m \times n}, \bm{b} \in \mathbb{R}^m, \epsilon > 0$. An optimal solution $\bm{x}^\star$ of \eqref{opt:subproblem_primal}.
\Ensure $\Bar{\bm{x}}$ is feasible to \eqref{opt:our_formulation}. 
\State $\mathcal{I}_0 \xleftarrow[]{} \emptyset$;
\State $\bm{r}_0 \xleftarrow[]{} \bm{b}$;
\State $t \xleftarrow[]{} 0$;
\State $\delta_0 \xleftarrow[]{} \Vert \bm{r}_0 \Vert_2^2$\;
\While{$\delta_t > \epsilon$}
    \State $i_t \xleftarrow[]{} \argmax_{i \in [n]\setminus \mathcal{I}_t} \vert x^\star_i \vert$;
    \State $\mathcal{I}_{t+1} \xleftarrow[]{} \mathcal{I}_t \cup i_t$;
    \State $\bm{x}_{t+1} \xleftarrow[]{} \big{[}\bm{A}(\mathcal{I}_{t+1})^T\bm{A}(\mathcal{I}_{t+1})\big{]}^\dagger\bm{A}(\mathcal{I}_{t+1})^T\bm{b}$;
    \State $\bm{r}_{t+1} \xleftarrow[]{} \bm{b} - \bm{A}(\mathcal{I}_{t+1})\bm{x}_{t+1}$;
    \State $\delta_{t+1} \xleftarrow[]{} \Vert \bm{r}_{t+1} \Vert_2^2$;
    \State $t \xleftarrow[]{} t + 1$;
\EndWhile
\State Define $\Bar{\bm{x}} \in \mathbb{R}^n$ as $\Bar{x}(i_k) = x_t(k)$ for $i_k \in \mathcal{I}_t$ and $\Bar{x}(i_k) = 0$ otherwise;
\State Return $\Bar{\bm{x}}$.
\end{algorithmic}
\end{algorithm}

The computational bottleneck of Algorithm \ref{alg:upper_bound} is computing the matrix inverse of \\  $\bm{A}(\mathcal{I}_{t})^T\bm{A}(\mathcal{I}_{t}) \in \mathbb{R}^{\vert  \mathcal{I}_t \vert \times \vert  \mathcal{I}_t \vert}$ at each iteration. Doing so explicitly at each iteration $t$ would require $O(\vert  \mathcal{I}_t \vert^3)$ operations. Letting $k^\star = \Vert \bm{x}^{ub} \Vert_0$ where $\bm{x}^{ub}$ is the output of Algorithm \ref{alg:upper_bound}, the total cost of executing these matrix inversions is \[\sum_{t=1}^{k^\star}\vert  \mathcal{I}_t \vert^3 = \sum_{t=1}^{k^\star} t^3 = \bigg{[}\frac{k^\star(k^\star+1)}{2}\bigg{]}^2 = O(k^{\star 4})\] However, it is possible to accelerate the computation of these matrix inverses by leveraging the fact that $\bm{A}(\mathcal{I}_{t})$ and $\bm{A}(\mathcal{I}_{t+1})$ differ only by the addition of one column and leveraging block matrix inversion which states that for matrices $\bm{C} \in \mathbb{R}^{n_1 \times n_1}, \bm{D} \in \mathbb{R}^{n_2 \times n_2}$ and $\bm{U}, \bm{V} \in \mathbb{R}^{n_1 \times n_2}$, we have:

\[\left[ 
\begin{array}{c|c} 
  \bm{C} & \bm{U} \\ 
  \hline 
  \bm{V}^T & \bm{D} 
\end{array} 
\right]^\dagger = \left[ 
\begin{array}{c|c} 
  \bm{C}^\dagger + \bm{C}^\dagger\bm{U}(\bm{D}-\bm{V}^T\bm{C}^\dagger\bm{U})^{-1}\bm{V}^T\bm{C}^\dagger & -\bm{C}^\dagger\bm{U}(\bm{D}-\bm{V}^T\bm{C}^\dagger\bm{U})^{-1} \\ 
  \hline 
  -(\bm{D}-\bm{V}^T\bm{C}^\dagger\bm{U})^{-1}\bm{V}^T\bm{C}^\dagger & (\bm{D}-\bm{V}^T\bm{C}^\dagger\bm{U})^{-1}
\end{array} 
\right] \] where it is assumed that the matrix $(\bm{D}-\bm{V}^T\bm{C}^\dagger\bm{U})$ is invertible \citep{petersen2008matrix}. Letting $n_1 = \vert \mathcal{I}_t \vert, n_2 = 1, \bm{C} = \bm{A}(\mathcal{I}_{t})^T\bm{A}(\mathcal{I}_{t}), \bm{U}=\bm{V} = \bm{A}(\mathcal{I}_{t})^Ta_{i_t}$, and $\bm{D} = a_{i_t}^Ta_{i_t}$, we can compute the matrix inverse of $\bm{A}(\mathcal{I}_{t+1})^T\bm{A}(\mathcal{I}_{t+1})$ using $O(\vert  \mathcal{I}_t \vert^2 + m\vert \mathcal{I}_t \vert)$ operations. With this implementation, the total cost of executing matrix inversions in Algorithm \ref{alg:upper_bound} becomes %\[\sum_{t=1}^{k^\star} \vert \mathcal{I}_t \vert^2 + m\vert \mathcal{I}_t \vert = \sum_{t=1}^{k^\star} t^2 + m\sum_{t=1}^{k^\star} t = \frac{k^\star(k^\star+1)(2k^\star+1)}{6}+\frac{mk^\star(k^\star+1)}{2} = O(k^{\star 3} + mk^{\star 2})\]
\[{\color{black}\sum_{t=1}^{k^\star} \vert \mathcal{I}_t \vert^2 + m\vert \mathcal{I}_t \vert = \sum_{t=1}^{k^\star} t^2 + m\sum_{t=1}^{k^\star} t = O(k^{\star 3} + mk^{\star 2})}\]which is a significant improvement over the naive $O(k^{\star 4})$ approach.

%We now specify how to compute an upperbound in the dual version of the branch-and-bound algorithm. Notice that Algorithm \ref{alg:upper_bound} requires as input an optimal solution $\bm{x}^\star$ to \eqref{opt:subproblem_primal}. This poses no issue in the primal version as we solve \eqref{opt:subproblem_primal} to obtain the lower bound. However, in the dual version, we obtain the lower bound instead by solving \eqref{opt:subproblem_dual}. Consider an arbitrary node in the dual branch-and-bound algorithm and let $\bm{\nu}^\star$ denote an optimal solution to \eqref{opt:subproblem_dual}.

%To obtain an upperbound to \eqref{opt:partial_pattern}, we define an ordering on the columns of $\bm{A}$ and iteratively select columns from this ordering from largest to smallest until the $\ell_2$ norm of the residual of the projection of $\bm{b}$ onto the selected columns is less than $\sqrt{\epsilon}$.

\subsection{Branch-and-Bound Algorithm} \label{ssec:overallbnb}

Having stated how we can compute upper and lower bounds to \eqref{opt:h(z)} at each node in the enumeration tree, we are now ready to present the branch-and-bound algorithm in its entirety. Algorithm \ref{alg:BNB} describes our approach which is based on the implementation by \citep{bertsimas2023SLR}. Though branching rules and node selection rules for branch-and-bound algorithms form a rich literature \citep{MORRISON201679}, we follow the design of \cite{bertsimas2023SLR} and employ the most fractional branching rule and least lower bound node selection rule.

Explicitly, for an arbitrary non-terminal node $p$, let $\bm{z}^*$ be the optimal vector $\bm{z}$ of the node relaxation given by \eqref{opt:subproblem_relax}. We branch on entry $i^\star = \argmin_{i \notin \mathcal{I}_0 \cup \mathcal{I}_1} |z_i - 0.5|$. When selecting a node to investigate, we select the node whose lower bound is equal to the global lower bound. If multiple such nodes exist, we choose arbitrarily from the collection of nodes satisfying this condition. Suppose that a given node produces a subproblem \eqref{opt:subproblem_primal} that is infeasible where we let $\Bar{\mathcal{I}}_0$ correspond to the zero index set of this node. Note that this implies that all child nodes of this node will also produce infeasible subproblems. Accordingly, to prune this region of the parameter space entirely, we introduce the feasibility cut $\sum_{i \in \Bar{\mathcal{I}}_0} z_i \geq 1$. Let $f(\bm{x}) = \Vert \bm{x} \Vert_0 +\frac{1}{\gamma} \Vert \bm{x} \Vert_2^2$, the objective function of \eqref{opt:our_formulation} and let $g(\mathcal{I}_0, \mathcal{I}_1)$ denote the optimal value of \eqref{opt:subproblem_primal} for any collections $\mathcal{I}_0, \mathcal{I}_1 \subseteq [n], \mathcal{I}_0 \cap \mathcal{I}_1 = \emptyset$. The final objective value returned by Algorithm \ref{alg:BNB} is given by $\min_i f(\bm{x}_i)$ where $\{\bm{x}_i\}_i$ denotes the collection of feasible solutions produced by Algorithm \ref{alg:upper_bound} at any point during the execution of Algorithm \ref{alg:BNB}. The output lower bound of Algorithm \ref{alg:BNB} is given by $\min_{(\mathcal{I}_0, \mathcal{I}_1) \in \mathcal{N}} g(\mathcal{I}_0, \mathcal{I}_1)$ where $\mathcal{N}$ denotes the set of non-discarded nodes upon the termination of Algorithm \ref{alg:BNB}.

{\color{black}Let $lb$ denote a lower bound to a given arbitrary minimization problem and let $ub$ denote the objective value achieved by a feasible solution $\Bar{\bm{x}}$ to the minimization problem. We call the solution $\Bar{\bm{x}}$ a \textit{$\delta$ globally optimal solution} to the given minimization problem if we have $lb \leq ub \leq (1+\delta) lb$.}

\begin{theorem}

Algorithm \ref{alg:BNB} terminates in a finite number of iterations and returns a $\delta$ globally optimal solution to \eqref{opt:main_problem}.

\end{theorem}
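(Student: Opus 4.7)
The plan is to establish the two claims separately: finite termination, and that the returned solution satisfies $f(\bar{\bm{x}}) \le f^\star + \delta$, where $f^\star$ denotes the optimal value of \eqref{opt:our_formulation} (equivalently, of its reformulation \eqref{opt:MISOC_W}). The standard branch-and-bound correctness argument will apply once the right loop invariants are identified.

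For finite termination, I would argue combinatorially. Each branching step selects a coordinate $i^\star \notin \mathcal{I}_0 \cup \mathcal{I}_1$ and creates two children that fix $z_{i^\star} = 0$ and $z_{i^\star} = 1$ respectively, so along any root-to-leaf path the quantity $|\mathcal{I}_0| + |\mathcal{I}_1|$ increases by one per level and is bounded by $n$. Hence the enumeration tree has depth at most $n$ and at most $2^{n+1}-1$ nodes. Since each iteration either prunes or branches on the selected node and no node is revisited, the algorithm terminates after at most $O(2^n)$ iterations.

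For $\delta$-optimality, the key is to maintain the invariant $\ell \le f^\star \le u$ throughout, where $\ell = \min_{(\mathcal{I}_0,\mathcal{I}_1)\in\mathcal{N}} g(\mathcal{I}_0,\mathcal{I}_1)$ is the global lower bound and $u$ is the incumbent objective value. By induction on iterations: (i) each $g(\mathcal{I}_0,\mathcal{I}_1)$ lower bounds the optimal value of \eqref{opt:partial_pattern} because \eqref{opt:subproblem_relax} is a continuous relaxation of it; (ii) branching on $i^\star$ partitions the parent's feasible set into $\{z_{i^\star}=0\} \cup \{z_{i^\star}=1\}$, so the parent's optimal value equals the minimum of its children's optimal values and $\ell$ is preserved as a valid bound; (iii) discarding a node because \eqref{opt:subproblem_primal} is infeasible removes only $\bm{z}$-patterns for which \eqref{opt:MISOC_W} itself is infeasible (as noted after \eqref{opt:subproblem_relax}), and the feasibility cut $\sum_{i\in\bar{\mathcal{I}}_0} z_i \ge 1$ eliminates exactly the patterns with $\mathcal{I}_0 \supseteq \bar{\mathcal{I}}_0$, all of which inherit this infeasibility; (iv) pruning via the bound test $g(\mathcal{I}_0,\mathcal{I}_1) \ge u$ cannot discard a strict improver of the incumbent. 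Simultaneously, every update to $u$ comes from Algorithm \ref{alg:upper_bound}, whose output is feasible for \eqref{opt:our_formulation}, so $u \ge f^\star$ always. Upon termination the stopping criterion enforces $u - \ell \le \delta$, and combining with $\ell \le f^\star \le u$ yields $f(\bar{\bm{x}}) = u \le f^\star + \delta$.

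The main obstacle is the bookkeeping around pruning and feasibility cuts: one must verify carefully that neither operation ever removes a candidate $\bm{z}$-pattern that could attain value below the current incumbent. This reduces to the observation, already made in the text, that infeasibility of \eqref{opt:subproblem_primal} under $z_i=0$ for $i\in\bar{\mathcal{I}}_0$ is monotone in $\bar{\mathcal{I}}_0$, so the feasibility cut is valid. With this in place, finite termination plus the invariant $\ell \le f^\star \le u$ together with the $\delta$-gap stopping rule complete the proof.
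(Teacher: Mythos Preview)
Your approach is essentially the same as the paper's: both argue finite termination from the finiteness of the set of (partial) sparsity patterns and the fact that no node is revisited, and both derive $\delta$-optimality from the validity of the global lower bound, the feasibility of the incumbent, and the stopping criterion. Your argument is in fact considerably more detailed than the paper's, which simply asserts that $lb$ is a valid global lower bound and cites a prior reference; your inductive bookkeeping around branching, pruning, and feasibility cuts fills in exactly what the paper leaves implicit.

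One small technical discrepancy: the algorithm's stopping rule is the \emph{relative} gap $\frac{ub-lb}{ub} \le \delta$, not the additive $ub - lb \le \delta$ you use. Consequently the correct conclusion from your invariant $\ell \le f^\star \le u$ is $\frac{f(\bar{\bm{x}}) - f^\star}{f(\bar{\bm{x}})} \le \delta$ (equivalently $f(\bar{\bm{x}}) \le f^\star/(1-\delta)$), not $f(\bar{\bm{x}}) \le f^\star + \delta$. This is a one-line fix and does not affect the structure of your argument.
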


\begin{proof} The proof follows the proof of Theorem 21 in \cite{bertsimas2023SLR}. Note that Algorithm \ref{alg:BNB} can never visit a node more than once and that there is a finite number of partial and complete sparsity patterns (each corresponding to a possible node in the tree) because the set $\{0,1\}^n$ is discrete. Thus, Algorithm \ref{alg:BNB} terminates in a finite number of iterations. Moreover, upon termination we must have $\frac{ub-lb}{ub} \leq \delta$, therefore the output solution $\bm{\bar{x}}$ is $\delta$ globally optimal to problem \eqref{opt:our_formulation} by definition since $lb$ consists of a global lower bound and $\bm{\bar{x}}$ is feasible to \eqref{opt:our_formulation}. \end{proof}

\begin{algorithm}
\caption{Optimal Compressed Sensing}\label{alg:BNB}
\begin{algorithmic}[1]
\Require $\bm{A} \in \mathbb{R}^{m \times n}, \bm{b} \in \mathbb{R}^m, \epsilon,\gamma \in \mathbb{R}^+$. Tolerance parameter $\delta \geq 0$.
\Ensure $\bm{\bar{x}}$ solves \eqref{opt:our_formulation} within the optimality tolerance $\delta$.
\If{$\Vert (\bm{I} - \bm{A}\big{[}\bm{A}^T\bm{A}\big{]}^\dagger\bm{A}^T) \bm{b}  \Vert_2^2 > \epsilon$}
    \State Return $\emptyset$;
\EndIf
\If{$\Vert\bm{b}\Vert_2^2 \leq \epsilon$}
    \State Return $\bm{0}$;
\EndIf
\State $p_0 \xleftarrow[]{} (\mathcal{I}_0, \mathcal{I}_1) = (\emptyset, \emptyset)$;
\State $\mathcal{N} \xleftarrow[]{} \{p_0\}$;
\State $lb \xleftarrow[]{}$ optimal value of \eqref{opt:subproblem_primal};
\State $\bm{\bar{x}} \xleftarrow[]{}$ solution returned by Algorithm \ref{alg:upper_bound};
\State $ub \xleftarrow[]{} f(\bm{\bar{x}})$;
\While{$\frac{ub - lb}{ub} > \epsilon$}
    \State select $(\mathcal{I}_0, \mathcal{I}_1) \in \mathcal{N}$ according to the node selection rule;
    \State select an index $i \notin \mathcal{I}_0 \cup \mathcal{I}_1$ according to the branching rule;
    \For{$k = 0, 1$}{}
        \State $l \xleftarrow[]{} (k + 1) \text{ mod } 2$;
        \State newnode $\xleftarrow[]{} \Big{(}\big{(}\mathcal{I}_k \cup i\big{)}, \mathcal{I}_l \Big{)}$;
        \If{newnode violates an existing feasibility cut}
            \State continue;
        \EndIf
        \If{newnode is infeasible}
            \State Add the feasibility cut $\sum_{i \in \mathcal{I}_0} z_i \geq 1$;
        \EndIf
        \State \emph{lower} $\xleftarrow[]{}$ lowerBound(newnode);
        \State \emph{upper} $\xleftarrow[]{}$ upperBound(newnode) with feasible point $\bm{x}^\star$;
        \If{\emph{upper} $< ub$}
            \State $ub \xleftarrow[]{}$ \emph{upper};
            \State $\bm{\bar{x}} \xleftarrow[]{} \bm{x}^\star$;
            \State remove any node in $\mathcal{N}$ with \emph{lower} $\geq ub$;
        \EndIf
        \If{\emph{lower} $< ub$}
            \State add newnode to $\mathcal{N}$;
        \EndIf
    \EndFor
    \State remove $(\mathcal{I}_0, \mathcal{I}_1)$ from $\mathcal{N}$;
    \State update $lb$ to be the lowest value of \emph{lower} over $\mathcal{N}$;
\EndWhile
\State Return $\bm{\bar{x}}$, $lb$.
\end{algorithmic}
\end{algorithm}

We conclude the discussion of Algorithm \ref{alg:BNB} by describing two modifications that accelerate its execution time (or equivalently, improve its scalability) at the expense of sacrificing the universal optimality guarantee by drawing on techniques from the high dimensional sparse machine learning literature \citep{bertsimas2022backbone} and the deep learning literature \citep{GoodBengCour16}.

\subsubsection{Backbone Optimization}

Note that the total number of terminal nodes in the branch-and-bound tree is at most $\sum_{k=1}^n {n \choose k} = 2^n$ in the worst case so the total number of nodes can be upper bounded by $2^{n+1}-1$. Since the runtime of Algorithm \ref{alg:BNB} (and the feasible space) is proportional to the number of nodes explored which grows exponentially in $n$, reducing $n$ leads to reduced run time. Observe that if we knew in advance that the support of the optimal solution to \eqref{opt:our_formulation} was contained within a set of cardinality less than $n$, then we could run Algorithm \eqref{alg:BNB} on the corresponding reduced feature set which would result in improving the runtime of \eqref{alg:BNB} while preserving its optimality guarantee. Formally, let $\bm{x}^\star$ denote an optimal solution to \eqref{opt:our_formulation}. If we know a priori that $\text{support}(\bm{x}^\star) \subseteq \mathcal{I} \subset [n]$, then we can pass $\bm{A}(\mathcal{I})$ to Algorithm \ref{alg:BNB} in place of $\bm{A}$ without discarding $\bm{x}^\star$ from the feasible set. The speed up can be quite significant when $\vert \mathcal{I}\vert \ll n$. 

Knowing with certainty that $\text{support}(\bm{x}^\star) \subseteq \mathcal{I} \subset [n]$ a priori is too strong an assumption, however a more reasonable assumption is knowing a priori that with high probability there exists a good solution $\Bar{\bm{x}}$ with $\text{support}(\Bar{\bm{x}}) \subseteq \mathcal{I} \subset [n]$. In this setting, we can still pass $\bm{A}(\mathcal{I})$ to Algorithm \ref{alg:BNB} and benefit from an improved runtime at the expense of sacrificing optimality guarantees. In this setting, the columns of $\bm{A}(\mathcal{I})$ can be interpreted as a backbone for \eqref{opt:our_formulation} \citep{bertsimas2022backbone}. In practice, $\mathcal{I}$ can be taken to be the set of features selected by some heuristic method. In Section \ref{sec:experiments}, we take $\mathcal{I} = \{i: \vert \Bar{x}_i \vert \geq 10^{-6}\}$ where $\Bar{\bm{x}}$ is an optimal solution to \eqref{opt:BPD}.

\subsubsection{Early Stopping}

A common property of branch-and-bound algorithms is that the algorithm quickly arrives at an optimal (or near-optimal) solution early during the optimization procedure and spends the majority of its execution time improving the lower bound to obtain a certificate of optimality. %We will illustrate in Section \ref{sec:experiments} that the upper bound produced by Algorithm \ref{alg:BNB} tends to only decrease meaningfully during the early phases of the optimization process while the lower bound consistently increases over time.
Accordingly, this motivates halting Algorithm \ref{alg:BNB} before it terminates and taking its upper bound at the time of termination to be its output. Doing so is likely to still yield a high quality solution while reducing the Algorithm's runtime. In Section \ref{sec:experiments}, we place an explicit time limit on Algorithm \ref{alg:BNB} and return the current upper bound if the Algorithm has not already terminated before reaching the time limit. Note that this approach shares strong connections with early stopping in the training of neural networks \citep{GoodBengCour16}. A well studied property of over-parameterized neural networks is that as the optimization procedure {\color{black}progresses}, the error on the training data continues to decrease though the validation error plateaus and sometimes even increases. Given that the validation error is the metric of greater import, a common network training technique is to stop the optimization procedure after the validation error has not decreased for a prespecified number of iterations. To illustrate the connection in the case of Algorithm \ref{alg:BNB}, the upper bound loosely plays the role of the validation error while the lower bound loosely plays the role of the training error. Note that the neural network literature suggests an alternate approach to early stopping Algorithm \ref{alg:BNB} (instead of an explicit time limit) by terminating the algorithm after the upper bound has remained unchanged after visiting some prespecified number of nodes in the enumeration tree.

\section{Computational Results} \label{sec:experiments}

We evaluate the performance of our branch-and-bound algorithm (Algorithm \ref{alg:BNB}, with $\gamma = \sqrt{n}$), our second order cone lower bound \eqref{opt:MISOC_W_relax} (with $\gamma = \sqrt{n}$) and our semidefinite lower bound \eqref{opt:poly_relax} (with $\gamma = \sqrt{n}$ and $d=1$) implemented in Julia 1.5.2 using the \verb|JuMP.jl| package version 0.21.7, using \verb|Gurobi| version 9.0.3 to solve all second order cone optimization (sub)problems and using \verb|Mosek| version 9.3 to solve all semidefinite optimization problems. We compare our methods against Basis Pursuit Denoising (BPD) given by \eqref{opt:BPD}, Iterative Reweighted $\ell_1$ Minimizaton (IRWL1) described in Section \ref{sec:IRWL1} and Orthogonal Matching Pursuit (OMP) described in Section \ref{sec:OMP}. We perform experiments {\color{black}using synthetic data and two real world data sets.} We conduct our experiments on MIT’s Supercloud Cluster \citep{reuther2018interactive}, which hosts Intel Xeon Platinum 8260 processors {\color{black}and cores with $4$GB RAM}. To bridge the gap between theory and practice, we have made our code freely available on \url{GitHub} at \url{github.com/NicholasJohnson2020/DiscreteCompressedSensing.jl}.

\subsection{Synthetic Data Experiments}

To evaluate the performance of Algorithm \ref{alg:BNB}, BPD, IRWL1 and OMP on synthetic data, we consider the sparsity of the solution returned by each method, its accuracy (ACC), true positive rate (TPR) and true negative rate (TNR). Let $\bm{x}^{true} \in \mathbb{R}^n$ denote the ground truth and consider an arbitrary vector $\hat{\bm{x}} \in \mathbb{R}^n$. Let $\mathcal{I}^{true} = \{i : \vert x^{true}_i \vert > 10^{-4}\}$, $\hat{\mathcal{I}} = \{i : \vert \hat{x}_i \vert > 10^{-4}\}$. The sparsity of $\hat{\bm{x}}$ is given by $\vert \hat{\mathcal{I}} \vert$. We define the accuracy of $\hat{\bm{x}}$ as \[ACC(\hat{\bm{x}}) = \frac{\sum_{i \in \mathcal{I}^{true}} \mathbbm{1}\{\vert \hat{x}_i \vert > 10^{-4}\} + \sum_{i \notin \mathcal{I}^{true}} \mathbbm{1}\{\vert \hat{x}_i \vert \leq 10^{-4}\}}{n}.\] Similarly, we define the true positive rate of $\hat{\bm{x}}$ as \[TPR(\hat{\bm{x}}) = \frac{\sum_{i \in \mathcal{I}^{true}} \mathbbm{1}\{\vert \hat{x}_i \vert > 10^{-4}\}}{\vert \hat{\mathcal{I}} \vert},\] and we define the true negative rate of $\hat{\bm{x}}$ as \[TNR(\hat{\bm{x}}) = \frac{\sum_{i \notin \mathcal{I}^{true}} \mathbbm{1}\{\vert \hat{x}_i \vert \leq 10^{-4}\}}{n-\vert \hat{\mathcal{I}} \vert}.\] To evaluate the performance of \eqref{opt:MISOC_W_relax} and \eqref{opt:poly_relax}, we consider the strength of the lower bound and execution time of each method. We seek to answer the following questions:

\begin{enumerate}
    \item How does the performance of Algorithm \ref{alg:BNB} compare to state-of-the-art methods such as BPD, IRWL1 and OMP on synthetic data?
    \item How is the performance of Algorithm \ref{alg:BNB} affected by the number of features $n$, the underlying sparsity $k$ of the ground truth, and the tolerance parameter $\epsilon$?
    \item How does the strength of the lower bound produced by \eqref{opt:poly_relax} compare to that produced by \eqref{opt:MISOC_W_relax}?
\end{enumerate}

\subsubsection{Synthetic Data Generation}

To generate synthetic data $\bm{x} \in \mathbb{R}^n, \bm{A} \in \mathbb{R}^{m \times n}$ and $\bm{b} \in \mathbb{R}^m$, we first select a random subset of indices $\mathcal{I}^{true} \subset [n]$ that has cardinality $k$ ($\vert \mathcal{I}^{true} \vert = k$) and sample $x_i \sim N(0, \frac{\sigma^2}{n})$ for $i \in \mathcal{I}^{true}$ (for $i \notin \mathcal{I}^{true}$, we fix $x_i=0$). Next, we sample $A_{ij} \sim N(0, \frac{\sigma^2}{n})$ where $\sigma > 0$ is a parameter that controls the signal to noise ratio. We fix $\sigma = 10$ and $m=100$ throughout all experiments unless stated otherwise. Next, we set $\bm{b} = \bm{A}\bm{x} + \bm{n}$ where $n_j \sim N(0, \sigma^2)$. Finally, we set $\epsilon = \alpha \Vert \bm{b} \Vert_2^2$. $\alpha \in [0,1]$ is a parameter that can be thought of as controlling the proportion of observations that are allowed to go unexplained by a solution to \eqref{opt:our_formulation}.

\subsubsection{Sensitivity to \texorpdfstring{$n$}{n}} \label{sssec:n_exp}

We present a comparison of Algorithm \ref{alg:BNB} with BPD, IRWL1 and OMP as we vary the number of features $n$. In these experiments, we fixed $k=10$, and $\alpha=0.2$ across all trials. We varied $n \in \{100, 200, 300, 400, 500, 600, 700, 800\}$ and we performed $100$ trials for each value of $n$. We give Algorithm \ref{alg:BNB} a cutoff time of $10$ minutes. For IRWL1, we terminate the algorithm after the $50^{th}$ iteration or after two subsequent iterates are equal up to numerical tolerance. Formally, letting $\Bar{\bm{x}}_t$ denote the iterate after iteration $t$ of IRWL1, we terminate the algorithm if either $t > 50$ or if $\Vert \Bar{\bm{x}}_t - \Bar{\bm{x}}_{t-1} \Vert_2 \leq 10^{-6}$. Additionally, we further sparsify the solutions returned by BPD (respectively IRWL1) by performing a greedy rounding following the procedure defined by Algorithm \ref{alg:upper_bound} where we pass the solution returned by BPD (respectively IRWL1) as input to the algorithm in place of an optimal solution to \eqref{opt:subproblem_primal}.

We report the sparsity, accuracy (ACC), true positive rate (TPR) and true negative rate (TNR) for each method in Figure \ref{fig:synthetic_N}. We additionally report the sparsity accuracy and execution time for each method in {\color{black} Tables \ref{tbl:N_sparsity}, \ref{tbl:N_acc} and \ref{tbl:N_time}}. The performance metric of greatest interest is the sparsity. Our main findings from this set of experiments are:
\begin{enumerate}
    \item Algorithm \ref{alg:BNB} systematically produces sparser solutions than OMP, IRWL1 and BPD. This trend holds in all but one trial (see Table \ref{tbl:N_sparsity}). Algorithm \ref{alg:BNB} on average produces solutions that are $2.71\%$ more sparse than OMP, $16.62\%$ more sparse than BPD and $6.04\%$ more sparse than IRWL1. BPD is the poorest performing method in terms of sparsity of the fitted solutions. We remind the reader that sparsity is computed only after a greedy rounding of the BPD (respectively IRWL1) solution. The sparsity of the BPD (respectively IRWL1) solution prior to rounding is much greater. Indeed, before further sparsifying the BPD (respectively IRWL1) solution, the solution returned by Algorithm \ref{alg:BNB} is on average $66.33\%$ (respectively $6.21\%$) more sparse than the BPD (respectively IRWL1) solution. The sparsity of solutions returned by all methods increases as the number of features $n$ increases.
    
    \item Algorithm \ref{alg:BNB} marginally outperforms the benchmark methods on accuracy with the exception of the first two parameter configurations ($n=100$ and $n=200$, see Table \ref{tbl:N_acc}). The accuracy of all methods tends to trend upwards with increasing $n$.
    
    \item The TPR and TNR of all methods are roughly comparable across these experiments. The TPR of all methods decreases while the TNR increases as the number of features $n$ is increased. {\color{black} The sharp drop off in the TPR of all methods as $n$ increases from $100$ to $400$ is consistent with the number of features selected by each method increasing sharply as $n$ increases from $100$ to $400$. The latter results in the denominator used in the TPR computation to grow sharply which produces the observed behavior.}
    
\end{enumerate}

\begin{figure*}[h]\centering
  \includegraphics[width=0.9\textwidth]{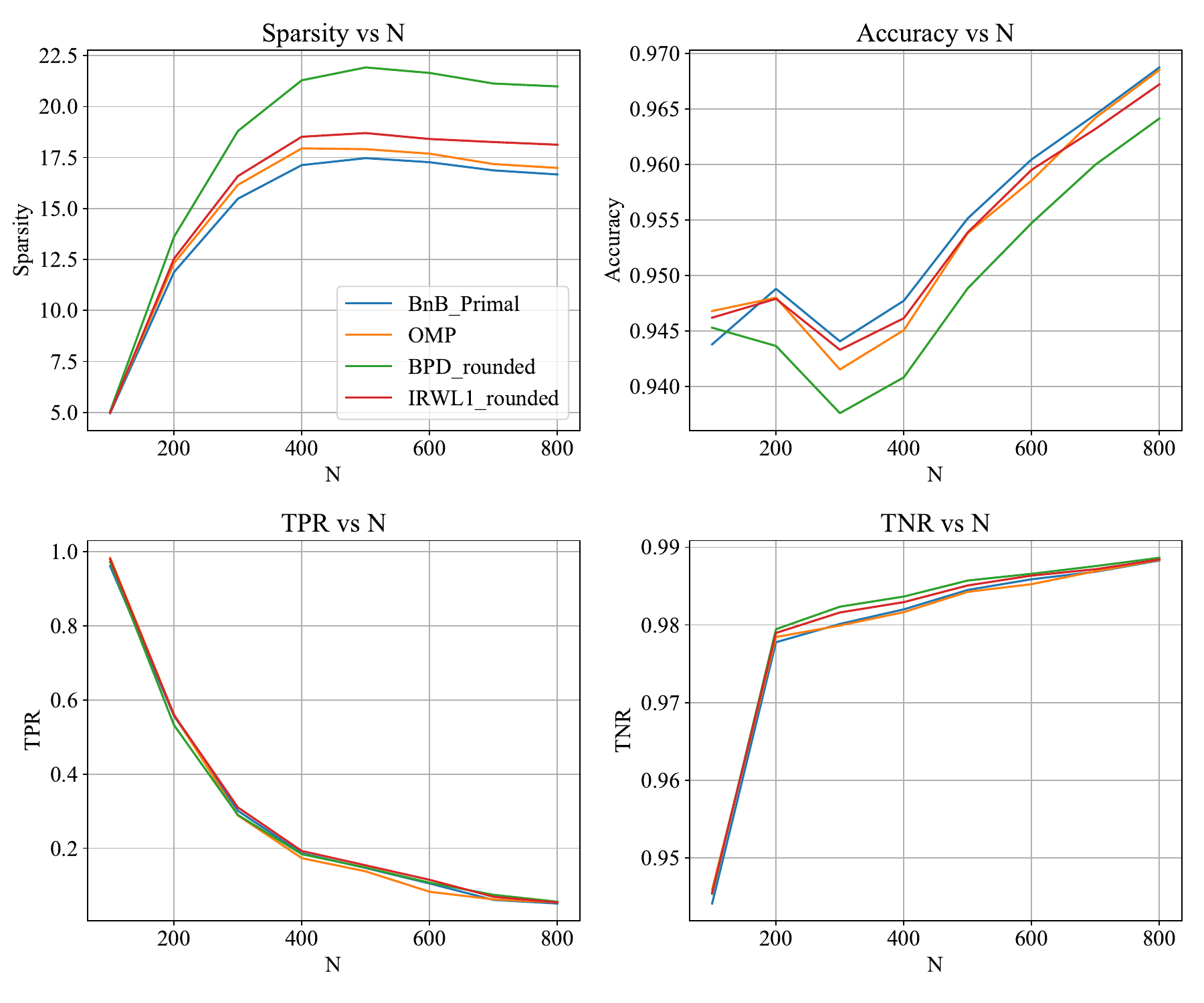}
  \caption{Sparsity (top left), accuracy (top right), true positive rate (bottom left) and true negative rate (bottom right) versus $n$ with $k=10$, and $\alpha=0.2$. Averaged over $100$ trials for each parameter configuration.}
  \label{fig:synthetic_N}
\end{figure*}

\begin{table}[h]
  \centering
  \caption{\color{black}Comparison of the sparsity of solutions returned by \eqref{alg:BNB}, OMP, IRWL1 and BPD for different values of $n$. Averaged over $100$ trials for each parameter configuration.}\label{tbl:N_sparsity}
  \begin{tabular}{c || cccc}
\toprule
    \multicolumn{1}{c}{} & \multicolumn{4}{c}{Sparsity} \\
    \cmidrule(l){1-1} \cmidrule(l){2-5}
    N & Algorithm 2 & OMP & IRWL1 & BPD \\
%\toprule
%  N &  BnB fitted_k & OMP fitted_k & IRWL1 fitted_k & BPD fitted_k \\
\midrule
100 &  \textbf{5.0} &          \textbf{5.0} &            \textbf{5.0} &          \textbf{5.0} \\
200 & \textbf{11.9} &         12.3 &           12.5 &         13.6 \\
300 & \textbf{15.5} &         16.2 &           16.6 &         18.8 \\
400 & \textbf{17.1} &         17.9 &           18.5 &         21.3 \\
500 & \textbf{17.5} &         17.9 &           18.7 &         21.9 \\
600 & \textbf{17.3} &         17.7 &           18.4 &         21.6 \\
700 & \textbf{16.9} &         17.2 &           18.3 &         21.1 \\
800 & \textbf{16.7} &         17.0 &           18.1 &         21.0 \\
\bottomrule
\end{tabular}

\end{table}

\begin{table}[h]
  \centering
  \caption{\color{black}Comparison of the accuracy of solutions returned by \eqref{alg:BNB}, OMP, IRWL1 and BPD for different values of $n$. Averaged over $100$ trials for each parameter configuration.}\label{tbl:N_acc}
  \begin{tabular}{c || cccc}
\toprule
    \multicolumn{1}{c}{} & \multicolumn{4}{c}{Accuracy} \\
    \cmidrule(l){1-1} \cmidrule(l){2-5}
    N & Algorithm 2 & OMP & IRWL1 & BPD \\
%\toprule
%  N &  BnB fitted_k & OMP fitted_k & IRWL1 fitted_k & BPD fitted_k \\
\midrule
100 &          0.944 & \textbf{0.947} &     0.946 &   0.945 \\
200 & \textbf{0.949} &          0.948 &     0.948 &   0.944 \\
300 & \textbf{0.944} &          0.942 &     0.943 &   0.938 \\
400 & \textbf{0.948} &          0.945 &     0.946 &   0.941 \\
500 & \textbf{0.955} &          0.954 &     0.954 &   0.949 \\
600 & \textbf{0.960} &          0.959 &     \textbf{0.960} &   0.955 \\
700 & \textbf{0.965} &          0.964 &     0.963 &   0.960 \\
800 & \textbf{0.969} &          \textbf{0.969} &     0.967 &   0.964 \\
\bottomrule
\end{tabular}

\end{table}

\begin{table}[h]
  \centering
  \caption{\color{black}Comparison of the execution time of solutions returned by \eqref{alg:BNB}, OMP, IRWL1 and BPD for different values of $n$. Averaged over $100$ trials for each parameter configuration.}\label{tbl:N_time}
  \begin{tabular}{c || cccc}
\toprule
    \multicolumn{1}{c}{} & \multicolumn{4}{c}{Execution Time (milliseconds)} \\
    \cmidrule(l){1-1} \cmidrule(l){2-5}
    $N$ & Algorithm 2 & OMP & IRWL1 & BPD \\
\midrule
100 &      2048.646 &   \textbf{5.717} &         463.636 &       146.111 \\
200 &    334804.020 &  \textbf{13.212} &        1109.263 &       234.263 \\
300 &    574501.859 &  \textbf{25.141} &        1630.212 &       297.849 \\
400 &    601792.939 &  \textbf{42.919} &        2181.636 &       351.717 \\
500 &    601424.020 &  \textbf{72.535} &        2435.141 &       405.131 \\
600 &    601451.838 & \textbf{110.364} &        3118.465 &       433.626 \\
700 &    601572.848 & \textbf{166.525} &        3674.980 &       504.626 \\
800 &    601716.929 & \textbf{231.980} &        3865.788 &       540.859 \\
\bottomrule
\end{tabular}

\end{table}

\subsubsection{Sensitivity to \texorpdfstring{$k$}{k}} \label{sssec:k_exp}

We present a comparison of Algorithm \ref{alg:BNB} with BPD, IRWL1 and OMP as we vary $k$ the sparsity of the underlying ground truth signal. In these experiments, we fixed $n=200$ and $\alpha=0.2$ across all trials. We varied $k \in \{10, 15, 20, 25, 30, 35, 40, 45, 50, 55\}$ and we performed $100$ trials for each value of $k$. We give Algorithm \ref{alg:BNB} a cutoff time of $10$ minutes.

We report the sparsity, accuracy (ACC), true positive rate (TPR) and true negative rate (TNR) for each method in Figure \ref{fig:synthetic_K}. We additionally report the sparsity, accuracy and execution time for each method in {\color{black} Tables \ref{tbl:K_sparsity}, \ref{tbl:K_acc} and \ref{tbl:K_time}.} Our main findings from this set of experiments are:
\begin{enumerate}
    \item Consistent with the results in the previous section, Algorithm \ref{alg:BNB} systematically produces sparser solutions than OMP, IRWL1 and BPD. This trend holds across trials (see Table \ref{tbl:K_sparsity}.  Algorithm \ref{alg:BNB} on average produces solutions that are $4.78\%$ more sparse than OMP, $10.73\%$ more sparse than BPD and $4.20\%$ more sparse than IRWL1. Before further sparsifying the BPD (respectively IRWL1) solution, the solution returned by Algorithm \ref{alg:BNB} is on average $62.97\%$ (respectively $4.29\%$) more sparse than the BPD (respectively IRWL1) solution. BPD is again the poorest performing method in terms of sparsity of the fitted solutions. IRWL1 and OMP produce comparably sparse solutions. The sparsity of solutions returned by all methods initially decreases than subsequently increases as the sparsity level $k$ of the ground truth signal increases.
    
    \item Algorithm \ref{alg:BNB} is competitive with OMP and IRWL1 on accuracy and slightly outperforms BPD on accuracy for larger values of $k$ The accuracy of all methods trends downwards with increasing $k${\color{black}, suggesting that the feature identification problem becomes more challenging for larger values of $k$ in this regime}.
    
    \item The TPR and TNR of Algorithm \ref{alg:BNB}, OMP, and IRWL1 are comparable across these experiments. The TPR and TNR of BPD is competitive with the other methods for small values of $k$, but slightly deteriorates for larger values of $k$.
\end{enumerate}

\begin{figure*}[h]\centering
  \includegraphics[width=0.9\textwidth]{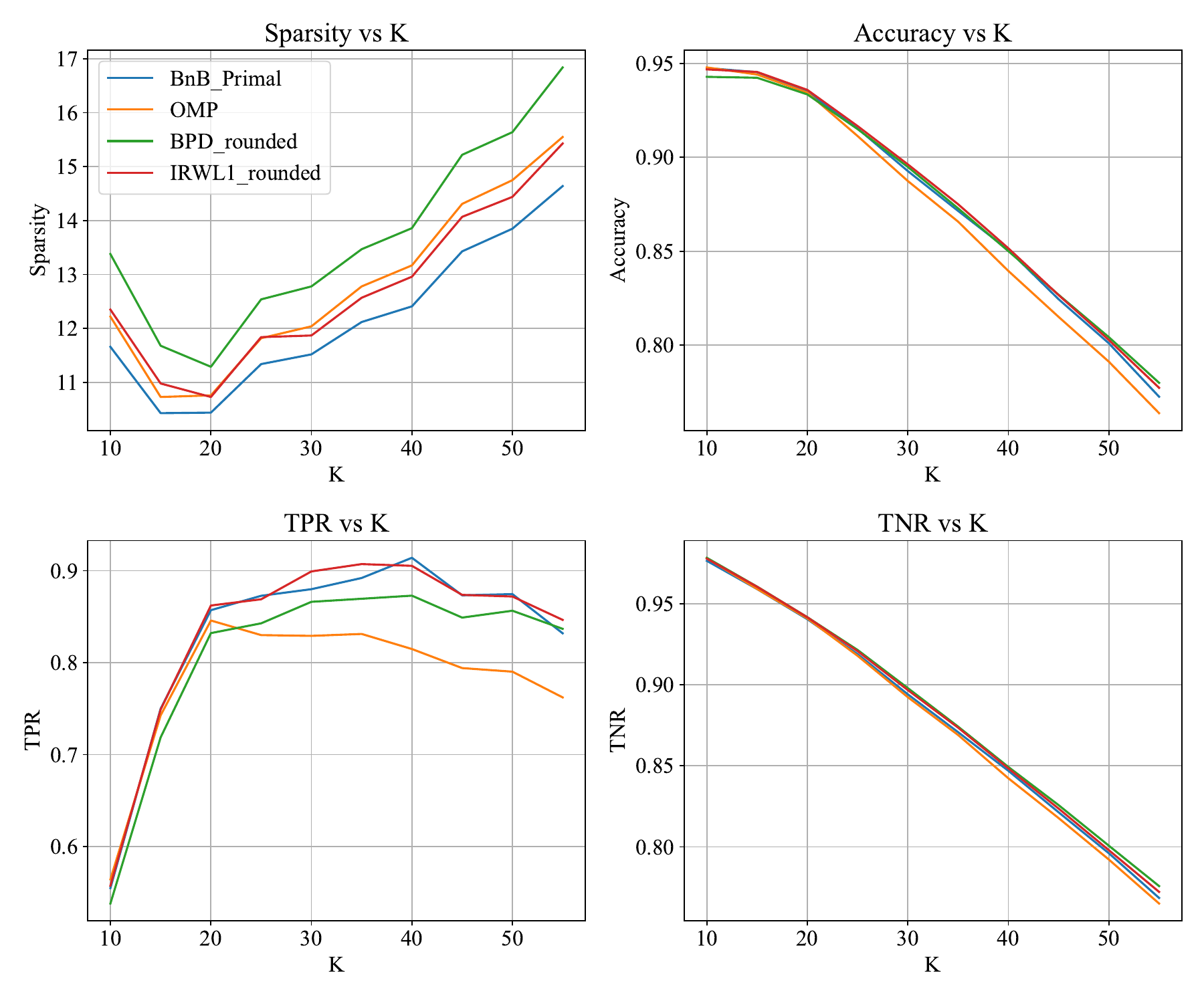}
  \caption{Sparsity (top left), accuracy (top right), true positive rate (bottom left) and true negative rate (bottom right) versus $k$ with $N=200$ and $\alpha=0.2$. Averaged over $100$ trials for each parameter configuration.
  }
  \label{fig:synthetic_K}
\end{figure*}

\begin{table}[h]
  \centering
  \caption{\color{black}Comparison of the sparsity of solutions returned by \eqref{alg:BNB}, OMP, IRWL1 and BPD for different values of $k$. Averaged over $100$ trials for each parameter configuration.}\label{tbl:K_sparsity}
  \begin{tabular}{c || cccc}
\toprule
    \multicolumn{1}{c}{} & \multicolumn{4}{c}{Sparsity} \\
    \cmidrule(l){1-1} \cmidrule(l){2-5}
    K & Algorithm 2 & OMP & IRWL1 & BPD \\
%\toprule
%  K &  BnB fitted_k & OMP fitted_k & IRWL1 fitted_k & BPD fitted_k \\
\midrule
10 & \textbf{11.7} &         12.2 &           12.3 &         13.4 \\
15 & \textbf{10.4} &         10.7 &           11.0 &         11.7 \\
20 & \textbf{10.4} &         10.8 &           10.7 &         11.3 \\
25 & \textbf{11.3} &         11.8 &           11.8 &         12.5 \\
30 & \textbf{11.5} &         12.0 &           11.9 &         12.8 \\
35 & \textbf{12.1} &         12.8 &           12.6 &         13.5 \\
40 & \textbf{12.4} &         13.2 &           13.0 &         13.9 \\
45 & \textbf{13.4} &         14.3 &           14.1 &         15.2 \\
50 & \textbf{13.8} &         14.8 &           14.4 &         15.6 \\
55 & \textbf{14.6} &         15.6 &           15.4 &         16.8 \\
\bottomrule
\end{tabular}

\end{table}

\begin{table}[h]
  \centering
  \caption{\color{black}Comparison of the accuracy of solutions returned by \eqref{alg:BNB}, OMP, IRWL1 and BPD for different values of $k$. Averaged over $100$ trials for each parameter configuration.}\label{tbl:K_acc}
  \begin{tabular}{c || cccc}
\toprule
    \multicolumn{1}{c}{} & \multicolumn{4}{c}{Accuracy} \\
    \cmidrule(l){1-1} \cmidrule(l){2-5}
    K & Algorithm 2 & OMP & IRWL1 & BPD \\
%\toprule
%  K &  BnB fitted_k & OMP fitted_k & IRWL1 fitted_k & BPD fitted_k \\
\midrule
10 &   \textbf{0.948} & \textbf{0.948} &          0.947 &          0.943 \\
15 &   \textbf{0.945} &          0.944 & \textbf{0.945} &          0.942 \\
20 &   0.935 &          0.934 & \textbf{0.936} &          0.933 \\
25 &   0.915 &          0.911 & \textbf{0.917} &          0.915 \\
30 &   0.893 &          0.887 & \textbf{0.896} &          0.895 \\
35 &   0.872 &          0.866 & \textbf{0.875} &          0.873 \\
40 &   0.851 &          0.840 & \textbf{0.852} &          0.850 \\
45 &   0.825 &          0.815 &          0.827 & \textbf{0.827} \\
50 &   0.801 &          0.791 &          0.803 & \textbf{0.804} \\
55 &   0.772 &          0.764 &          0.777 & \textbf{0.780} \\
\bottomrule
\end{tabular}

\end{table}

\begin{table}[h]
  \centering
  \caption{\color{black}Comparison of the execution time of solutions returned by \eqref{alg:BNB}, OMP, IRWL1 and BPD for different values of $k$. Averaged over $100$ trials for each parameter configuration.}\label{tbl:K_time}
  \begin{tabular}{c || cccc}
\toprule
    \multicolumn{1}{c}{} & \multicolumn{4}{c}{Execution Time (milliseconds)} \\
    \cmidrule(l){1-1} \cmidrule(l){2-5}
    $K$ & Algorithm 2 & OMP & IRWL1 & BPD \\
\midrule
10 &    305993.475 & \textbf{13.182} &        1270.000 &       341.454 \\
15 &    199128.374 & \textbf{12.556} &        1144.818 &       284.071 \\
20 &    119282.667 & \textbf{12.646} &        1080.535 &       278.596 \\
25 &    139224.525 & \textbf{13.263} &        1081.202 &       327.151 \\
30 &    171844.485 & \textbf{12.909} &        1169.798 &       314.192 \\
35 &    193257.535 & \textbf{12.798} &        1163.121 &       361.485 \\
40 &    231721.737 & \textbf{13.404} &        1151.455 &       277.647 \\
45 &    314269.394 & \textbf{13.495} &        1142.374 &       308.919 \\
50 &    351790.071 & \textbf{13.727} &        1219.707 &       315.081 \\
55 &    412429.717 & \textbf{14.010} &        1260.899 &       289.616 \\
\bottomrule
\end{tabular}

\end{table}

\subsubsection{Sensitivity to \texorpdfstring{$\epsilon$}{epsilon}} \label{sssec:e_exp}

We present a comparison of Algorithm \ref{alg:BNB} with BPD, IRWL1 and OMP as we vary $\alpha$ which controls the value of the parameter $\epsilon$. Recall we have $\epsilon = \alpha \Vert \bm{b} \Vert_2^2$, so $\alpha$ can loosely be interpreted as the fraction of the measurements $\bm{b}$ that can be unexplained by the returned solution to \eqref{opt:our_formulation}. In these experiments, we fixed $n=200$ and $k=10$ across all trials. We varied $\alpha \in \{0.05, 0.1, 0.15, \ldots, 0.9\}$ and we performed $100$ trials for each value of $\alpha$. We give Algorithm \ref{alg:BNB} a cutoff time of $10$ minutes.

We report the sparsity, accuracy (ACC), true positive rate (TPR) and true negative rate (TNR) for each method in Figure \ref{fig:synthetic_alpha}, and we report the sparsity, accuracy and execution time for each method in {\color{black} Tables \ref{tbl:alpha_sparsity}, \ref{tbl:alpha_acc} and \ref{tbl:alpha_time}.} Consistent with previous experiments, Algorithm \ref{alg:BNB} outperforms the benchmark methods in terms of sparsity of the returned solution while having comparable performance on accuracy, TPR and TNR. Here, Algorithm \ref{alg:BNB} on average produces solutions that are $2.40\%$ more sparse than OMP, $5.92\%$ more sparse than BPD and $2.54\%$ more sparse than IRWL1. Before further sparsifying the BPD (respectively IRWL1) solution, the solution returned by Algorithm \ref{alg:BNB} is on average $59.23\%$ (respectively $2.62\%$) more sparse than the BPD (respectively IRWL1) solution.

\begin{figure*}[h]\centering
  \includegraphics[width=0.9\textwidth]{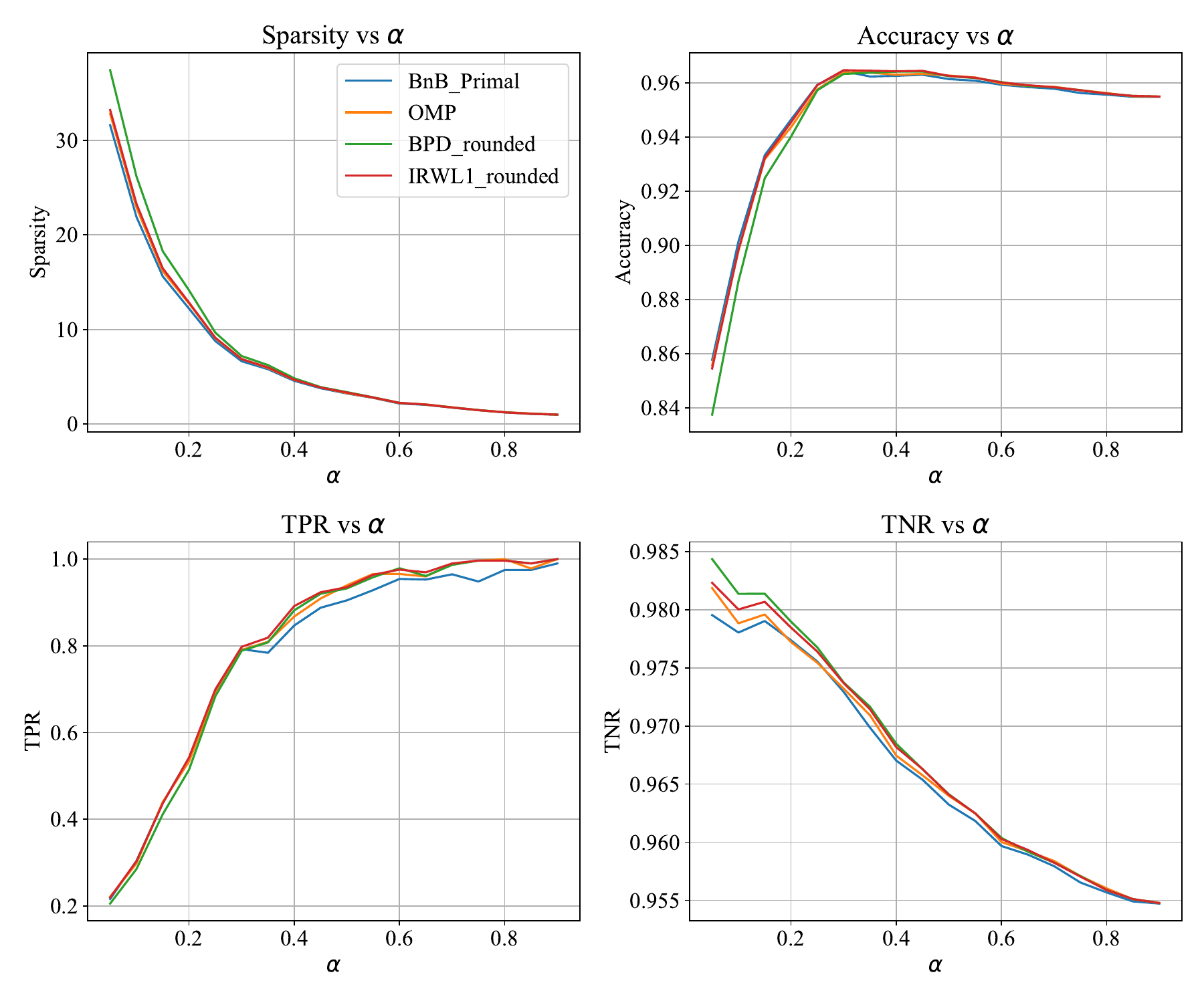}
  \caption{Sparsity (top left), accuracy (top right), true positive rate (bottom left) and true negative rate (bottom right) versus $\alpha$ with $n=200$ and $k=10$. Averaged over $100$ trials for each parameter configuration.
  }
  \label{fig:synthetic_alpha}
\end{figure*}

\begin{table}[h]
  \centering
  \caption{\color{black}Comparison of the sparsity of solutions returned by \eqref{alg:BNB}, OMP, IRWL1 and BPD for different values of $\alpha$. Averaged over $100$ trials for each parameter configuration.}\label{tbl:alpha_sparsity}
  \begin{tabular}{c || cccc}
\toprule
    \multicolumn{1}{c}{} & \multicolumn{4}{c}{Sparsity} \\
    \cmidrule(l){1-1} \cmidrule(l){2-5}
    $\alpha$ & Algorithm 2 & OMP & IRWL1 & BPD \\
%\toprule
%  $\alpha$ &  BnB fitted_k & OMP fitted_k & IRWL1 fitted_k & BPD fitted_k \\
\midrule
    0.05 & \textbf{31.6} &         32.8 &           33.2 &         37.4 \\
    0.10 & \textbf{21.9} &         22.9 &           23.3 &         26.2 \\
    0.15 & \textbf{15.6} &         16.1 &           16.5 &         18.3 \\
    0.20 & \textbf{12.2} &         12.7 &           12.8 &         14.1 \\
    0.25 &  \textbf{8.8} &          9.1 &            9.1 &          9.7 \\
    0.30 &  \textbf{6.6} &          6.9 &            6.9 &          7.2 \\
    0.35 &  \textbf{5.8} &          5.9 &            6.0 &          6.2 \\
    0.40 &  \textbf{4.6} &          4.7 &            4.7 &          4.8 \\
    0.45 &  \textbf{3.8} &          3.9 &            3.9 &          3.9 \\
    0.50 &  \textbf{3.2} &          3.3 &            3.3 &          3.4 \\
    0.55 &  \textbf{2.8} &          2.8 &            2.8 &          2.8 \\
    0.60 &  \textbf{2.2} &          2.2 &            2.2 &          2.2 \\
    0.65 &  \textbf{2.0} &          2.1 &            2.1 &          2.1 \\
    0.70 &  \textbf{1.7} &          1.8 &            1.8 &          1.8 \\
    0.75 &  \textbf{1.5} &          1.5 &   \textbf{1.5} &          1.5 \\
    0.80 &  \textbf{1.2} &          1.3 &   \textbf{1.2} & \textbf{1.2} \\
    0.85 &  \textbf{1.1} &          1.1 &            1.1 &          1.1 \\
    0.90 &  \textbf{1.0} & \textbf{1.0} &   \textbf{1.0} & \textbf{1.0} \\
\bottomrule
\end{tabular}

\end{table}

\begin{table}[h]
  \centering
  \caption{\color{black}Comparison of the accuracy of solutions returned by \eqref{alg:BNB}, OMP, IRWL1 and BPD for different values of $\alpha$. Averaged over $100$ trials for each parameter configuration.}\label{tbl:alpha_acc}
  \begin{tabular}{c || cccc}
\toprule
    \multicolumn{1}{c}{} & \multicolumn{4}{c}{Accuracy} \\
    \cmidrule(l){1-1} \cmidrule(l){2-5}
    $\alpha$ & Algorithm 2 & OMP & IRWL1 & BPD \\
%\toprule
%  $\alpha$ &  BnB fitted_k & OMP fitted_k & IRWL1 fitted_k & BPD fitted_k \\
\midrule
    0.05 & \textbf{0.858} &          0.856 &          0.855 &          0.838 \\
    0.10 & \textbf{0.901} &          0.898 &          0.898 &          0.887 \\
    0.15 & \textbf{0.933} &          0.932 &          0.932 &          0.925 \\
    0.20 & \textbf{0.946} &          0.944 &          0.946 &          0.940 \\
    0.25 & \textbf{0.959} &          0.958 &          0.959 &          0.957 \\
    0.30 &          0.964 &          0.964 & \textbf{0.965} &          0.963 \\
    0.35 &          0.962 &          0.964 & \textbf{0.965} &          0.964 \\
    0.40 &          0.963 &          0.963 & \textbf{0.964} &          0.964 \\
    0.45 &          0.963 &          0.963 & \textbf{0.965} &          0.964 \\
    0.50 &          0.962 & \textbf{0.963} &          0.963 &          0.963 \\
    0.55 &          0.961 & \textbf{0.962} &          0.962 &          0.962 \\
    0.60 &          0.959 &          0.960 &          0.960 & \textbf{0.960} \\
    0.65 &          0.959 &          0.959 & \textbf{0.959} &          0.959 \\
    0.70 &          0.958 & \textbf{0.959} &          0.958 &          0.958 \\
    0.75 &          0.956 & \textbf{0.957} &          0.957 &          0.957 \\
    0.80 &          0.956 & \textbf{0.956} &          0.956 &          0.956 \\
    0.85 &          0.955 &          0.955 & \textbf{0.955} & \textbf{0.955} \\
    0.90 &          0.955 & \textbf{0.955} & \textbf{0.955} & \textbf{0.955} \\
\bottomrule
\end{tabular}

\end{table}

\begin{table}[h]
  \centering
  \caption{\color{black}Comparison of the execution time of solutions returned by \eqref{alg:BNB}, OMP, IRWL1 and BPD for different values of $\alpha$. Averaged over $100$ trials for each parameter configuration.}\label{tbl:alpha_time}
  \begin{tabular}{c || cccc}
\toprule
    \multicolumn{1}{c}{} & \multicolumn{4}{c}{Execution Time (milliseconds)} \\
    \cmidrule(l){1-1} \cmidrule(l){2-5}
    $\alpha$ & Algorithm 2 & OMP & IRWL1 & BPD \\
\midrule
    0.05 &    603205.808 & \textbf{22.798} &        5164.919 &       934.717 \\
    0.10 &    577124.980 & \textbf{18.061} &        2158.465 &       522.212 \\
    0.15 &    454366.242 & \textbf{15.535} &        1801.596 &       636.172 \\
    0.20 &    343487.778 & \textbf{14.636} &        2013.323 &       967.657 \\
    0.25 &    181672.232 & \textbf{13.970} &        1477.374 &       654.091 \\
    0.30 &     81074.212 & \textbf{14.253} &        1087.737 &       510.162 \\
    0.35 &     61929.838 & \textbf{13.475} &        1503.990 &       904.788 \\
    0.40 &     41775.950 & \textbf{13.838} &        1213.343 &       684.101 \\
    0.45 &     15927.495 & \textbf{18.939} &        1056.717 &       531.091 \\
    0.50 &     10387.101 & \textbf{13.687} &        1384.343 &       872.040 \\
    0.55 &      7032.818 & \textbf{18.091} &        1001.253 &       530.556 \\
    0.60 &       858.253 & \textbf{13.212} &        1118.929 &       640.242 \\
    0.65 &      1012.121 & \textbf{18.404} &        1280.869 &       926.727 \\
    0.70 &       494.808 & \textbf{18.333} &         783.697 &       514.081 \\
    0.75 &       499.677 & \textbf{19.899} &         696.495 &       511.717 \\
    0.80 &       749.626 & \textbf{20.283} &         965.010 &       910.485 \\
    0.85 &       479.091 & \textbf{20.606} &         539.768 &       537.869 \\
    0.90 &       462.808 & \textbf{13.394} &         485.869 &       514.566 \\
\bottomrule
\end{tabular}

\end{table}

\subsubsection{Lower Bound Performance} \label{sssec:lb_exp}

In Section \ref{sec:MISOC}, we reformulated \eqref{opt:our_formulation} exactly as a mixed integer second order cone problem and illustrated multiple approaches to obtain lower bounds on the optimal value of the reformulation. In this Section, we compare the strength of the second order cone relaxation given by \eqref{opt:MISOC_W_relax} and the semidefinite cone relaxation given by \eqref{opt:poly_relax}. We fixed $k=10$ and we varied $\alpha \in \{0.05, 0.1, 0.15, \ldots, 0.9\}$. We report results for $(n,m)=(25,100)$ in Figure \ref{fig:synthetic_lb_n25} and $(n,m)=(50,25)$ in Figure \ref{fig:synthetic_lb_n50}. We performed $100$ trials for each value of $\alpha$. Letting $lb^{SOC}$ denote the optimal value of \eqref{opt:MISOC_W_relax} and $lb^{SOS}$ denote the optimal value of \eqref{opt:poly_relax}, we define the SOS lower bound improvement to be $\frac{lb^{SOS}-lb^{SOC}}{lb^{SOC}}$.

Consistent with the Theorem \ref{thm:SOS_better}, Problem \eqref{opt:poly_relax} produces a stronger lower bound than Problem \eqref{opt:MISOC_W_relax} at the expense of being more computationally intensive to compute due to the presence of positive semidefinite constraints. On average, the bound produced by \eqref{opt:poly_relax} is $8.92\%$ greater than the bound produced by \eqref{opt:MISOC_W_relax}. These results suggests that if Problem \eqref{opt:poly_relax} can be solved to optimality or near optimality efficiently at scale, it could potentially be used to accelerate Algorithm \ref{alg:BNB} by producing stronger lower bounds than the current approach, thereby allowing for a more aggressive pruning of the feasible space. Off the shelf interior point methods suffer from scalability challenges for semidefinite optimization problems. 
%However, recent work that has focused on exploiting chordal sparsity and term sparsity in SOS representations has significantly increased the scalability of SOS optimization \citep{wang2022cs} and could likely be leveraged here. We leave this as future work.

\begin{figure*}[h]\centering
  \includegraphics[width=0.9\textwidth]{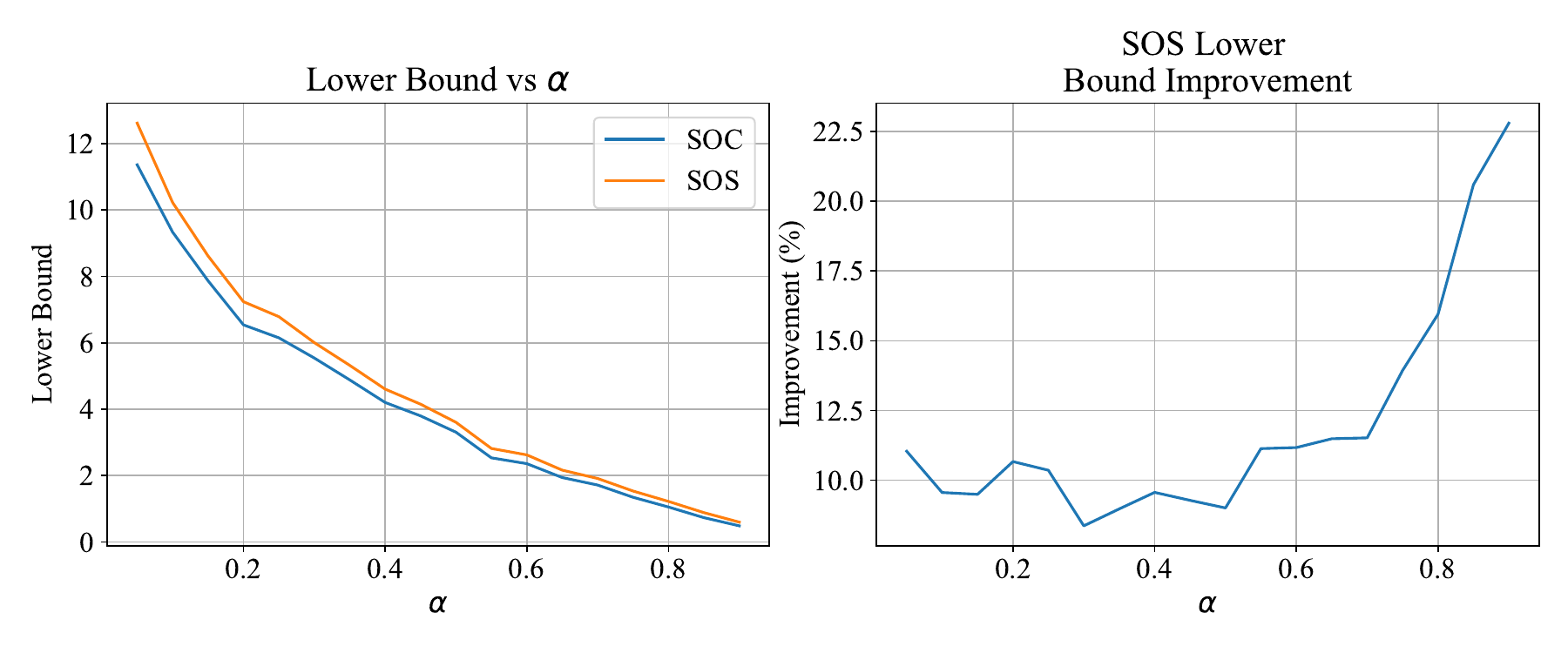}
  \caption{Problem \eqref{opt:our_formulation} lower bound (left) produced by Problem \eqref{opt:MISOC_W_relax} (SOC) and Problem \eqref{opt:poly_relax} (SOS) with $d=1$. Percent improvement of Problem \eqref{opt:our_formulation} lower bound of compared to (right). $n=25, m=100$ and $k=10$.}
  \label{fig:synthetic_lb_n25}
\end{figure*}

\begin{figure*}[h]\centering
  \includegraphics[width=0.9\textwidth]{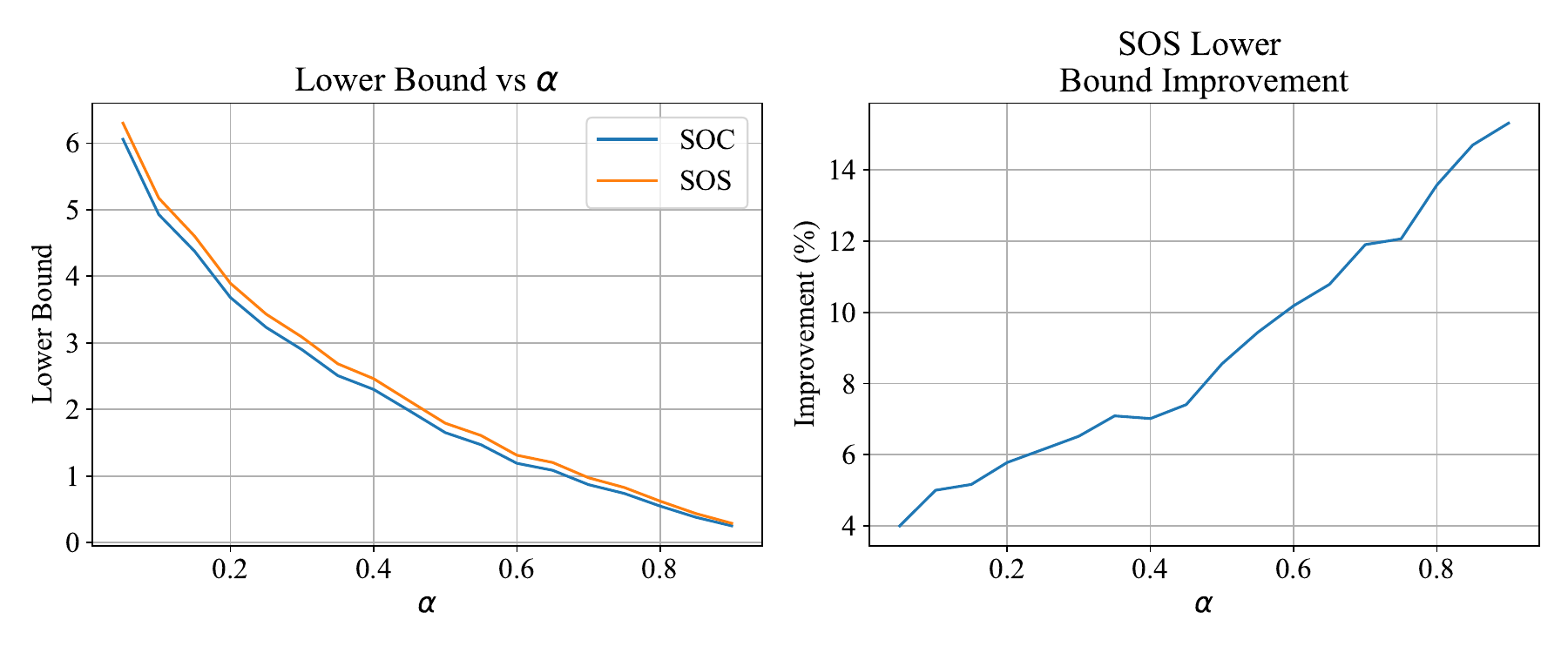}
  \caption{Problem \eqref{opt:our_formulation} lower bound (left) produced by Problem \eqref{opt:MISOC_W_relax} (SOC) and Problem \eqref{opt:poly_relax} (SOS) with $d=1$. Percent improvement of Problem \eqref{opt:our_formulation} lower bound of compared to (right). $n=50, m=25$ and $k=10$.}
  \label{fig:synthetic_lb_n50}
\end{figure*}

{\color{black} \subsection{Electrocardiogram Signal Acquisition} \label{ssec:ECG_exp}}

We seek to answer the following question: how does the performance of Algorithm \ref{alg:BNB} compare to state-of-the-art methods such as BPD, IRWL1 and OMP on {\color{black}signal processing using} real world data? {\color{black}To evaluate performance, we consider the problem of compressed sensing for electrocardiogram (ECG) acquisition \citep{chen2019compressed}.} We obtain real ECG recording samples from the MIT-BIH Arrhythmia Database (\url{https://www.physionet.org/content/mitdb/1.0.0/}) and consider the performance of the methods in terms {\color{black}of} sparsity of the returned signal and reconstruction error between the returned signal and the true signal.

\subsubsection{ECG Experiment Setup}

We employ the same $100$ ECG recordings sampled at $360$ Hz from the MIT-BIH Arrhythmia Database that are used in \citep{chen2019compressed}. These recordings collectively originate from $10$ distinct patients (each contributing $10$ independent recordings) and the recording length of an individual record is $1024$. In keeping with \cite{chen2019compressed}, we use $30$ ECG recordings as a training set to fit an overcomplete dictionary $\bm{D}$ via the K-SVD method \citep{aharon2006k}. We fit a dictionary with $2000$ atoms, meaning that $\bm{D} \in \mathbb{R}^{1024 \times 2000}$ and $\bm{X}^{train} \approx \bm{D} \bm{\Theta}$ where $\bm{X}^{train} \in \mathbb{R}^{1024 \times 30}$ is a matrix whose columns are the training ECG signals and $\bm{\Theta} \in \mathbb{R}^{2000 \times 30}$ is a sparse matrix. Each column of $\bm{\Theta}$ should be thought of as a (sparse) representation of the corresponding column of $\bm{X}^{train}$ in the dictionary given by $\bm{D}$ ($\Vert \bm{\Theta} \Vert_0 \ll \Vert \bm{X}^{train} \Vert_0$). We employ the Bernouilli sensing matrix $\bm{B} \in \mathbb{R}^{40 \times 1024}$ considered by \cite{chen2019compressed}. Given an ECG signal $\bm{x}^{test} \in \mathbb{R}^{1024}$, we consider the perturbed observations $\bm{s} = \bm{B}(\bm{x}^{test}+\bm{\eta})$ where $\bm{\eta} \in \mathbb{R}^{1024}$ is a vector of mean $0$ normal perturbations with variance $\Big{(}\frac{\Vert \bm{x}^{test} \Vert_1}{4 \cdot 1024}\Big{)}^2 \bm{I}$. Figure \ref{fig:ECG_signal} illustrates the ECG signal and perturbed ECG signal for record 31 of the dataset. With these preliminaries, we consider the reconstruction problem given by
\begin{equation}
\begin{aligned}
    &\min_{\bm{\theta} \in \mathbb{R}^{2000}} & & \Vert\bm{\theta}\Vert_0 + \frac{1}{\gamma} \Vert\bm{\theta}\Vert_2^2\\
    &\text{s.t.} & & \Vert\bm{B}\bm{D}\bm{\theta} - \bm{s}\Vert_2^2 \leq \epsilon.
\end{aligned} \label{opt:ecg_formulation}
\end{equation} where we set $\epsilon = 1.05 \cdot \Vert \bm{s} - \bm{B}\bm{x}^{test} \Vert_2^2$. Note that \eqref{opt:ecg_formulation} is equivalent to \eqref{opt:our_formulation} where $(\bm{\theta}, \bm{BD}, \bm{s})$ play the role of $(\bm{x}, \bm{A}, \bm{b})$ and we have $(n, m)=(2000, 40)$. Letting $\hat{\bm{\theta}}$ denote a feasible solution to \eqref{opt:ecg_formulation} returned by one of the solution methods, we employ $10^{-4}$ as the numerical threshold to compute the sparsity $\Vert \hat{\bm{\theta}} \Vert_0$ of $\hat{\bm{\theta}}$ and we define the $\ell_q$ reconstruction error of $\hat{\bm{\theta}}$ as $\frac{\Vert \bm{D}\hat{\bm{\theta}} - \bm{x}^{test}\Vert_q^q}{\Vert \bm{x}^{test}\Vert_q^q}$ for $q \in \{1,2\}$.

\begin{figure*}[h]\centering
  \includegraphics[width=0.9\textwidth]{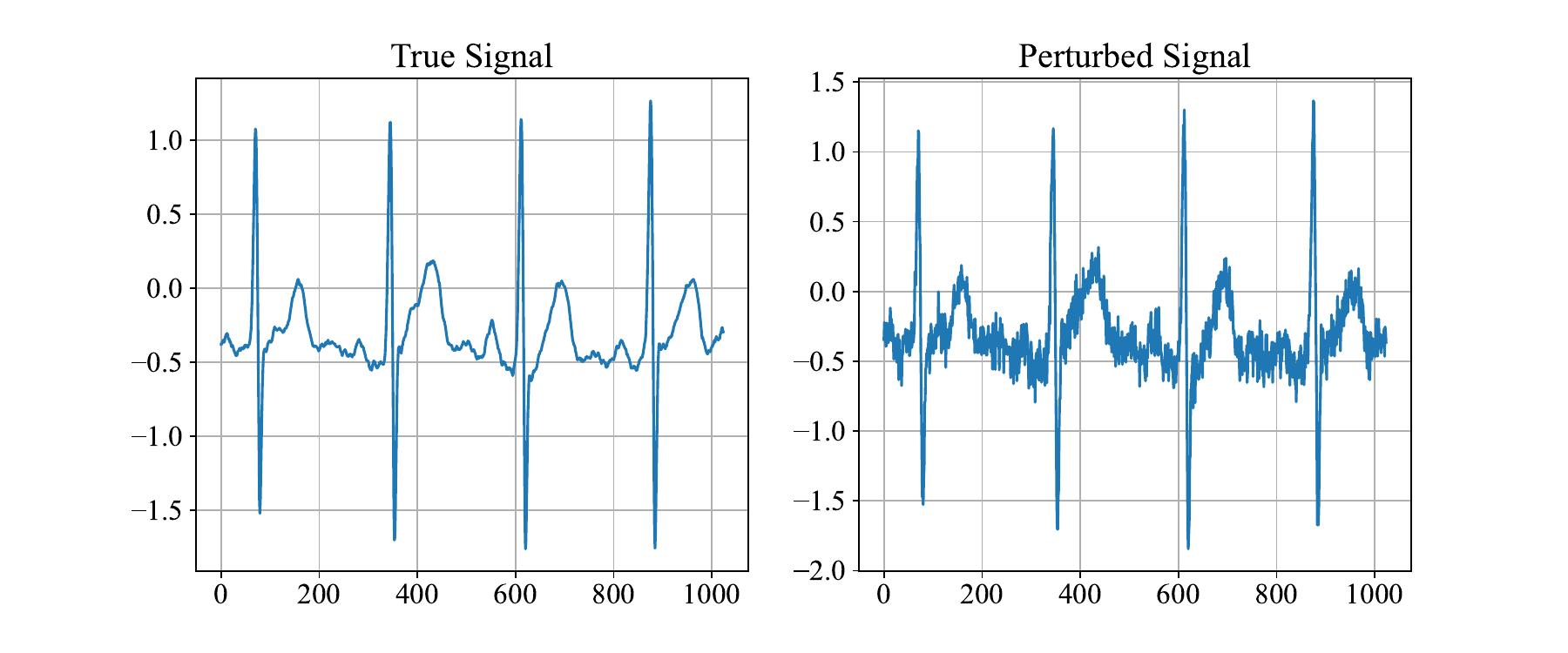}
  \caption{Ground truth ECG signal (left) and perturbed signal (right) for ECG record 31.}
  \label{fig:ECG_signal}
\end{figure*}

\subsubsection{ECG Computational Results}

We present a comparison of BPD, IRWL1, OMP and Algorithm \ref{alg:BNB} as we vary the regularization parameter $\gamma$ in \eqref{opt:ecg_formulation}. We considered values of $\gamma$ in the set \[ \Gamma =\{(8a+0.01)f(n): a \in [14], f(n) \in \{\sqrt{n}, n, n^2\}\},\] and we evaluate performance on the $70$ ECG recordings that are not part of the training set used to fit the overcomplete dictionary $\bm{D}$. We give Algorithm \ref{alg:BNB} a cutoff time of $5$ minutes. As in the synthetic experiments, we terminate IRWL1 after the $50^{th}$ iteration or after two subsequent iterates are equal up to numerical tolerance. Moreover, we sparsify the solutions returned by BPD and IRWL1 using the procedure given by Algorithm \ref{alg:upper_bound} as done in the synthetic experiments.

Figure \ref{fig:ECG_frontier} illustrates the average $\ell_1$ error (left) and average $\ell_2$ error (right) versus the average sparsity of solutions returned by each method. Each red dot corresponds to the performance of Algorithm \ref{alg:BNB} for a fixed value of $\gamma \in \Gamma$. Given that more sparse solutions and solutions with lesser $\ell_1$ (respectively $\ell_2$) error are desirable, Figure \ref{fig:ECG_frontier} demonstrates that as we vary $\gamma$, the solutions returned by Algorithm \ref{alg:BNB} trace out an efficient frontier that dominates the solutions returned by BPD, IRWL1 and OMP. Indeed, for all benchmark methods (BPD, IRWL1 and OMP), there is a value of $\gamma$ such that Algorithm \ref{alg:BNB} finds solutions that achieve lower sparsity and lower reconstruction error than the solution returned by the benchmark method. For the same $\ell_2$ reconstruction error, Algorithm \ref{alg:BNB} can produce solutions that are on average $3.88\%$ more sparse than IRWL1, $6.29\%$ more sparse than BPD and $19.70\%$ more sparse than OMP. For the same sparsity level, Algorithm \ref{alg:BNB} can produce solutions that have on average $1.42\%$ lower $\ell_2$ error than IRWL1, $2.66\%$ lower $\ell_2$ error than BPD and $28.23\%$ lower $\ell_2$ error than OMP. Thus, Algorithm \ref{alg:BNB} outperforms BPD, IRWL1 and OMP on this real world dataset.

\begin{figure*}[h]\centering
  \includegraphics[width=0.9\textwidth]{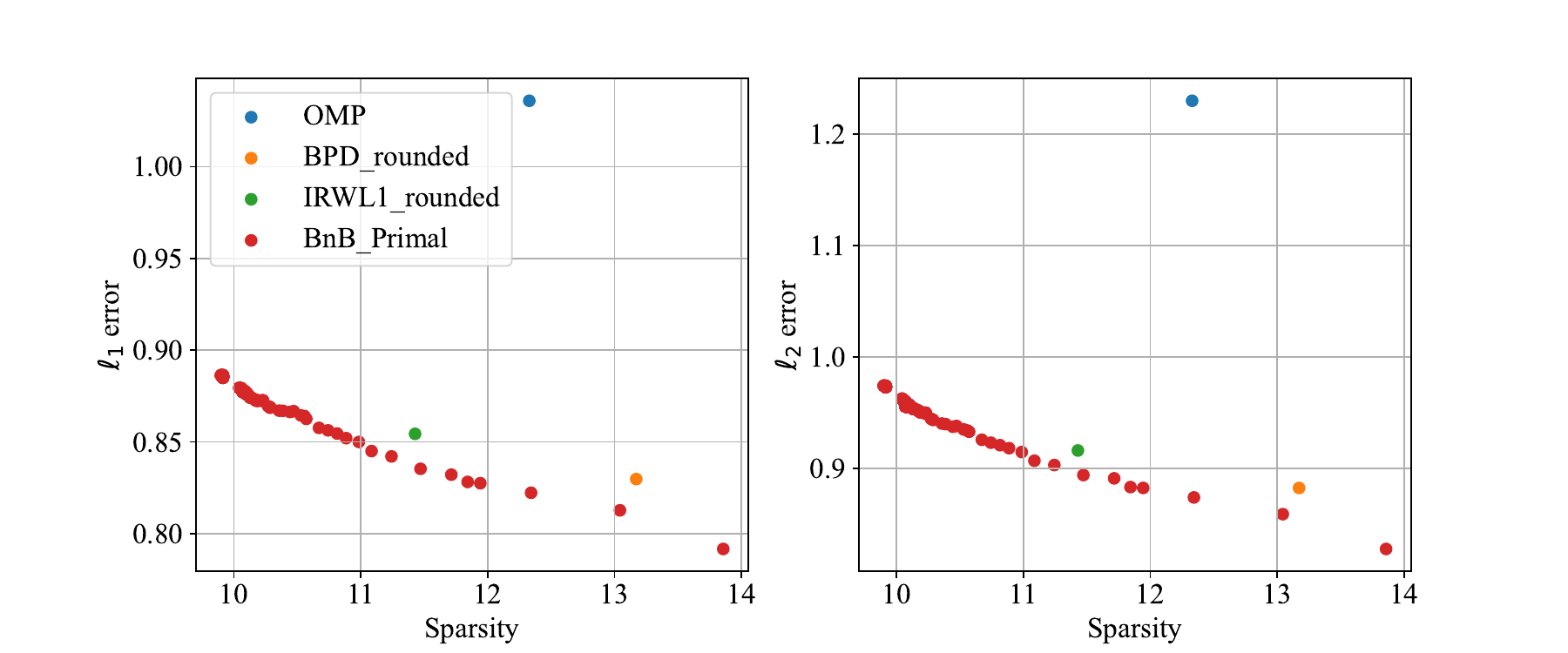}
  \caption{$\ell_1$ reconstruction error (left) and $\ell_2$ reconstruction error (right) versus $\ell_0$ norm (Sparsity) for ECG reconstructions obtained using OMP, BPD, IRWL1 and Algorithm \ref{alg:BNB} for varying values of $\gamma$. $n=2000$ and $m=40$.
  }
  \label{fig:ECG_frontier}
\end{figure*}

\subsection{Multi-Label Classification} \label{ssec:ML_exp}

We seek to answer the following question: how does the performance of Algorithm \ref{alg:BNB} compare to state-of-the-art methods such as BPD, IRWL1 and OMP when used as part of a learning task on real world data? To evaluate the performance of the aforementioned algorithms in this setting, we consider the problem of multi-label classification (MLC) \citep{hsu2009multi}. In MLC, given a dataset $\{(\bm{x}_i, \bm{y}_i)\}_{i=1}^n$ where $\bm{x}_i \in \mathbb{R}^m$ denotes a feature vector and $\bm{y}_i \in \{0, 1\}^d$ denotes a label vector, we seek to learn a function $f:\mathbb{R}^m \rightarrow \{0, 1\}^d$ that can correctly classify unseen examples. Note that this is a more general problem than a typical multi-class classification problem with $d$ classes since in this setting each example can be a member of multiple classes simultaneously. When the number of classes $d$ is very large, training $d$ individual classifiers in a one-against-all approach becomes prohibitively expensive. In this regime, if the label vectors tend to be sparse relative to the size of the latent dimension $d$, one approach is to project the labels into a smaller dimension $k \ll d$ using a linear compression function and then to learn $k$ predictors mapping the feature space to the projected label space. Predictions can then be made by applying the $k$ learned predictors on a given example and subsequently using a compressed sensing based reconstruction algorithm to transform the prediction from projected label space to the latent label space \citep{hsu2009multi, kapoor2012multilabel, som2016learning, kai2017compressed}. Explicitly, given a linear map $\bm{P} \in \mathbb{R}^{k \times d}$ where $k \ll d$, we form a transformed dataset $\{(\bm{x}_i, \bm{P}\bm{y}_i)\}_{i=1}^n$ and for each $j \in \{1, \ldots, k\}$ learn a regression function $h_j: \mathbb{R}^m \rightarrow \mathbb{R}$ using the data given by $\{(\bm{x}_i, [\bm{P}\bm{y}_i]_j)\}_{i=1}^n$. Given a reconstruction function $g: \mathbb{R}^{k \times d} \times \mathbb{R}^k \rightarrow \mathbb{R}^d$ that outputs a sparse vector such that $\bm{P} [g(\bm{P}, \bm{l})] \approx \bm{l}$, we make predictions on a test point $\bm{x}_{test}$ by returning $g(\bm{P}, [h_1(\bm{x}_{test}), \ldots, h_k(\bm{x}_{test})])$. In Sections \ref{sssec:ML_setup} and \ref{sssec:ML_results}, we evaluate the performance of this approach when Algorithm \ref{alg:BNB}, BPD, IRWL1 and OMP are used as the reconstruction function $g$.

\subsubsection{MLC Experiment Setup} \label{sssec:ML_setup}

We obtain a text data set consisting of web pages and descriptive textual tags from the former social bookmarking service \textit{del.icio.us} that was originally collected by \cite{tsoumakas2008effective}. The dataset consists of roughly $16000$ web pages having $d = 983$ unique labels. Each web page on average is associated with $19$ labels. The web pages are each represented as a bag-of-words feature vector. Further details can be found in \cite{tsoumakas2008effective}. We use $80\%$ of the data for training and withhold $20\%$ for testing. We fix $k = 100$ during our experiments. We fix the linear map $\bm{P} = \frac{1}{\sqrt{k}} \bm{H}$ where $\bm{H} \in \mathbb{R}^{k \times d}$ is a matrix obtained by choosing $k$ random rows from the $d \times d$ Hadamard matrix in keeping with the approach taken by \citep{hsu2009multi}. We use ridge regression as our base learning algorithm for the regression functions $\{h_j\}_{j=1}^k$ and tune the ridge parameter via leave one out cross validation over the training set. Given a test data point $\bm{x}_{test} \in \mathbb{R}^m$, we output a predicted label $\bm{\hat{y}} \in \mathbb{R}^d$ by solving

\begin{equation}
\begin{aligned}
    &\min_{\bm{y} \in \mathbb{R}^{d}} & & \Vert\bm{y}\Vert_0 + \frac{5}{\sqrt{d}} \Vert\bm{y}\Vert_2^2\\
    &\text{s.t.} & & \Vert\bm{P}\bm{y} - h(\bm{x}_{test})\Vert_2^2 \leq \epsilon.
\end{aligned} \label{opt:mlc_formulation}
\end{equation} where $[h(\bm{x}_{test})]_j = h_j(\bm{x}_{test})$ and we set $\epsilon = 0.25 \cdot \Vert h(\bm{x}_{test}) \Vert_2^2$. Letting $\hat{\bm{y}}$ denote a feasible solution to \eqref{opt:mlc_formulation} returned by one of the solution methods, we are interested in measuring the accuracy, precision and recall of the support of $\hat{\bm{y}}$ relative to the true label $\bm{y}_{test}$. Let $\mathcal{I}^{true} = \{i : [\bm{y}_{test}]_i = 1\}$ and $\hat{\mathcal{I}} = \{i : \vert \hat{y}_i \vert > 10^{-4}\}$. We define the accuracy of $\hat{\bm{y}}$ as \[ACC(\hat{\bm{y}}) = \frac{\sum_{i \in \mathcal{I}^{true}} \mathbbm{1}\{\vert \hat{y}_i \vert > 10^{-4}\} + \sum_{i \notin \mathcal{I}^{true}} \mathbbm{1}\{\vert \hat{y}_i \vert \leq 10^{-4}\}}{d}.\] Similarly, we define the precision of $\hat{\bm{y}}$ as \[Precision(\hat{\bm{y}}) = \frac{\sum_{i \in \mathcal{I}^{true}} \mathbbm{1}\{\vert \hat{y}_i \vert > 10^{-4}\}}{\vert \mathcal{I}^{true} \vert},\] and we define the recall (true positive rate) of $\hat{\bm{y}}$ as \[TPR(\hat{\bm{y}}) = \frac{\sum_{i \in \mathcal{I}^{true}} \mathbbm{1}\{\vert \hat{y}_i \vert > 10^{-4}\}}{\vert \hat{\mathcal{I}} \vert}.\]

\subsubsection{MLC Computational Results} \label{sssec:ML_results}

We present a comparison of test set classification performance when each of BPD, IRWL1, OMP and Algorithm \ref{alg:BNB} are used as the reconstruction algorithm $g$. We give Algorithm \ref{alg:BNB} a cutoff time of $10$ minutes. As in the synthetic and ECG experiments, we terminate IRWL1 after the $50^{th}$ iteration or after two subsequent iterates are equal up to numerical tolerance. Moreover, we sparsify the solutions returned by BPD and IRWL1 using the procedure given by Algorithm \ref{alg:upper_bound} as done in the synthetic and ECG experiments. Figure \ref{fig:mlc_plot} illustrates the average accuracy achieved by each method on the test set. We see that that Algorithm $2$ achieves slightly greater average accuracy than the benchmark methods. Moreover, Table \ref{tbl:mlc_metrics} illustrates the percentage of test examples on which each method has the best performance for accuracy, precision and recall. We see that Algorithm $2$ exhibits superior performance than the baseline methods across all metrics of interest.

\begin{figure*}[h]\centering
  \includegraphics[width=0.9\textwidth]{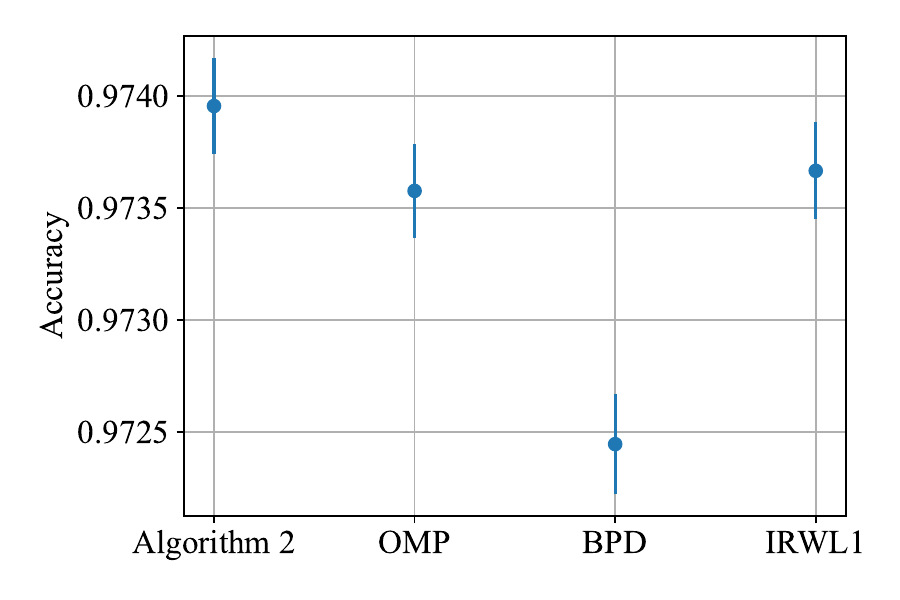}
  \caption{\color{black}Test set prediction accuracy obtained using OMP, BPD, IRWL1 and Algorithm \ref{alg:BNB} as the label reconstruction algorithm.
  }
  \label{fig:mlc_plot}
\end{figure*}

\begin{table}[h]
  \centering
  \caption{\color{black}Comparison of the accuracy of solutions returned by \eqref{alg:BNB}, OMP, IRWL1 and BPD for different values of $\alpha$. Averaged over $100$ trials for each parameter configuration.}\label{tbl:mlc_metrics}
  \begin{tabular}{c || cccc}
\toprule
    \multicolumn{1}{c}{} & \multicolumn{4}{c}{Top Performing Algorithm Frequency (\%)} \\
    \cmidrule(l){1-1} \cmidrule(l){2-5}
    Metric & Algorithm 2 & OMP & IRWL1 & BPD \\
%\toprule
%  $\alpha$ &  BnB fitted_k & OMP fitted_k & IRWL1 fitted_k & BPD fitted_k \\
\midrule
    Accuracy & \textbf{53.47} & 28.78 & 11.66 & 6.09 \\
    Precision & \textbf{49.39} & 21.42 & 15.26 & 13.93 \\
    Recall & \textbf{54.90} & 14.14 & 3.20 & 27.76 \\
\bottomrule
\end{tabular}

\end{table}

\subsection{Summary of Findings}

We now summarize our findings from our numerical experiments. In Sections \ref{sssec:n_exp}-\ref{sssec:e_exp}, we see that across all experiments using synthetic data, Algorithm \ref{alg:BNB} produces solutions that are on average $6.22\%$ more sparse than the solutions returned by state of the art benchmark methods after they are further sparsified by greedy rounding. If we omit greedy rounding, Algorithm \ref{alg:BNB} produces solutions that are on average $17.17\%$ more sparse in our synthetic experiments. In Section \ref{sssec:lb_exp}, we find that the bound produced by \eqref{opt:poly_relax} is on average $8.92\%$ greater than the bound produced by \eqref{opt:MISOC_W_relax}. {\color{black}In Section \ref{ssec:ECG_exp}}, we see that for a given level of $\ell_2$ reconstruction error, Algorithm \ref{alg:BNB} produces solutions that are on average $9.95\%$ more sparse than the solutions returned by state of the art benchmark methods after they are further sparsified by greedy rounding on the real world {\color{black}ECG} dataset we experiment with. Furthermore, for a given sparsity level, Algorithm \ref{alg:BNB} produces solutions that have on average $10.77\%$ lower $\ell_2$ reconstruction error than benchmark methods. {\color{black}Finally, in Section \ref{ssec:ML_exp}, we see that Algorithm \ref{alg:BNB} outperforms benchmark methods when used as the reconstruction algorithm for compressed sensing multi-label classification in terms of accuracy, precision and recall.}

\section{Conclusion}

In this paper, we introduced an $\ell_2$ regularized formulation \eqref{opt:our_formulation} for CS which emits a natural reformulation as a mixed integer second order cone program \eqref{opt:MISOC_W}. We presented a second order cone relaxation \eqref{opt:MISOC_W_relax} and a stronger but more expensive semidefinite cone relaxation \eqref{opt:poly_relax} to \eqref{opt:MISOC_W}. We presented Algorithm \ref{alg:BNB}, a custom branch-and-bound algorithm that can compute globally optimal solution for \eqref{opt:our_formulation}. We find that our approach produces solutions that are on average $6.22\%$ more sparse on synthetic data and $9.95\%$ more sparse on real world ECG data when compared to three state of the art benchmark approaches. {\color{black}This improvement in accuracy comes at the cost of an increase in computation time by several orders of magnitude.} When comparing only against the experiment-wise best performing benchmark method, our approach produces solutions that are on average $3.10\%$ more sparse on synthetic data and $3.88\%$ more sparse on real world ECG data. Moreover, our approach outperforms benchmark methods when used as part of a multi-label learning algorithm. Further work might focus on strengthening our convex relaxations by deriving additional valid inequalities for \eqref{opt:MISOC_W} or increasing the scalability of our branch-and-bound method. Algorithm \ref{alg:BNB} currently uses our second order cone relaxation to compute lower bounds. If fast problem specific solution methods could be derived for our positive semidefinite cone relaxation, employing the latter for lower bounds in Algorithm \ref{alg:BNB} could potentially lead to important scalability gains.

\section*{Declarations}

\paragraph{Funding:} The authors did not receive support from any organization for the submitted work.
\paragraph{Conflict of interest/Competing interests:} The authors have no relevant financial or non-financial interests to disclose.
\paragraph{Ethics approval:} Not applicable.
\paragraph{Consent to participate:} Not applicable.
\paragraph{Consent for publication:} Not applicable.
\paragraph{Availability of data and materials:} We obtained the ECG recording samples employed in Section \ref{ssec:ECG_exp} from the MIT-BIH Arrhythmia Database (\url{https://www.physionet.org/content/mitdb/1.0.0/}).
\paragraph{Code availability:} To bridge the gap between theory and practice, we have made our code freely available on \url{GitHub} at \url{github.com/NicholasJohnson2020/DiscreteCompressedSensing.jl}
\paragraph{Authors' contributions:} Both authors contributed to the algorithmic ideation and design. Algorithm implementation, data collection, simulation and data analysis was performed by Nicholas Johnson. The first draft of the manuscript was written by Nicholas Johnson and both authors commented and edited subsequent versions of the manuscript. Both authors read and approved the final manuscript.

%\FloatBarrier
%\bibliography{biblio.bib}

%\begin{appendices}

%\section{Supplemental Computational Results} \label{sec:appendix_tables}
%\newpage

%\end{appendices}
\clearpage

%%===========================================================================================%%
%% If you are submitting to one of the Nature Portfolio journals, using the eJP submission   %%
%% system, please include the references within the manuscript file itself. You may do this  %%
%% by copying the reference list from your .bbl file, paste it into the main manuscript .tex %%
%% file, and delete the associated \verb+\bibliography+ commands.                            %%
%%===========================================================================================%%

\bibliography{sn-bibliography}% common bib file

%% BioMed_Central_Bib_Style_v1.01

\begin{thebibliography}{55}
% BibTex style file: bmc-mathphys.bst (version 2.1), 2014-07-24
\ifx \bisbn   \undefined \def \bisbn  #1{ISBN #1}\fi
\ifx \binits  \undefined \def \binits#1{#1}\fi
\ifx \bauthor  \undefined \def \bauthor#1{#1}\fi
\ifx \batitle  \undefined \def \batitle#1{#1}\fi
\ifx \bjtitle  \undefined \def \bjtitle#1{#1}\fi
\ifx \bvolume  \undefined \def \bvolume#1{\textbf{#1}}\fi
\ifx \byear  \undefined \def \byear#1{#1}\fi
\ifx \bissue  \undefined \def \bissue#1{#1}\fi
\ifx \bfpage  \undefined \def \bfpage#1{#1}\fi
\ifx \blpage  \undefined \def \blpage #1{#1}\fi
\ifx \burl  \undefined \def \burl#1{\textsf{#1}}\fi
\ifx \doiurl  \undefined \def \doiurl#1{\url{https://doi.org/#1}}\fi
\ifx \betal  \undefined \def \betal{\textit{et al.}}\fi
\ifx \binstitute  \undefined \def \binstitute#1{#1}\fi
\ifx \binstitutionaled  \undefined \def \binstitutionaled#1{#1}\fi
\ifx \bctitle  \undefined \def \bctitle#1{#1}\fi
\ifx \beditor  \undefined \def \beditor#1{#1}\fi
\ifx \bpublisher  \undefined \def \bpublisher#1{#1}\fi
\ifx \bbtitle  \undefined \def \bbtitle#1{#1}\fi
\ifx \bedition  \undefined \def \bedition#1{#1}\fi
\ifx \bseriesno  \undefined \def \bseriesno#1{#1}\fi
\ifx \blocation  \undefined \def \blocation#1{#1}\fi
\ifx \bsertitle  \undefined \def \bsertitle#1{#1}\fi
\ifx \bsnm \undefined \def \bsnm#1{#1}\fi
\ifx \bsuffix \undefined \def \bsuffix#1{#1}\fi
\ifx \bparticle \undefined \def \bparticle#1{#1}\fi
\ifx \barticle \undefined \def \barticle#1{#1}\fi
\bibcommenthead
\ifx \bconfdate \undefined \def \bconfdate #1{#1}\fi
\ifx \botherref \undefined \def \botherref #1{#1}\fi
\ifx \url \undefined \def \url#1{\textsf{#1}}\fi
\ifx \bchapter \undefined \def \bchapter#1{#1}\fi
\ifx \bbook \undefined \def \bbook#1{#1}\fi
\ifx \bcomment \undefined \def \bcomment#1{#1}\fi
\ifx \oauthor \undefined \def \oauthor#1{#1}\fi
\ifx \citeauthoryear \undefined \def \citeauthoryear#1{#1}\fi
\ifx \endbibitem  \undefined \def \endbibitem {}\fi
\ifx \bconflocation  \undefined \def \bconflocation#1{#1}\fi
\ifx \arxivurl  \undefined \def \arxivurl#1{\textsf{#1}}\fi
\csname PreBibitemsHook\endcsname

%%% 1
\bibitem[\protect\citeauthoryear{Hsu et~al.}{2009}]{hsu2009multi}
\begin{botherref}
\oauthor{\bsnm{Hsu}, \binits{D.J.}},
\oauthor{\bsnm{Kakade}, \binits{S.M.}},
\oauthor{\bsnm{Langford}, \binits{J.}},
\oauthor{\bsnm{Zhang}, \binits{T.}}:
Multi-label prediction via compressed sensing.
Advances in neural information processing systems
\textbf{22}
(2009)
\end{botherref}
\endbibitem

%%% 2
\bibitem[\protect\citeauthoryear{Lustig et~al.}{2007}]{lustig2007sparse}
\begin{barticle}
\bauthor{\bsnm{Lustig}, \binits{M.}},
\bauthor{\bsnm{Donoho}, \binits{D.}},
\bauthor{\bsnm{Pauly}, \binits{J.M.}}:
\batitle{Sparse mri: The application of compressed sensing for rapid mr
  imaging}.
\bjtitle{Magnetic Resonance in Medicine: An Official Journal of the
  International Society for Magnetic Resonance in Medicine}
\bvolume{58}(\bissue{6}),
\bfpage{1182}--\blpage{1195}
(\byear{2007})
\end{barticle}
\endbibitem

%%% 3
\bibitem[\protect\citeauthoryear{Brady et~al.}{2009}]{brady2009compressive}
\begin{barticle}
\bauthor{\bsnm{Brady}, \binits{D.J.}},
\bauthor{\bsnm{Choi}, \binits{K.}},
\bauthor{\bsnm{Marks}, \binits{D.L.}},
\bauthor{\bsnm{Horisaki}, \binits{R.}},
\bauthor{\bsnm{Lim}, \binits{S.}}:
\batitle{Compressive holography}.
\bjtitle{Optics express}
\bvolume{17}(\bissue{15}),
\bfpage{13040}--\blpage{13049}
(\byear{2009})
\end{barticle}
\endbibitem

%%% 4
\bibitem[\protect\citeauthoryear{Hashemi et~al.}{2016}]{hashemi2016efficient}
\begin{bchapter}
\bauthor{\bsnm{Hashemi}, \binits{A.}},
\bauthor{\bsnm{Rostami}, \binits{M.}},
\bauthor{\bsnm{Cheung}, \binits{N.-M.}}:
\bctitle{Efficient environmental temperature monitoring using compressed
  sensing}.
In: \bbtitle{2016 Data Compression Conference (DCC)},
pp. \bfpage{602}--\blpage{602}
(\byear{2016}).
\bcomment{IEEE Computer Society}
\end{bchapter}
\endbibitem

%%% 5
\bibitem[\protect\citeauthoryear{Wang et~al.}{2018}]{wang2018design}
\begin{barticle}
\bauthor{\bsnm{Wang}, \binits{G.}},
\bauthor{\bsnm{Zhao}, \binits{Z.}},
\bauthor{\bsnm{Ning}, \binits{Y.}}:
\batitle{Design of compressed sensing algorithm for coal mine iot moving
  measurement data based on a multi-hop network and total variation}.
\bjtitle{Sensors}
\bvolume{18}(\bissue{6}),
\bfpage{1732}
(\byear{2018})
\end{barticle}
\endbibitem

%%% 6
\bibitem[\protect\citeauthoryear{Chen et~al.}{2019}]{chen2019compressed}
\begin{barticle}
\bauthor{\bsnm{Chen}, \binits{J.}},
\bauthor{\bsnm{Xing}, \binits{J.}},
\bauthor{\bsnm{Zhang}, \binits{L.Y.}},
\bauthor{\bsnm{Qi}, \binits{L.}}:
\batitle{Compressed sensing for electrocardiogram acquisition in wireless body
  sensor network: A comparative analysis}.
\bjtitle{International Journal of Distributed Sensor Networks}
\bvolume{15}(\bissue{7}),
\bfpage{1550147719864884}
(\byear{2019})
\end{barticle}
\endbibitem

%%% 7
\bibitem[\protect\citeauthoryear{Donoho}{2006}]{donoho2006compressed}
\begin{barticle}
\bauthor{\bsnm{Donoho}, \binits{D.L.}}:
\batitle{Compressed sensing}.
\bjtitle{IEEE Transactions on information theory}
\bvolume{52}(\bissue{4}),
\bfpage{1289}--\blpage{1306}
(\byear{2006})
\end{barticle}
\endbibitem

%%% 8
\bibitem[\protect\citeauthoryear{Tropp and
  Wright}{2010}]{tropp2010computational}
\begin{barticle}
\bauthor{\bsnm{Tropp}, \binits{J.A.}},
\bauthor{\bsnm{Wright}, \binits{S.J.}}:
\batitle{Computational methods for sparse solution of linear inverse problems}.
\bjtitle{Proceedings of the IEEE}
\bvolume{98}(\bissue{6}),
\bfpage{948}--\blpage{958}
(\byear{2010})
\end{barticle}
\endbibitem

%%% 9
\bibitem[\protect\citeauthoryear{Rani et~al.}{2018}]{rani2018systematic}
\begin{barticle}
\bauthor{\bsnm{Rani}, \binits{M.}},
\bauthor{\bsnm{Dhok}, \binits{S.B.}},
\bauthor{\bsnm{Deshmukh}, \binits{R.B.}}:
\batitle{A systematic review of compressive sensing: Concepts, implementations
  and applications}.
\bjtitle{IEEE access}
\bvolume{6},
\bfpage{4875}--\blpage{4894}
(\byear{2018})
\end{barticle}
\endbibitem

%%% 10
\bibitem[\protect\citeauthoryear{Owen and Perry}{2009}]{validation}
\begin{barticle}
\bauthor{\bsnm{Owen}, \binits{A.B.}},
\bauthor{\bsnm{Perry}, \binits{P.O.}}:
\batitle{{Bi-cross-validation of the SVD and the nonnegative matrix
  factorization}}.
\bjtitle{The Annals of Applied Statistics}
\bvolume{3}(\bissue{2}),
\bfpage{564}--\blpage{594}
(\byear{2009})
\end{barticle}
\endbibitem

%%% 11
\bibitem[\protect\citeauthoryear{Bousquet and
  Elisseeff}{2002}]{bousquet2002stability}
\begin{barticle}
\bauthor{\bsnm{Bousquet}, \binits{O.}},
\bauthor{\bsnm{Elisseeff}, \binits{A.}}:
\batitle{Stability and generalization}.
\bjtitle{Journal of Machine Learning Research}
\bvolume{2},
\bfpage{499}--\blpage{526}
(\byear{2002})
\end{barticle}
\endbibitem

%%% 12
\bibitem[\protect\citeauthoryear{Karahano{\u{g}}lu
  et~al.}{2013}]{karahanouglu2013mixed}
\begin{bchapter}
\bauthor{\bsnm{Karahano{\u{g}}lu}, \binits{N.B.}},
\bauthor{\bsnm{Erdo{\u{g}}an}, \binits{H.}},
\bauthor{\bsnm{Birbil}, \binits{{\c{S}}.{\. I}.}}:
\bctitle{A mixed integer linear programming formulation for the sparse recovery
  problem in compressed sensing}.
In: \bbtitle{2013 IEEE International Conference on Acoustics, Speech and Signal
  Processing},
pp. \bfpage{5870}--\blpage{5874}
(\byear{2013}).
\bcomment{IEEE}
\end{bchapter}
\endbibitem

%%% 13
\bibitem[\protect\citeauthoryear{Bourguignon
  et~al.}{2015}]{bourguignon2015exact}
\begin{barticle}
\bauthor{\bsnm{Bourguignon}, \binits{S.}},
\bauthor{\bsnm{Ninin}, \binits{J.}},
\bauthor{\bsnm{Carfantan}, \binits{H.}},
\bauthor{\bsnm{Mongeau}, \binits{M.}}:
\batitle{Exact sparse approximation problems via mixed-integer programming:
  Formulations and computational performance}.
\bjtitle{IEEE Transactions on Signal Processing}
\bvolume{64}(\bissue{6}),
\bfpage{1405}--\blpage{1419}
(\byear{2015})
\end{barticle}
\endbibitem

%%% 14
\bibitem[\protect\citeauthoryear{Chen and Donoho}{1994}]{chen1994basis}
\begin{bchapter}
\bauthor{\bsnm{Chen}, \binits{S.}},
\bauthor{\bsnm{Donoho}, \binits{D.}}:
\bctitle{Basis pursuit}.
In: \bbtitle{Proceedings of 1994 28th Asilomar Conference on Signals, Systems
  and Computers},
vol. \bseriesno{1},
pp. \bfpage{41}--\blpage{44}
(\byear{1994}).
\bcomment{IEEE}
\end{bchapter}
\endbibitem

%%% 15
\bibitem[\protect\citeauthoryear{Chen et~al.}{2001}]{chen2001atomic}
\begin{barticle}
\bauthor{\bsnm{Chen}, \binits{S.S.}},
\bauthor{\bsnm{Donoho}, \binits{D.L.}},
\bauthor{\bsnm{Saunders}, \binits{M.A.}}:
\batitle{Atomic decomposition by basis pursuit}.
\bjtitle{SIAM review}
\bvolume{43}(\bissue{1}),
\bfpage{129}--\blpage{159}
(\byear{2001})
\end{barticle}
\endbibitem

%%% 16
\bibitem[\protect\citeauthoryear{Candes and Tao}{2006}]{candes2006near}
\begin{barticle}
\bauthor{\bsnm{Candes}, \binits{E.J.}},
\bauthor{\bsnm{Tao}, \binits{T.}}:
\batitle{Near-optimal signal recovery from random projections: Universal
  encoding strategies?}
\bjtitle{IEEE transactions on information theory}
\bvolume{52}(\bissue{12}),
\bfpage{5406}--\blpage{5425}
(\byear{2006})
\end{barticle}
\endbibitem

%%% 17
\bibitem[\protect\citeauthoryear{Gill et~al.}{2011}]{gill2011crowd}
\begin{barticle}
\bauthor{\bsnm{Gill}, \binits{P.R.}},
\bauthor{\bsnm{Wang}, \binits{A.}},
\bauthor{\bsnm{Molnar}, \binits{A.}}:
\batitle{The in-crowd algorithm for fast basis pursuit denoising}.
\bjtitle{IEEE Transactions on Signal Processing}
\bvolume{59}(\bissue{10}),
\bfpage{4595}--\blpage{4605}
(\byear{2011})
\end{barticle}
\endbibitem

%%% 18
\bibitem[\protect\citeauthoryear{Tibshirani}{1996}]{tibshirani1996regression}
\begin{barticle}
\bauthor{\bsnm{Tibshirani}, \binits{R.}}:
\batitle{Regression shrinkage and selection via the lasso}.
\bjtitle{Journal of the Royal Statistical Society: Series B (Methodological)}
\bvolume{58}(\bissue{1}),
\bfpage{267}--\blpage{288}
(\byear{1996})
\end{barticle}
\endbibitem

%%% 19
\bibitem[\protect\citeauthoryear{Bertsimas and
  Copenhaver}{2018}]{bertsimas2018characterization}
\begin{barticle}
\bauthor{\bsnm{Bertsimas}, \binits{D.}},
\bauthor{\bsnm{Copenhaver}, \binits{M.S.}}:
\batitle{Characterization of the equivalence of robustification and
  regularization in linear and matrix regression}.
\bjtitle{European Journal of Operational Research}
\bvolume{270}(\bissue{3}),
\bfpage{931}--\blpage{942}
(\byear{2018})
\end{barticle}
\endbibitem

%%% 20
\bibitem[\protect\citeauthoryear{Elad and
  Bruckstein}{2002}]{elad2002generalized}
\begin{barticle}
\bauthor{\bsnm{Elad}, \binits{M.}},
\bauthor{\bsnm{Bruckstein}, \binits{A.M.}}:
\batitle{A generalized uncertainty principle and sparse representation in pairs
  of bases}.
\bjtitle{IEEE Transactions on Information Theory}
\bvolume{48}(\bissue{9}),
\bfpage{2558}--\blpage{2567}
(\byear{2002})
\end{barticle}
\endbibitem

%%% 21
\bibitem[\protect\citeauthoryear{Donoho and Elad}{2003}]{donoho2003optimally}
\begin{barticle}
\bauthor{\bsnm{Donoho}, \binits{D.L.}},
\bauthor{\bsnm{Elad}, \binits{M.}}:
\batitle{Optimally sparse representation in general (nonorthogonal)
  dictionaries via l1 minimization}.
\bjtitle{Proceedings of the National Academy of Sciences}
\bvolume{100}(\bissue{5}),
\bfpage{2197}--\blpage{2202}
(\byear{2003})
\end{barticle}
\endbibitem

%%% 22
\bibitem[\protect\citeauthoryear{Gribonval and
  Nielsen}{2003}]{gribonval2003sparse}
\begin{barticle}
\bauthor{\bsnm{Gribonval}, \binits{R.}},
\bauthor{\bsnm{Nielsen}, \binits{M.}}:
\batitle{Sparse representations in unions of bases}.
\bjtitle{IEEE transactions on Information theory}
\bvolume{49}(\bissue{12}),
\bfpage{3320}--\blpage{3325}
(\byear{2003})
\end{barticle}
\endbibitem

%%% 23
\bibitem[\protect\citeauthoryear{Tropp}{2004}]{tropp2004greed}
\begin{barticle}
\bauthor{\bsnm{Tropp}, \binits{J.A.}}:
\batitle{Greed is good: algorithmic results for sparse approximation}.
\bjtitle{IEEE Transactions on Information Theory}
\bvolume{50}(\bissue{10}),
\bfpage{2231}--\blpage{2242}
(\byear{2004})
\end{barticle}
\endbibitem

%%% 24
\bibitem[\protect\citeauthoryear{Candes and Tao}{2005}]{candes2005decoding}
\begin{barticle}
\bauthor{\bsnm{Candes}, \binits{E.J.}},
\bauthor{\bsnm{Tao}, \binits{T.}}:
\batitle{Decoding by linear programming}.
\bjtitle{IEEE transactions on information theory}
\bvolume{51}(\bissue{12}),
\bfpage{4203}--\blpage{4215}
(\byear{2005})
\end{barticle}
\endbibitem

%%% 25
\bibitem[\protect\citeauthoryear{Baraniuk et~al.}{2008}]{baraniuk2008simple}
\begin{barticle}
\bauthor{\bsnm{Baraniuk}, \binits{R.}},
\bauthor{\bsnm{Davenport}, \binits{M.}},
\bauthor{\bsnm{DeVore}, \binits{R.}},
\bauthor{\bsnm{Wakin}, \binits{M.}}:
\batitle{A simple proof of the restricted isometry property for random
  matrices}.
\bjtitle{Constructive Approximation}
\bvolume{28}(\bissue{3}),
\bfpage{253}--\blpage{263}
(\byear{2008})
\end{barticle}
\endbibitem

%%% 26
\bibitem[\protect\citeauthoryear{Gu{\'e}don
  et~al.}{2014}]{guedon2014restricted}
\begin{barticle}
\bauthor{\bsnm{Gu{\'e}don}, \binits{O.}},
\bauthor{\bsnm{Litvak}, \binits{A.E.}},
\bauthor{\bsnm{Pajor}, \binits{A.}},
\bauthor{\bsnm{Tomczak-Jaegermann}, \binits{N.}}:
\batitle{Restricted isometry property for random matrices with heavy-tailed
  columns}.
\bjtitle{Comptes Rendus Mathematique}
\bvolume{352}(\bissue{5}),
\bfpage{431}--\blpage{434}
(\byear{2014})
\end{barticle}
\endbibitem

%%% 27
\bibitem[\protect\citeauthoryear{Candes et~al.}{2008}]{candes2008enhancing}
\begin{barticle}
\bauthor{\bsnm{Candes}, \binits{E.J.}},
\bauthor{\bsnm{Wakin}, \binits{M.B.}},
\bauthor{\bsnm{Boyd}, \binits{S.P.}}:
\batitle{Enhancing sparsity by reweighted l1 minimization}.
\bjtitle{Journal of Fourier analysis and applications}
\bvolume{14}(\bissue{5}),
\bfpage{877}--\blpage{905}
(\byear{2008})
\end{barticle}
\endbibitem

%%% 28
\bibitem[\protect\citeauthoryear{Needell}{2009}]{needell2009noisy}
\begin{bchapter}
\bauthor{\bsnm{Needell}, \binits{D.}}:
\bctitle{Noisy signal recovery via iterative reweighted l1-minimization}.
In: \bbtitle{2009 Conference Record of the Forty-Third Asilomar Conference on
  Signals, Systems and Computers},
pp. \bfpage{113}--\blpage{117}
(\byear{2009}).
\bcomment{IEEE}
\end{bchapter}
\endbibitem

%%% 29
\bibitem[\protect\citeauthoryear{Asif and Romberg}{2013}]{asif2013fast}
\begin{barticle}
\bauthor{\bsnm{Asif}, \binits{M.S.}},
\bauthor{\bsnm{Romberg}, \binits{J.}}:
\batitle{Fast and accurate algorithms for re-weighted l1-norm minimization}.
\bjtitle{IEEE Transactions on Signal Processing}
\bvolume{61}(\bissue{23}),
\bfpage{5905}--\blpage{5916}
(\byear{2013})
\end{barticle}
\endbibitem

%%% 30
\bibitem[\protect\citeauthoryear{Chen and Zhou}{2010}]{chen2010convergence}
\begin{botherref}
\oauthor{\bsnm{Chen}, \binits{X.}},
\oauthor{\bsnm{Zhou}, \binits{W.}}:
Convergence of reweighted l1 minimization algorithms and unique solution of
  truncated lp minimization.
Department of Applied Mathematics, The Hong Kong Polytechnic University
(2010)
\end{botherref}
\endbibitem

%%% 31
\bibitem[\protect\citeauthoryear{Wang et~al.}{2021}]{wang2021nonconvex}
\begin{barticle}
\bauthor{\bsnm{Wang}, \binits{H.}},
\bauthor{\bsnm{Zhang}, \binits{F.}},
\bauthor{\bsnm{Shi}, \binits{Y.}},
\bauthor{\bsnm{Hu}, \binits{Y.}}:
\batitle{Nonconvex and nonsmooth sparse optimization via adaptively iterative
  reweighted methods}.
\bjtitle{Journal of Global Optimization}
\bvolume{81}(\bissue{3}),
\bfpage{717}--\blpage{748}
(\byear{2021})
\end{barticle}
\endbibitem

%%% 32
\bibitem[\protect\citeauthoryear{Wang et~al.}{2022}]{wang2022extrapolated}
\begin{botherref}
\oauthor{\bsnm{Wang}, \binits{H.}},
\oauthor{\bsnm{Zeng}, \binits{H.}},
\oauthor{\bsnm{Wang}, \binits{J.}}:
An extrapolated iteratively reweighted l1 method with complexity analysis.
Computational Optimization and Applications,
1--31
(2022)
\end{botherref}
\endbibitem

%%% 33
\bibitem[\protect\citeauthoryear{Pati et~al.}{1993}]{pati1993orthogonal}
\begin{bchapter}
\bauthor{\bsnm{Pati}, \binits{Y.C.}},
\bauthor{\bsnm{Rezaiifar}, \binits{R.}},
\bauthor{\bsnm{Krishnaprasad}, \binits{P.S.}}:
\bctitle{Orthogonal matching pursuit: Recursive function approximation with
  applications to wavelet decomposition}.
In: \bbtitle{Proceedings of 27th Asilomar Conference on Signals, Systems and
  Computers},
pp. \bfpage{40}--\blpage{44}
(\byear{1993}).
\bcomment{IEEE}
\end{bchapter}
\endbibitem

%%% 34
\bibitem[\protect\citeauthoryear{Mallat and Zhang}{1993}]{mallat1993matching}
\begin{barticle}
\bauthor{\bsnm{Mallat}, \binits{S.G.}},
\bauthor{\bsnm{Zhang}, \binits{Z.}}:
\batitle{Matching pursuits with time-frequency dictionaries}.
\bjtitle{IEEE Transactions on signal processing}
\bvolume{41}(\bissue{12}),
\bfpage{3397}--\blpage{3415}
(\byear{1993})
\end{barticle}
\endbibitem

%%% 35
\bibitem[\protect\citeauthoryear{Cai and Wang}{2011}]{cai2011orthogonal}
\begin{barticle}
\bauthor{\bsnm{Cai}, \binits{T.T.}},
\bauthor{\bsnm{Wang}, \binits{L.}}:
\batitle{Orthogonal matching pursuit for sparse signal recovery with noise}.
\bjtitle{IEEE Transactions on Information theory}
\bvolume{57}(\bissue{7}),
\bfpage{4680}--\blpage{4688}
(\byear{2011})
\end{barticle}
\endbibitem

%%% 36
\bibitem[\protect\citeauthoryear{Wang}{2015}]{wang2015support}
\begin{barticle}
\bauthor{\bsnm{Wang}, \binits{J.}}:
\batitle{Support recovery with orthogonal matching pursuit in the presence of
  noise}.
\bjtitle{IEEE Transactions on Signal processing}
\bvolume{63}(\bissue{21}),
\bfpage{5868}--\blpage{5877}
(\byear{2015})
\end{barticle}
\endbibitem

%%% 37
\bibitem[\protect\citeauthoryear{Dai and Milenkovic}{2009}]{dai2009subspace}
\begin{barticle}
\bauthor{\bsnm{Dai}, \binits{W.}},
\bauthor{\bsnm{Milenkovic}, \binits{O.}}:
\batitle{Subspace pursuit for compressive sensing signal reconstruction}.
\bjtitle{IEEE transactions on Information Theory}
\bvolume{55}(\bissue{5}),
\bfpage{2230}--\blpage{2249}
(\byear{2009})
\end{barticle}
\endbibitem

%%% 38
\bibitem[\protect\citeauthoryear{G{\"u}nl{\"u}k and
  Linderoth}{2012}]{gunluk2012perspective}
\begin{bchapter}
\bauthor{\bsnm{G{\"u}nl{\"u}k}, \binits{O.}},
\bauthor{\bsnm{Linderoth}, \binits{J.}}:
\bctitle{Perspective reformulation and applications}.
In: \bbtitle{Mixed Integer Nonlinear Programming},
pp. \bfpage{61}--\blpage{89}.
\bpublisher{Springer},
\blocation{USA}
(\byear{2012})
\end{bchapter}
\endbibitem

%%% 39
\bibitem[\protect\citeauthoryear{Lasserre}{2001a}]{lasserre2001explicit}
\begin{bchapter}
\bauthor{\bsnm{Lasserre}, \binits{J.B.}}:
\bctitle{An explicit exact sdp relaxation for nonlinear 0-1 programs}.
In: \bbtitle{Integer Programming and Combinatorial Optimization: 8th
  International IPCO Conference Utrecht, The Netherlands, June 13--15, 2001
  Proceedings 8},
pp. \bfpage{293}--\blpage{303}
(\byear{2001}).
\bcomment{Springer}
\end{bchapter}
\endbibitem

%%% 40
\bibitem[\protect\citeauthoryear{Lasserre}{2001b}]{lasserre2001global}
\begin{barticle}
\bauthor{\bsnm{Lasserre}, \binits{J.B.}}:
\batitle{Global optimization with polynomials and the problem of moments}.
\bjtitle{SIAM Journal on optimization}
\bvolume{11}(\bissue{3}),
\bfpage{796}--\blpage{817}
(\byear{2001})
\end{barticle}
\endbibitem

%%% 41
\bibitem[\protect\citeauthoryear{Lasserre}{2009}]{lasserre2009moments}
\begin{bbook}
\bauthor{\bsnm{Lasserre}, \binits{J.B.}}:
\bbtitle{Moments, Positive Polynomials and Their Applications}
vol. \bseriesno{1}.
\bpublisher{World Scientific},
\blocation{France}
(\byear{2009})
\end{bbook}
\endbibitem

%%% 42
\bibitem[\protect\citeauthoryear{Boyd et~al.}{2004}]{boyd2004convex}
\begin{bbook}
\bauthor{\bsnm{Boyd}, \binits{S.}},
\bauthor{\bsnm{Boyd}, \binits{S.P.}},
\bauthor{\bsnm{Vandenberghe}, \binits{L.}}:
\bbtitle{Convex Optimization}.
\bpublisher{Cambridge university press},
\blocation{USA}
(\byear{2004})
\end{bbook}
\endbibitem

%%% 43
\bibitem[\protect\citeauthoryear{Land and Doig}{2010}]{Land2010}
\begin{bbook}
\bauthor{\bsnm{Land}, \binits{A.H.}},
\bauthor{\bsnm{Doig}, \binits{A.G.}}:
\bbtitle{An Automatic Method for Solving Discrete Programming Problems},
pp. \bfpage{105}--\blpage{132}.
\bpublisher{Springer},
\blocation{Berlin, Heidelberg}
(\byear{2010})
\end{bbook}
\endbibitem

%%% 44
\bibitem[\protect\citeauthoryear{Little}{1966}]{little1966branch}
\begin{botherref}
\oauthor{\bsnm{Little}, \binits{J.D.}}:
Branch and bound methods for combinatorial problems.
(1966)
\end{botherref}
\endbibitem

%%% 45
\bibitem[\protect\citeauthoryear{Petersen et~al.}{2008}]{petersen2008matrix}
\begin{barticle}
\bauthor{\bsnm{Petersen}, \binits{K.B.}},
\bauthor{\bsnm{Pedersen}, \binits{M.S.}}, \betal:
\batitle{The matrix cookbook}.
\bjtitle{Technical University of Denmark}
\bvolume{7}(\bissue{15}),
\bfpage{510}
(\byear{2008})
\end{barticle}
\endbibitem

%%% 46
\bibitem[\protect\citeauthoryear{Bertsimas et~al.}{2023}]{bertsimas2023SLR}
\begin{barticle}
\bauthor{\bsnm{Bertsimas}, \binits{D.}},
\bauthor{\bsnm{Cory-Wright}, \binits{R.}},
\bauthor{\bsnm{Johnson}, \binits{N.A.}}:
\batitle{Sparse plus low rank matrix decomposition: A discrete optimization
  approach}.
\bjtitle{The Journal of Machine Learning Research}
\bvolume{24}(\bissue{1}),
\bfpage{12478}--\blpage{12528}
(\byear{2023})
\end{barticle}
\endbibitem

%%% 47
\bibitem[\protect\citeauthoryear{Morrison et~al.}{2016}]{MORRISON201679}
\begin{barticle}
\bauthor{\bsnm{Morrison}, \binits{D.R.}},
\bauthor{\bsnm{Jacobson}, \binits{S.H.}},
\bauthor{\bsnm{Sauppe}, \binits{J.J.}},
\bauthor{\bsnm{Sewell}, \binits{E.C.}}:
\batitle{Branch-and-bound algorithms: A survey of recent advances in searching,
  branching, and pruning}.
\bjtitle{Discrete Optimization}
\bvolume{19},
\bfpage{79}--\blpage{102}
(\byear{2016})
\end{barticle}
\endbibitem

%%% 48
\bibitem[\protect\citeauthoryear{Bertsimas and
  Digalakis~Jr}{2022}]{bertsimas2022backbone}
\begin{barticle}
\bauthor{\bsnm{Bertsimas}, \binits{D.}},
\bauthor{\bsnm{Digalakis~Jr}, \binits{V.}}:
\batitle{The backbone method for ultra-high dimensional sparse machine
  learning}.
\bjtitle{Machine Learning}
\bvolume{111}(\bissue{6}),
\bfpage{2161}--\blpage{2212}
(\byear{2022})
\end{barticle}
\endbibitem

%%% 49
\bibitem[\protect\citeauthoryear{Goodfellow et~al.}{2016}]{GoodBengCour16}
\begin{bbook}
\bauthor{\bsnm{Goodfellow}, \binits{I.J.}},
\bauthor{\bsnm{Bengio}, \binits{Y.}},
\bauthor{\bsnm{Courville}, \binits{A.}}:
\bbtitle{Deep Learning}.
\bpublisher{MIT Press},
\blocation{Cambridge, MA, USA}
(\byear{2016})
\end{bbook}
\endbibitem

%%% 50
\bibitem[\protect\citeauthoryear{Reuther et~al.}{2018}]{reuther2018interactive}
\begin{bchapter}
\bauthor{\bsnm{Reuther}, \binits{A.}},
\bauthor{\bsnm{Kepner}, \binits{J.}},
\bauthor{\bsnm{Byun}, \binits{C.}},
\bauthor{\bsnm{Samsi}, \binits{S.}},
\bauthor{\bsnm{Arcand}, \binits{W.}},
\bauthor{\bsnm{Bestor}, \binits{D.}},
\bauthor{\bsnm{Bergeron}, \binits{B.}},
\bauthor{\bsnm{Gadepally}, \binits{V.}},
\bauthor{\bsnm{Houle}, \binits{M.}},
\bauthor{\bsnm{Hubbell}, \binits{M.}},
\bauthor{\bsnm{Jones}, \binits{M.}},
\bauthor{\bsnm{Klein}, \binits{A.}},
\bauthor{\bsnm{Milechin}, \binits{L.}},
\bauthor{\bsnm{Mullen}, \binits{J.}},
\bauthor{\bsnm{Prout}, \binits{A.}},
\bauthor{\bsnm{Rosa}, \binits{A.}},
\bauthor{\bsnm{Yee}, \binits{C.}},
\bauthor{\bsnm{Michaleas}, \binits{P.}}:
\bctitle{Interactive supercomputing on 40,000 cores for machine learning and
  data analysis}.
In: \bbtitle{2018 IEEE High Performance Extreme Computing Conference (HPEC)},
pp. \bfpage{1}--\blpage{6}
(\byear{2018}).
\bcomment{IEEE}
\end{bchapter}
\endbibitem

%%% 51
\bibitem[\protect\citeauthoryear{Aharon et~al.}{2006}]{aharon2006k}
\begin{barticle}
\bauthor{\bsnm{Aharon}, \binits{M.}},
\bauthor{\bsnm{Elad}, \binits{M.}},
\bauthor{\bsnm{Bruckstein}, \binits{A.}}:
\batitle{K-svd: An algorithm for designing overcomplete dictionaries for sparse
  representation}.
\bjtitle{IEEE Transactions on signal processing}
\bvolume{54}(\bissue{11}),
\bfpage{4311}--\blpage{4322}
(\byear{2006})
\end{barticle}
\endbibitem

%%% 52
\bibitem[\protect\citeauthoryear{Kapoor et~al.}{2012}]{kapoor2012multilabel}
\begin{botherref}
\oauthor{\bsnm{Kapoor}, \binits{A.}},
\oauthor{\bsnm{Viswanathan}, \binits{R.}},
\oauthor{\bsnm{Jain}, \binits{P.}}:
Multilabel classification using bayesian compressed sensing.
Advances in neural information processing systems
\textbf{25}
(2012)
\end{botherref}
\endbibitem

%%% 53
\bibitem[\protect\citeauthoryear{Som}{2016}]{som2016learning}
\begin{bchapter}
\bauthor{\bsnm{Som}, \binits{S.}}:
\bctitle{Learning label structure for compressed sensing based multilabel
  classification}.
In: \bbtitle{2016 SAI Computing Conference (SAI)},
pp. \bfpage{54}--\blpage{60}
(\byear{2016}).
\bcomment{IEEE}
\end{bchapter}
\endbibitem

%%% 54
\bibitem[\protect\citeauthoryear{Kai et~al.}{2017}]{kai2017compressed}
\begin{bchapter}
\bauthor{\bsnm{Kai}, \binits{W.}},
\bauthor{\bsnm{Jingzhi}, \binits{L.}},
\bauthor{\bsnm{Yanjun}, \binits{C.}}:
\bctitle{Compressed sensing based multi-label classification without label
  sparsity level prior}.
In: \bbtitle{Proceedings of the 2017 International Conference on Deep Learning
  Technologies},
pp. \bfpage{66}--\blpage{69}
(\byear{2017})
\end{bchapter}
\endbibitem

%%% 55
\bibitem[\protect\citeauthoryear{Tsoumakas
  et~al.}{2008}]{tsoumakas2008effective}
\begin{bchapter}
\bauthor{\bsnm{Tsoumakas}, \binits{G.}},
\bauthor{\bsnm{Katakis}, \binits{I.}},
\bauthor{\bsnm{Vlahavas}, \binits{I.}}:
\bctitle{Effective and efficient multilabel classification in domains with
  large number of labels}.
In: \bbtitle{Proc. ECML/PKDD 2008 Workshop on Mining Multidimensional Data
  (MMD’08)},
vol. \bseriesno{21},
pp. \bfpage{53}--\blpage{59}
(\byear{2008})
\end{bchapter}
\endbibitem

\end{thebibliography}
%% if required, the content of .bbl file can be included here once bbl is generated
%%\input sn-article.bbl

\end{document}